%% file: main.tex
\documentclass[11pt,letterpaper,onecolumn]{article}

\usepackage[
    top=1in,
    bottom=1in,
    left=1in,
    right=1in
]{geometry}

\usepackage[T1]{fontenc}
\usepackage{lmodern}
\usepackage{microtype}
\usepackage[hyphens]{url}
\usepackage[hidelinks]{hyperref}
\usepackage[numbers,sort&compress]{natbib}

\usepackage{amsmath}
\usepackage{amssymb}
\usepackage{amsthm}
\usepackage{bm}
\usepackage{algorithm}
\usepackage{algpseudocode}

\usepackage{graphicx}
\usepackage{caption}
\usepackage{subcaption}
\usepackage{booktabs}
\usepackage{colortbl}
\usepackage{multirow}
\usepackage{enumitem}
\usepackage{adjustbox}
\usepackage{placeins}

\usepackage{tikz}
\usetikzlibrary{spy,calc}

\theoremstyle{plain}
\newtheorem{theorem}{Theorem}[section]
\newtheorem{lemma}[theorem]{Lemma}
\newtheorem{corollary}[theorem]{Corollary}
\newtheorem{proposition}[theorem]{Proposition}

\theoremstyle{definition}

\newtheorem{assumption}[theorem]{Assumption}

\theoremstyle{remark}
\newtheorem{remark}[theorem]{Remark}

\newcommand{\figw}{0.31\textwidth}
\newcommand{\colgap}{0pt}
\newcommand{\rowgap}{0.4em}

\newcommand{\methodfig}[2]{%
    \begin{subfigure}[t]{\figw}
        \centering
        {\scriptsize
        Watermarked/$_{\mathrm{#1}}$
        \hspace{0.8em}
        Attacked/$_{\mathrm{#1}}$\par}
        \vspace{1pt}
        \begin{tikzpicture}[baseline]
            \node[anchor=south west,inner sep=0] (img) at (0,0)
            {\includegraphics[
                width=\linewidth,
                trim=6 6 6 6,
                clip
            ]{#2}};

            \draw[dashed,white,line width=0.6pt]
                ($(img.south west)!0.5!(img.south east)$) --
                ($(img.north west)!0.5!(img.north east)$);
        \end{tikzpicture}
    \end{subfigure}%
}

\newcommand{\zoomfig}[5]{%
    \begin{tikzpicture}[
        spy using outlines={lens={scale=3},
            size=#5,
            connect spies,
            every spy on node/.append style={
                draw=red,
                dashed,
                line width=0.8pt
            },
            every spy in node/.append style={
                draw=red,
                dashed,
                line width=0.8pt
            }
        }
    ]
        \node[inner sep=0] (img)
        {\includegraphics[width=#4]{#1}};

        \spy on
        ($(img.south west)!#2!(img.south east)!#3!(img.north west)$)
        in node[anchor=north west]
        at ($(img.north east)+(-0.05cm,-0.05cm)$);
    \end{tikzpicture}%
}

\title{\textbf{DRIFT: Removing Diffusion Watermarks by Deflecting the Generative Trajectory}}

\author{
    \begin{tabular}{ccc}
        Rui Bao$^{1*}$ & Zheng Gao$^{1*}$ & Xiaoyu Li$^{1}$ \\
        Xiaoyan Feng$^{2}$ & Yang Song$^{1}$ & Jiaojiao Jiang$^{1}$
    \end{tabular}\\[0.75em]
    $^{1}$University of New South Wales\\
    $^{2}$Griffith University
}

\date{}

\begin{document}

\maketitle

\input{abstract}
\input{intro}
\input{related_work}
\input{method}
\input{theory}
\input{exp_setup}
\input{results}

\FloatBarrier

\input{conclusion}

\bibliographystyle{plainnat}
\bibliography{ref}

% ---- Technical appendix ----
\clearpage
\appendix

\input{appendix}

\end{document}

%% file: abstract.tex
\begin{abstract}
Diffusion watermarking embeds verifiable signals into the generative process and commonly verifies them by recovering trajectory-dependent evidence, making the marks robust to conventional pixel-space distortions. Existing removal attacks either regenerate along deterministic trajectories, which often preserve the watermark-bearing latent structure, or optimize every image separately. We identify the reliance on a recoverable generative trajectory as a common attack surface among the schemes we study. Based on this observation, we propose \textbf{DRIFT}, a black-box attack that combines partial forward diffusion with stochastic reverse resampling. Forward re-noising limits source information available to a fixed-depth recovery pipeline, while stochastic reversal supplies alternative noise-driven paths whose removal benefit we isolate through matched sampler comparisons. \emph{Adaptive DRIFT} searches a selected ladder for each image's first verifier-rejected rung and refines fidelity while retaining only updates rejected by the same verifier. At fixed depth, we derive information-theoretic and Wasserstein source-dependence bounds; under realized-ladder monotonicity, the first rejected rung is least distorted among rejected rungs on that ladder, and verifier-gated refinement preserves rejection. Across nine watermarks spanning three paradigms, DRIFT achieves $98$--$100\%$ attack success and the best image quality among the compared attacks, without secret keys, verifier internals, or per-image gradient optimization.
\end{abstract}

%% file: intro.tex
\begingroup
\setlength{\parskip}{0pt}

\section{Introduction}

Diffusion models~\citep{ho2020denoising,song2021sde,rombach2022high} have made photorealistic image synthesis widely accessible, intensifying the need to trace AI-generated content. Diffusion watermarking provides an active provenance signal by embedding a mark into generation itself, allowing attribution to persist when metadata is removed or overwritten.

Existing methods inject watermarks at different points: Tree-Ring~\citep{Tree-Ring}, RingID~\citep{ci2024ringid}, and PRC~\citep{gunn2024undetectable} structure the initial noise; Gaussian Shading~\citep{yang2024gaussian}, SFW~\citep{lee2025semantic}, and SEAL~\citep{arabi2025seal} constrain latent representations or bind the mark to content; ROBIN~\citep{huang2024robin} learns hidden prompts. Despite these differences, the evaluated public implementations recover evidence associated with their generation or inversion trajectories, typically through deterministic inversion. This dependence is associated with robustness to JPEG compression, blur, crop, and rotation, but may also expose a common attack surface.

Prior attacks do not fully exploit it. Deterministic regeneration can remove pixel-space marks~\citep{zhao2024invisible}, yet often preserves the latent structure used by diffusion watermarks. Latent-removal and black-box attacks~\citep{jain2025forging,muller2025black} can evade stronger schemes, but require iterative optimization for every image. These approaches target individual watermark signals; they leave the shared verification mechanism largely untouched.

Across these implementations, successful verification is empirically associated with recoverable trajectory-linked evidence. We call this reliance \emph{trajectory consistency}. Rather than perturbing a detector-specific signal, an attacker can target the path itself. Based on this principle, we propose \textbf{DRIFT}, which first applies partial forward diffusion to limit source information entering a fixed-depth recovery pipeline and then performs stochastic reverse resampling. Fresh reverse noise opens alternative reconstruction paths, while the pretrained score function promotes natural-image outputs. Whether this variation survives the downstream map and crosses a verifier boundary is scheme- and sampler-dependent; we establish its removal benefit empirically through matched sampler comparisons.

A fixed re-noising strength wastes fidelity because watermark robustness varies across schemes and images. \emph{Adaptive DRIFT} therefore predicts the watermark family from a shared inverted latent, uses binary verifier feedback to stop at the first verifier-rejected rung on the selected ladder, and applies a DPPO-trained refinement controller~\citep{ren2024dppo} with verifier-gated back-off to recover quality.

Our contributions are: (i)~identifying trajectory consistency across nine schemes; (ii)~introducing DRIFT with fixed-depth information and Wasserstein bounds, conditional ladder-relative selection, and verifier-rejection invariance; and (iii)~isolating stochasticity with matched samplers and obtaining $98$--$100\%$ success with the best compared fidelity.

\par
\endgroup

%% file: related_work.tex
\section{Related Work}
\label{sec:related_work}

\paragraph{Diffusion sampling and inversion.}
Deterministic samplers such as DDIM~\citep{song2021ddim} and higher-order
variants~\citep{lu2022dpmsolver,zhao2023unipc,liu2022pndm} define a coupled
noise-to-image path, the property underpinning DDIM
inversion~\citep{mokady2023nulltext,wallace2023edict} and most diffusion
watermark verifiers, though prediction errors accumulate along
it~\citep{lin2024schedule,blasingame2025reversible}.  Reverse-time SDE
samplers~\citep{song2021sde,xue2023sasolver,gonzalez2023seeds} instead inject
fresh noise during generation, so repeated runs from the same state can follow
different trajectories~\citep{nie2024sdebeatsode}; we exploit this contrast.

\paragraph{Diffusion watermarking.}
Watermarks differ by injection point.  \emph{Noise-space} methods modify the
initial latent: Tree-Ring~\citep{Tree-Ring} writes a Fourier ring pattern,
RingID~\citep{ci2024ringid} extends it to multi-channel patterns,
PRC~\citep{gunn2024undetectable} samples pseudorandom codes, and
WIND~\citep{arabi2024hidden} organizes large key pools.  \emph{Latent- and
frequency-domain} methods act on intermediate representations: Gaussian
Shading~\citep{yang2024gaussian} applies key-controlled spectral offsets,
GaussMarker~\citep{li2025gaussmarker} encodes high-frequency components, and
SFW~\citep{lee2025semantic} and SEAL~\citep{arabi2025seal} bind marks to
semantic content; optimization-based ROBIN~\citep{huang2024robin} learns hidden
prompts.  Although these embeddings differ, the evaluated implementations
recover structure tied to the generation or inversion trajectory.

\paragraph{Removal attacks.}
Regeneration attacks reconstruct the image with a diffusion model:
\citet{zhao2024invisible} use deterministic PNDM with guarantees for
pixel-level marks, CtrlRegen~\citep{liu2024image} adds trained control modules,
\citet{saberi2023robustness} adapt DiffPure through a reverse-time SDE, and
DDWRM~\citep{mareen2024diffusion} denoises in pixel space; others optimize a
latent perturbation per input~\citep{jain2025forging,muller2025black}.  DRIFT
instead isolates stochastic trajectory deflection under matched samplers,
requires no per-image gradients, and draws on inversion-based fingerprints and
diffusion-policy optimization~\citep{teng2025fingerprinting,black2023ddpo,
ren2024dppo,hu2025better} only to seed a verifier-guided strength search and
to retain refinement updates when rejection is preserved.

%% file: method.tex
\section{The DRIFT Attack}
\label{sec:method}

\begin{figure*}[t]
    \centering
    \includegraphics[width=0.88\textwidth]{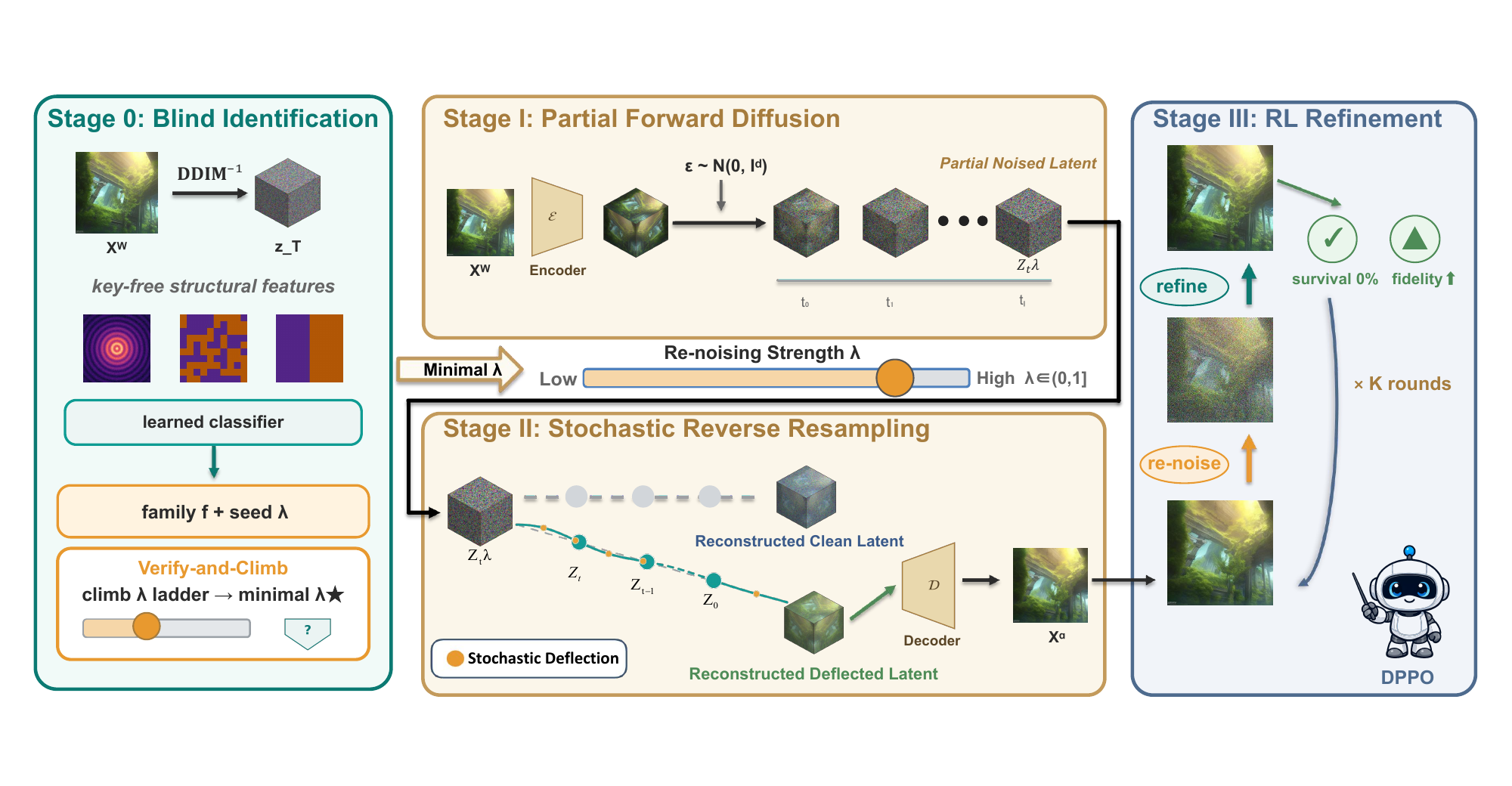}
    \caption{\textbf{Overview of Adaptive DRIFT.} A shared inversion predicts the watermark family and seeds an ascending strength ladder. DRIFT then combines partial forward diffusion with stochastic reverse resampling; an optional controller recovers fidelity while retaining only candidates rejected by the same verifier.}
    \label{fig:framework}
\end{figure*}

\noindent\textbf{Problem setup.}
Let $\mathcal{W}_f$ be a diffusion-watermarking scheme from family $f$.
Given prompt $c$ and secret key $\kappa$, it produces
$\mathbf{x}^w=\mathcal{W}_f(c,\kappa)$.  Its verifier computes
$s_{f,\kappa}(\mathbf{x})$ and returns
$\mathcal{V}_{f,\kappa}(\mathbf{x})
=\mathbf{1}[s_{f,\kappa}(\mathbf{x})\ge\tau]$, covering both zero-bit and
message-based schemes.  For a target
$\delta_{\mathrm{fail}}\in[0,1]$, a randomized attack
$\mathcal{A}_{\boldsymbol{\omega}}$ returns
$\mathbf x_{\boldsymbol{\omega}}^a
=\mathcal{A}_{\boldsymbol{\omega}}(\mathbf x^w)$ and seeks low perceptual
distortion while meeting a target failure probability:
\begin{equation}
    \begin{aligned}
    \min_{\mathcal A}\quad&
    \mathbb E_{\boldsymbol{\omega}}
    [d_{\mathrm{sem}}(\mathbf x_{\boldsymbol{\omega}}^a,\mathbf x^w)]
    \\
    \mathrm{s.t.}\quad&
    \Pr_{\boldsymbol{\omega}}\!\left[
    \mathcal V_{f,\kappa}(\mathbf x_{\boldsymbol{\omega}}^a)=0\right]
    \ge 1-\delta_{\mathrm{fail}} .
    \end{aligned}
    \label{eq:attack_objective}
\end{equation}
The attacker observes only $\mathbf{x}^w$ and uses a public latent diffusion
model $(\mathcal E,\mathcal D,\boldsymbol{\epsilon}_\theta)$, which need not
match the watermarked generator.  The key, message, prompt, family, and
verifier internals are unknown.  All denoising and inversion operations use
the same globally fixed deterministic attacker-side conditioning
$c_{\mathrm{att}}$ (e.g., the empty prompt), independent of the source image
and all attack randomness; we suppress it in the notation below.  Base DRIFT
is training-free and query-free; the
adaptive stages receive only binary verifier responses and never perform
per-image gradient optimization.

\noindent\textbf{Attack overview.}
The evaluated watermark families use different signals, but their verifiers
all recover evidence coupled to the marked generation or inversion path.
DRIFT targets this shared interface.  Partial forward diffusion attenuates
the explicit source-latent contribution, while stochastic reverse sampling
reconstructs through fresh noise-driven transitions.  Adaptive DRIFT then
uses a blind family prediction to seed a finite strength ladder, returns its
first verifier-rejected candidate, and optionally refines that candidate
under a hard verifier gate.  Figure~\ref{fig:framework} summarizes the three
stages.

\subsection{Stage I: Partial Forward Diffusion}

For a diffusion horizon $T\in\mathbb N$, we encode the watermarked image as
$\mathbf{z}_0^w=\mathcal E(\mathbf{x}^w)$ and map strength
$\lambda\in(0,1]$ to $t_\lambda=\lceil\lambda T\rceil$.  The closed-form
DDPM forward process~\citep{ho2020denoising} gives
\begin{equation}
    \mathbf{z}_{t_\lambda}^a
    =\sqrt{\bar\alpha_{t_\lambda}}\,\mathbf{z}_0^w
    +\sqrt{1-\bar\alpha_{t_\lambda}}\,\boldsymbol{\epsilon},
    \qquad
    \boldsymbol{\epsilon}\sim\mathcal N(\mathbf 0,\mathbf I_d),
    \label{eq:forward}
\end{equation}
where $\bar\alpha_t=\prod_{i=1}^t(1-\beta_i)$ and
$\bar\alpha_0=1$.  Increasing $\lambda$ decreases the explicit coefficient
$\sqrt{\bar\alpha_{t_\lambda}}$ of the source latent, but also discards more
source detail.  This is the removal--fidelity trade-off addressed by the
adaptive ladder.  Section~\ref{sec:theory} formalizes the corresponding
forward information bottleneck without assuming that the reverse network is
contractive.

\subsection{Stage II: Stochastic Reverse Resampling}

Starting from $\mathbf{z}_{t_\lambda}^a$, the denoiser first predicts
\begin{equation}
    \widehat{\mathbf z}_0^a
    =
    \frac{\mathbf z_t^a-\sqrt{1-\bar\alpha_t}\,
    \boldsymbol{\epsilon}_\theta(\mathbf z_t^a,t)}
    {\sqrt{\bar\alpha_t}} .
    \label{eq:predict_clean}
\end{equation}
DRIFT then uses the generalized DDIM/DDPM reverse update
\begin{align}
    \mathbf z_{t-1}^a
    &=
    \sqrt{\bar\alpha_{t-1}}\,\widehat{\mathbf z}_0^a
    +\sqrt{1-\bar\alpha_{t-1}-\sigma_t^2}\,
      \boldsymbol{\epsilon}_\theta(\mathbf z_t^a,t)
      \notag\\
    &\quad+
    \underbrace{\sigma_t\boldsymbol{\xi}_t}
    _{\text{stochastic deflection}},
    \qquad
    \boldsymbol{\xi}_t\sim\mathcal N(\mathbf 0,\mathbf I_d).
    \label{eq:reverse}
\end{align}
Here
$0\le\sigma_t\le\sqrt{1-\bar\alpha_{t-1}}$, so $\sigma_1=0$.
Each $\boldsymbol{\xi}_t$ is fresh relative to the current reverse history.
The choice $\sigma_t=0$ recovers deterministic DDIM, whereas the posterior
variance gives the standard DDPM ancestral special case; other admissible
scales define generalized stochastic DDIM samplers.  A positive $\sigma_t$
makes that transition non-degenerate.  Terminal and image-level diversity
additionally require that subsequent reverse steps and the decoder do not
collapse the injected variation.  Thus stochasticity supplies path
deflection, while its removal benefit is established by the controlled
sampler comparison in Section~\ref{ssec:det_vs_stoch}, not assumed as a
universal theorem.

The final latent is decoded as
$\mathbf{x}^a=\mathcal D(\mathbf z_0^a)$.  Base DRIFT uses one VAE encode,
$t_\lambda$ denoiser evaluations, and one VAE decode, matching the order of a
standard image-to-image diffusion pass.  Algorithm~\ref{alg:drift} appears in
Appendix~\ref{app:algorithms}.

\subsection{Adaptive Strength Selection}
\label{ssec:adaptive}

A global strength either fails on robust marks or unnecessarily degrades easy
instances.  Adaptive DRIFT therefore computes one shared inversion
\[
    \boldsymbol{\epsilon}^w
    \equiv\mathbf z_T^{\mathrm{inv}}
    =\mathcal I_{\mathrm{DDIM}}(\mathbf z_0^w)
\]
and extracts radial Fourier power, sign-field tiling, cross-channel
correlation, and short-period block statistics.  A lightweight classifier
$\widehat f=\mathcal H(\mathbf z_T^{\mathrm{inv}})$ predicts the family
without observing the key or message, repurposing inversion fingerprints
used for source attribution~\citep{teng2025fingerprinting}.  The identifier
is an empirical query-saving device; the search guarantee below applies to
whichever ladder it selects and does not require $\widehat f=f$.

The prediction chooses an offline seed and an ascending ladder.  We fix
$\delta>0$, $n_{\downarrow}\in\mathbb N_0$, and
$0<\lambda_{\mathrm{seed}}(g)\le\lambda_{\max}\le1$, and define
$J_g=\lfloor(\lambda_{\max}-\lambda_{\mathrm{seed}}(g))/\delta\rfloor$.
Then
\begin{equation}
    \begin{aligned}
    \Lambda_{\widehat f}
    =\operatorname{sort}\!\bigl\{
    &\lambda_{\mathrm{seed}}(\widehat f)+j\delta:
    j=-n_{\downarrow},\ldots,J_{\widehat f},\\[-2pt]
    &0<\lambda_{\mathrm{seed}}(\widehat f)+j\delta
    \le\lambda_{\max}\bigr\},
    \end{aligned}
    \label{eq:lambda_ladder}
\end{equation}
The parameter conditions guarantee that the finite ladder is nonempty.
DRIFT queries it from its lowest retained rung and stops at the first verifier
rejection.  If no rung is rejected, it returns the strongest evaluated
candidate with a failure flag; this outcome counts as an unsuccessful attack
in Eq.~\eqref{eq:attack_objective}.  If verdicts form a rejected suffix and
distortion is non-decreasing along the realized ladder, the first rejected
candidate is the least distorted rejected rung on that ladder
(Proposition~\ref{prop:minimal}).  It is not claimed to be the optimum
between grid points or over the continuous objective.  The complete
predict--seed--climb procedure is Algorithm~\ref{alg:adaptive}.

The distributional bounds in Section~\ref{sec:theory} concern a strength fixed
independently of the source and attack randomness.  They therefore apply to
base DRIFT or a prespecified ladder rung, not automatically to the final
candidate selected using the identifier and verifier history.

\subsection{Feasibility-Preserving Fidelity Refinement}
\label{ssec:rl}

The first rejected rung may still lose fine detail.  Starting from that image
$\mathbf x_0$, Stage~III performs short guided-SDEdit
rounds~\citep{meng2022sdedit,chung2023dps}.  At round $k$, the retained image
is lightly re-noised by $\eta_k$ and denoised with pixel-MSE and LPIPS
guidance~\citep{zhang2018unreasonable} toward $\mathbf x^w$.  The same
globally fixed scalarization is used for retention throughout:
\begin{equation}
    \begin{aligned}
    S(\mathbf x)
    &=
    w_{\mathrm s}\operatorname{SSIM}(\mathbf x,\mathbf x^w)\\[-2pt]
    &\quad
    -w_{\ell}\operatorname{LPIPS}(\mathbf x,\mathbf x^w),
    \qquad w_{\mathrm s},w_{\ell}>0.
    \end{aligned}
    \label{eq:fidelity_score}
\end{equation}
A candidate replaces the retained best only if the same deterministic
verifier rejects it and $S$ strictly improves; otherwise the algorithm backs
off to the previous best.  This update rule, rather than the learned policy,
preserves verifier-relative feasibility.

We learn the refinement schedule as a Markov decision process.  The state
contains DINOv2 embeddings~\citep{oquab2023dinov2} of the source and retained
images, the action/fidelity history, and the round index, but not the family
label.  The action
\[
    \mathbf a_k
    =(\eta_k,g_{\mathrm{pix},k},g_{\mathrm{lpips},k},\mathrm{stop})
\]
controls re-noising, the two guidance weights, and termination.  With metric
increments defined as current minus previous, the step reward is
\[
    r_k^{\mathrm{step}}
    =
    \Delta S_k-w_c\,\mathrm{steps}_k,
\]
followed by terminal rejection and fidelity bonuses.  Hence increasing SSIM
or decreasing LPIPS raises reward.

The controller is a diffusion policy trained with
DPPO~\citep{ren2024dppo}: a short conditional denoising chain proposes each
action, and PPO optimizes the summed Gaussian log-probability with clipping
and generalized advantage estimation~\citep{schulman2017ppo,schulman2016gae}.
Reward shaping and back-off follow B2-DiffuRL~\citep{hu2025better}.  For
any fixed rollout, returning the retained best keeps it rejected by the same
verifier and makes $S$ non-decreasing along nested prefixes
(Propositions~\ref{prop:feasible}--\ref{prop:monotone_fid}); these invariants
do not assert monotone PPO return or improvement of every component metric.

%% file: theory.tex
\section{Analysis of DRIFT}
\label{sec:theory}

We separate three questions that are easy to conflate: what forward
re-noising removes about the source, what stochastic reversal changes, and
what the adaptive loops guarantee.  The first admits a distributional
theorem; the second is a structural distinction whose removal benefit is
empirical; the third follows from finite search and verifier-gated back-off.
Appendix~\ref{app:theory} gives complete proofs, including the fully
quantified fixed-depth result in Theorem~\ref{prop:noise_distance_formal} and
the conditional terminal diagnostic in Corollary~\ref{cor:shift_2d}.

\subsection{Forward Re-noising as an Information Bottleneck}
\label{ssec:bound}

The attack in Section~\ref{sec:method} acts on a fixed image.  For the
distributional statements below, however, we draw a watermarked image from
the source population induced by the generation protocol (including prompts,
keys, messages, and generator randomness), independently of the attack
randomness.  Mutual information, Wasserstein distance, and unconditioned
expectations are taken over this population and the fresh attack noises; the
fixed-image sensitivity statement below conditions on the source latent.

Write $\mathbf X=\mathbf z_0^w$ for the source latent,
$\mathbf W=\boldsymbol{\epsilon}^w$ for its recovered reference noise, and
\[
    \mathbf Z_n
    =\sqrt{\bar\alpha_n}\mathbf X
    +\sqrt{1-\bar\alpha_n}\boldsymbol{\epsilon}
\]
for the Stage-I state at depth $n$.  The complete Stage-II and recovered-noise
pipeline is a measurable randomized map
$\mathbf Y_n=H_n(\mathbf Z_n,\boldsymbol{\Xi}_{1:n})
=\widehat{\boldsymbol{\epsilon}}_n^a$.  To expose the source contribution,
we also run the same reverse-noise realization from
$\widetilde{\mathbf Z}_n
=\sqrt{1-\bar\alpha_n}\boldsymbol{\epsilon}$, yielding the source-free
baseline $\widetilde{\boldsymbol{\epsilon}}_n$.  Let
$\Sigma_z=\operatorname{Cov}(\mathbf z_0^w)$.

\begin{theorem}[Fixed-depth source-dependence bounds]
\label{prop:shift_informal}
Fix a deterministic $\lambda\in(0,1]$ before sampling the source and attack
randomness, and set $n=t_\lambda$.  Under the
source-independent-conditioning, measurability,
independence, and moment conditions in
Assumption~\ref{asm:shift_regular}, the following hold.
\begin{enumerate}[label=(\roman*),nosep,leftmargin=*]
    \item The forward channel imposes
    \begin{align}
        I(\boldsymbol{\epsilon}^w;
          \widehat{\boldsymbol{\epsilon}}_n^a)
        &\le
        \frac12\log\det\!\left(
        \mathbf I_d+
        \frac{\bar\alpha_n}{1-\bar\alpha_n}\Sigma_z
        \right)
        \notag\\
        &\le
        \frac d2\log\!\left(
        1+
        \frac{\bar\alpha_n\operatorname{tr}(\Sigma_z)}
        {d(1-\bar\alpha_n)}
        \right)
        =:\mathcal B_n ,
        \label{eq:information_bound}
    \end{align}
    and the total-variation distance between the joint law and the product of
    its marginals is at most $\sqrt{\mathcal B_n/2}$.
    \item If the reverse pipeline is $L_QC_n$-Lipschitz under synchronous
    reverse-noise coupling, with
    $\Gamma_n=\sqrt{\bar\alpha_n}C_n$, then for each fixed source latent
    $\mathbf x$ the coupled difference is pathwise at most
    $L_Q\Gamma_n\|\mathbf x\|_2$.  Averaging over the source population gives
    \begin{equation}
        \mathbb E\!\left[
        \|\widehat{\boldsymbol{\epsilon}}_n^a
        -\widetilde{\boldsymbol{\epsilon}}_n\|_2^2
        \right]
        \le
        L_Q^2\Gamma_n^2
        \mathbb E\|\mathbf z_0^w\|_2^2
        =:\Delta_n .
        \label{eq:source_sensitivity}
    \end{equation}
    With the stated second-moment conditions,
    \[
        W_2\!\left(
        \mathcal L(\widehat{\boldsymbol{\epsilon}}_n^a,
                   \boldsymbol{\epsilon}^w),
        \mathcal L(\widehat{\boldsymbol{\epsilon}}_n^a)
        \otimes\mathcal L(\boldsymbol{\epsilon}^w)
        \right)
        \le 2\sqrt{\Delta_n}.
    \]
    \item Separately, when $\lambda=1$ and $n=T$, if
    Assumption~\ref{asm:shift_regular} is instantiated at $T$ and the
    additional terminal premise in Assumption~\ref{asm:shift_prior} holds,
    \[
        \left|
        \mathbb E\|
        \widehat{\boldsymbol{\epsilon}}_T^a
        -\boldsymbol{\epsilon}^w\|_2^2-2d
        \right|
        \le
        \Delta_T+2\sqrt{2d\,\Delta_T}.
    \]
\end{enumerate}
\end{theorem}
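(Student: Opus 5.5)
For part (i), I will use a Markov chain argument followed by a Gaussian-channel capacity bound. Under Assumption~\ref{asm:shift_regular}, the recovered reference noise $\mathbf W=\boldsymbol{\epsilon}^w$ is a measurable function of the source latent $\mathbf X$. The conditioning $c_{\mathrm{att}}$ is fixed and independent of everything. The forward noise $\boldsymbol{\epsilon}$ and reverse noises $\boldsymbol{\Xi}_{1:n}$ are independent of $(\mathbf X,\mathbf W)$. Together these give the chain $\mathbf W\to\mathbf X\to\mathbf Z_n\to\mathbf Y_n$. Data processing then yields $I(\mathbf W;\mathbf Y_n)\le I(\mathbf X;\mathbf Z_n)$.

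Rescaling, $\mathbf Z_n/\sqrt{1-\bar\alpha_n}=\sqrt{\mathrm{snr}}\,\mathbf X+\boldsymbol{\epsilon}$ with $\mathrm{snr}=\bar\alpha_n/(1-\bar\alpha_n)$. This is an additive white Gaussian channel whose input has covariance $\Sigma_z$, and the moment condition guarantees $\Sigma_z$ is finite. Among inputs of a given covariance, the Gaussian maximizes $h(\text{output})$, so $I\le\frac12\log\det(\mathbf I_d+\mathrm{snr}\,\Sigma_z)$. The second inequality is AM--GM (equivalently, concavity of $\log$) applied to the eigenvalues $1+\mathrm{snr}\,\mu_i$. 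The total-variation claim is Pinsker's inequality, since $I=\mathrm{KL}(P_{\mathbf W\mathbf Y}\,\|\,P_{\mathbf W}\otimes P_{\mathbf Y})$.

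For part (ii), I couple the two runs synchronously: the same $\boldsymbol{\epsilon}$ and the same $\boldsymbol{\Xi}_{1:n}$. The inputs then differ by $\mathbf Z_n-\widetilde{\mathbf Z}_n=\sqrt{\bar\alpha_n}\mathbf x$. The assumed $L_QC_n$-Lipschitz property of $H_n(\cdot,\boldsymbol{\Xi})$ gives the pathwise bound $L_Q\Gamma_n\|\mathbf x\|_2$ directly. Squaring and taking expectations over the source population gives \eqref{eq:source_sensitivity}.

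For the $W_2$ bound, I use a triangle inequality through the law of $(\widetilde{\boldsymbol{\epsilon}}_n,\mathbf W)$. The key observation is that $\widetilde{\boldsymbol{\epsilon}}_n$ depends only on the attack noises, so it is independent of $\mathbf W$, and its joint law with $\mathbf W$ is already a product.
\begin{itemize}
\item Coupling $(\widehat{\boldsymbol{\epsilon}}_n^a,\mathbf W)$ with $(\widetilde{\boldsymbol{\epsilon}}_n,\mathbf W)$ via the identity on $\mathbf W$ gives distance at most $\sqrt{\Delta_n}$.
\item Next, $\mathcal L(\widetilde{\boldsymbol{\epsilon}}_n)\otimes\mathcal L(\mathbf W)$ and $\mathcal L(\widehat{\boldsymbol{\epsilon}}_n^a)\otimes\mathcal L(\mathbf W)$ are compared by coupling the first factors as before and the second factors by the identity. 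This gives at most $W_2(\mathcal L(\widetilde{\boldsymbol{\epsilon}}_n),\mathcal L(\widehat{\boldsymbol{\epsilon}}_n^a))\le\sqrt{\Delta_n}$.
\end{itemize}
Summing the two steps gives $2\sqrt{\Delta_n}$. The second-moment conditions ensure all the laws involved lie in $\mathcal P_2$.

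For part (iii), I take Assumption~\ref{asm:shift_prior} to supply the terminal premise: $\widetilde{\boldsymbol{\epsilon}}_T$ is independent of $\mathbf W$, and both are centered with $\mathbb E\|\widetilde{\boldsymbol{\epsilon}}_T\|^2=\mathbb E\|\mathbf W\|^2=d$. Then $\mathbb E\|\widetilde{\boldsymbol{\epsilon}}_T-\mathbf W\|^2=2d$, because the cross term vanishes by independence and centering. Writing $\widehat{\boldsymbol{\epsilon}}^a_T-\mathbf W=(\widetilde{\boldsymbol{\epsilon}}_T-\mathbf W)+\mathbf D$ with $\mathbb E\|\mathbf D\|^2\le\Delta_T$, I expand the square. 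Bounding the cross term by Cauchy--Schwarz as $2\sqrt{2d\,\Delta_T}$ gives the stated two-sided estimate.

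I expect the main obstacle to be part (i). There, $\mathbf W$ is the inversion of the source itself, and I must verify that the assumptions really make the chain $\mathbf W\to\mathbf X\to\mathbf Z_n$ Markov. In particular, the attacker-side inversion must use no randomness shared with the attack noises. I must also handle the max-entropy step carefully when $\Sigma_z$ is singular or $\mathbf X$ has no density, which I would treat by a limiting argument or via the $\log\det$ bound on $h(\mathbf Z_n)$ directly.
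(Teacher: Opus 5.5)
Your proposal is correct and follows the paper's proof essentially step for step. For (i) you use data processing along $\mathbf W\to\mathbf X\to\mathbf Z_n\to\mathbf Y_n$ followed by Gaussian max-entropy, Jensen on the eigenvalues, and Pinsker. For (ii) you use synchronous coupling and a triangle inequality through the product law $\mathcal L(\widetilde{\boldsymbol{\epsilon}}_n)\otimes\mathcal L(\mathbf W)$. For (iii) you use a Cauchy--Schwarz expansion around $\widetilde{\boldsymbol{\epsilon}}_T-\mathbf W$.

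Two minor points. In (iii), the independence of $\widetilde{\boldsymbol{\epsilon}}_T$ and $\mathbf W$ comes from (A2)--(A3), since the baseline is a function of attack noise alone, as you already noted in (ii). Assumption~\ref{asm:shift_prior} supplies only the centering and identity covariance. Your worry about singular $\Sigma_z$ is also unnecessary: $\bar\alpha_n<1$ gives $\mathbf Z_n$ a density whatever the law of $\mathbf X$, so the entropy bound applies directly.
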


The theorem applies to base DRIFT or a prespecified ladder rung.  It does not
automatically apply to the final Adaptive DRIFT output: its selected depth
depends on the identifier and verifier history and can itself carry source
information.  The ladder and refinement guarantees in
Section~\ref{ssec:why_minimal} are separate pathwise statements.

\noindent\textbf{Proof idea.}
Fresh attack randomness gives the Markov chains
$\boldsymbol{\epsilon}^w\!\to\!\mathbf z_0^w\!\to\!\mathbf Z_n
\!\to\!\widehat{\boldsymbol{\epsilon}}_n^a$; data processing and the
Gaussian-channel capacity bound yield part~(i).  For part~(ii), couple the
attacked and source-free initializations with identical forward and reverse
noises: their only initial difference is
$\sqrt{\bar\alpha_n}\mathbf z_0^w$, which the reverse pipeline amplifies by
at most $L_QC_n$.  A second, tensorized coupling converts this mean-square
estimate into the product-law Wasserstein bound.  Part~(iii) expands the
squared distance around the independent terminal reference and controls the
cross term by Cauchy--Schwarz.

\noindent\textbf{Why this is true.}
In one dimension, Stage~I is simply a Gaussian channel whose
signal-to-noise ratio is
$\bar\alpha_n/(1-\bar\alpha_n)$; no downstream randomized map can recover
more information about the source than entered that channel.  The coupling
view gives the complementary geometric statement: removing the source term
changes the initialization by exactly
$\sqrt{\bar\alpha_n}z_0^w$, after which only the sensitivity of the shared
reverse map matters.

\noindent\textbf{Takeaway.}
Forward re-noising places a monotone information envelope on any recovered
signal whose only access to the source is through the fixed-depth state
$\mathbf Z_n$, while the complementary $W_2$ bound quantifies proximity to a
source-free baseline when the realized reverse map is sufficiently
insensitive.

Only $\mathcal B_n$ and the forward-channel information
$I(\boldsymbol{\epsilon}^w;\mathbf Z_n)$ are guaranteed to decrease with
depth.  The actual post-processed information,
$\Gamma_n$, $\Delta_n$, and their $W_2$ envelope need not be monotone because
the reverse map changes with $n$.  Nor does any dependence metric reveal an
unknown verifier margin.  The paired latent distances and strength thresholds
in Section~\ref{ssec:latent_evidence} and
Figure~\ref{fig:lambda_sensitivity} of
Appendix~\ref{app:lambda_sensitivity} are distinct empirical diagnostics, not
estimates of $\mathcal B_n$ or of the joint/product-law $W_2$ above.  The
terminal value $2d$ is a conditional diagnostic, not a distributional claim
for every watermark family.

\subsection{What Stochastic Resampling Changes}
\label{ssec:why_stochastic}

\begin{remark}[Path randomness is local; removal advantage is empirical]
\label{rem:stochasticity}
Conditioned on $\mathbf z_{t_\lambda}^a$, deterministic DDIM returns a point
mass.  At a step with $\sigma_t>0$, Eq.~\eqref{eq:reverse} instead has a
non-Dirac next-state law.  This local fact does not by itself make the final
latent or decoded image non-Dirac: later reverse steps or the decoder may
collapse the injected variation.  Precisely, terminal diversity holds if
and only if two independent complete noise rollouts disagree with positive
probability after the full downstream map
(Lemma~\ref{lem:shift_path_diversity}).
\end{remark}

This distinction also explains why Theorem~\ref{prop:shift_informal} does not
prove a stochastic advantage.  Its sensitivity argument drives two
trajectories with the same reverse noises, which cancel; the same bound holds
when every $\sigma_t=0$.  Stochastic resampling supplies alternative
noise-driven paths, but whether those paths cross a watermark verifier's
decision boundary is scheme- and sampler-dependent.  Section~\ref{ssec:det_vs_stoch}
therefore holds re-noising depth, inputs, backbone, and strength grid fixed
and changes only the reverse-sampler condition.

\subsection{Operational Guarantees of Adaptive DRIFT}
\label{ssec:why_minimal}

\noindent\textbf{Verify-and-climb.}
Fix the family prediction and one complete realization of the candidates on
the selected ladder
$\Lambda_{\widehat f}
=\{\lambda_{\widehat f,0}<\cdots<\lambda_{\widehat f,M}\}$.
Let $v_j$ be its binary verdict and
$\phi_j=d_{\mathrm{sem}}(\mathbf x_j^a,\mathbf x^w)$.

\begin{assumption}[Monotone realized ladder]
\label{asm:monotone}
The verdicts form a rejected suffix,
$v_{j+1}\le v_j$, and distortions are non-decreasing,
$\phi_{j+1}\ge\phi_j$, on this fixed realized ladder.
\end{assumption}

\begin{proposition}[First rejected rung]
\label{prop:minimal}
Suppose at least one rung is rejected and
$j^\star=\min\{j:v_j=0\}$.  Algorithm~\ref{alg:adaptive} returns rung
$j^\star$ after exactly $j^\star+1$ ladder-verifier queries, and
$\phi_{j^\star}\le\phi_j$ for every rejected rung $j$ on the selected
realized ladder.
\end{proposition}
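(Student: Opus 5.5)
The proof splits into two parts: an operational part about what Algorithm~\ref{alg:adaptive} does, and an order-theoretic part about distortions. Both are pathwise statements on the fixed family prediction and the fixed realized ladder $\Lambda_{\widehat f}=\{\lambda_{\widehat f,0}<\cdots<\lambda_{\widehat f,M}\}$. No probabilistic argument is needed, because once the realization is fixed each verdict $v_j$ and distortion $\phi_j$ is a deterministic number. I will use that the verifier $\mathcal V_{f,\kappa}$ is a deterministic function of its input. Then the verdict the algorithm observes when it queries rung $j$ is exactly $v_j$, as defined from the realized candidate $\mathbf x_j^a$.

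\textbf{Operational claim.} I will argue by induction on the loop index of Algorithm~\ref{alg:adaptive}. The algorithm starts at the lowest retained rung $j=0$, generates $\mathbf x_j^a$, issues one verifier query, and stops as soon as the verdict is $0$; otherwise it advances to $j+1$. The invariant is: at the start of iteration $j$, exactly $j$ ladder-verifier queries have been made, and every earlier rung $i<j$ had $v_i=1$. Since $j^\star=\min\{j:v_j=0\}$ exists by hypothesis, all $i<j^\star$ have $v_i=1$, so the loop is not exited before iteration $j^\star$. At $j^\star$ the verdict is $0$, so the algorithm returns $\mathbf x_{j^\star}^a$ with $j^\star+1$ queries in total. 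The failure-flag branch is never reached, because that branch requires every rung to be accepted. I will check that the family-identification inversion is not counted as a ladder-verifier query; the statement counts only ladder queries, and I will state this convention explicitly.

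\textbf{Minimality.} Let $j$ be any rejected rung, i.e.\ $v_j=0$. By definition of $j^\star$ as a minimum, $j\ge j^\star$. Assumption~\ref{asm:monotone} gives $\phi_{i+1}\ge\phi_i$ for consecutive rungs, and chaining these inequalities yields $\phi_j\ge\phi_{j^\star}$. The rejected-suffix part of the assumption ($v_{j+1}\le v_j$) is not needed for this inequality. It only shows that the rejected set is exactly $\{j^\star,\dots,M\}$, which I will note as a remark: rung $j^\star$ is then both the first rejected rung and the boundary of the feasible region on the ladder.

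\textbf{Main obstacle.} The mathematics here is trivial. The care lies in making the statement precise about the realization. The candidates must be fixed in advance (one complete realization of all rungs), so that querying adaptively does not change the $v_j$ or $\phi_j$ of rungs that were never queried. Without this, "least distorted rejected rung" would refer to unqueried random objects. I will therefore phrase the argument on a coupled probability space where all $M+1$ candidates are drawn up front, and the algorithm merely reveals them in order.
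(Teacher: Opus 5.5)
Your proposal is correct and follows essentially the same argument as the paper's proof. The scan stops at the first rejected index $j^\star$ after $j^\star+1$ queries, and every rejected $j$ satisfies $j\ge j^\star$, so the distortion ordering in Assumption~\ref{asm:monotone} gives $\phi_j\ge\phi_{j^\star}$; your remarks that one complete realization of the ladder must be fixed and that the rejected-suffix condition is not needed for this inequality also match the paper's framing.
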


The result is deliberately grid-relative.  It neither selects an optimum
between rungs nor compares independently resampled candidates.  The family
identifier affects which rungs are evaluated, not the validity of the finite
scan: its accuracy and query savings are therefore evaluated empirically.

\noindent\textbf{Verifier-gated refinement.}
Let $b_0$ be the first rejected attack output and let $b_k$ be the retained
best after round $k$.  The following statements use the same fixed
deterministic verifier and the same fixed composite score $S$ throughout.

\begin{proposition}[Feasibility invariance]
\label{prop:feasible}
If $b_0$ is verifier-rejected and $b_k$ is replaced only by a
verifier-rejected candidate, then every retained $b_k$ and the returned
image are rejected by that verifier, independently of the controller.
\end{proposition}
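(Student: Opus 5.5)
The plan is to argue by induction on the refinement round index $k$, treating the controller as an arbitrary (possibly randomized, adaptive) source of candidates. The invariant to establish is
\[
P(k):\quad \mathcal V_{f,\kappa}(b_k)=0 .
\]
Everything rests on fixing the verifier $\mathcal V_{f,\kappa}$ as a single deterministic function of the image throughout. Its verdict on a given image is therefore well defined and never re-randomized. A candidate found to be rejected when it was tested stays rejected once it is stored as $b_k$.

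\textbf{Base case.} $P(0)$ is exactly the hypothesis that $b_0$ is verifier-rejected. For Adaptive DRIFT this holds because, by Proposition~\ref{prop:minimal}, Algorithm~\ref{alg:adaptive} returns the rung $j^\star$ with $v_{j^\star}=0$. Only the existence of a rejected rung is needed here; the monotonicity in Assumption~\ref{asm:monotone} plays no role.

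\textbf{Inductive step.} Assume $P(k)$. At round $k+1$ the controller proposes some candidate $\tilde{\mathbf x}_{k+1}$, and the update rule of Section~\ref{ssec:rl} allows only two outcomes.
\begin{enumerate}[label=(\alph*),nosep,leftmargin=*]
    \item If the candidate is rejected by the same verifier and $S$ strictly improves, then $b_{k+1}=\tilde{\mathbf x}_{k+1}$ is rejected by construction.
    \item Otherwise the algorithm backs off, so $b_{k+1}=b_k$, which is rejected by $P(k)$.
\end{enumerate}
In both cases $P(k+1)$ holds.

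\textbf{Controller-independence and the returned image.} The inductive step never uses how $\tilde{\mathbf x}_{k+1}$ was produced, namely the policy, the guidance weights, or the injected noise. Hence the conclusion holds for every rollout and every controller. The stop action only selects a finite termination round $K$, and the algorithm returns $b_K$, which is rejected by $P(K)$. If $K$ is random or data-dependent, the statement holds pathwise on every realization.

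The only real obstacle is making the deterministic-verifier premise explicit. If the verifier were stochastic, or were re-queried with fresh randomness on the returned image, the invariant could fail. I would state that the verdict is a fixed function of the image, and that the returned object is the stored image $b_K$ itself rather than a re-encoded or re-decoded copy.
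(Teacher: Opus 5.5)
Your proof is correct and follows the paper's route: induction on $k$. The base case is the rejection of $b_0$, and the inductive step splits into retaining $b_{k-1}$ or replacing it with a candidate the same fixed verifier rejects. Controller-independence follows because the argument never uses how the candidate was produced. Your added remarks on the deterministic verifier and on returning the stored $b_K$ itself spell out premises the paper compresses into ``the same fixed deterministic verifier''; they do not change the argument.
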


\begin{proposition}[Monotone retained score]
\label{prop:monotone_fid}
Along nested prefixes of one fixed rollout, if a candidate replaces
$b_{k-1}$ only when $S$ strictly improves, then
$S(b_k)\ge S(b_{k-1})$ for every $k$.
\end{proposition}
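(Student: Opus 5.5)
The plan is a direct induction over the refinement rounds of the fixed rollout, using only the update rule. The dependence on Assumption~\ref{asm:monotone} or on the controller is not needed. First I would formalize the retention rule. At round $k$ the rollout produces a candidate $c_k$. The gate sets $b_k=c_k$ if $\mathcal V_{f,\kappa}(c_k)=0$ and $S(c_k)>S(b_{k-1})$, and $b_k=b_{k-1}$ otherwise (back-off). Because the verifier and the scalarization $S$ in Eq.~\eqref{eq:fidelity_score} are fixed and deterministic throughout, $b_k$ is a deterministic function of the prefix $(b_0,c_1,\dots,c_k)$. It follows that the sequence $(b_k)$ for a longer prefix agrees with the one for any shorter nested prefix; this is the point of stating the result along nested prefixes.

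The main step is a two-case argument for each $k\ge1$. If the candidate is accepted, then $S(b_k)=S(c_k)>S(b_{k-1})$ by the strict-improvement condition. If it is rejected, because the verifier accepted it or $S$ failed to improve, then $b_k=b_{k-1}$ and so $S(b_k)=S(b_{k-1})$. In either case $S(b_k)\ge S(b_{k-1})$. Chaining these inequalities gives $S(b_k)\ge S(b_0)$. It also gives $S(b_{\mathrm{ret}})\ge S(b_{k})$ for every earlier $k$, where $b_{\mathrm{ret}}$ is the image returned at the stopping round, whether termination comes from the stop action or from a round budget.

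The only real obstacle is bookkeeping rather than mathematics. I must make sure $S$ is not recomputed with different references or weights across rounds, since $\mathbf x^w$, $w_{\mathrm s}$ and $w_\ell$ are fixed. I must also check that comparisons are always made against the retained best $b_{k-1}$, not against the previous candidate $c_{k-1}$. I would also remark that the statement says nothing about individual SSIM or LPIPS, nor about PPO return, consistent with the discussion after Proposition~\ref{prop:feasible}.
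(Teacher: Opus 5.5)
Your proposal is correct and follows the paper's own proof: each round either keeps $b_k=b_{k-1}$ or replaces it only under the strict condition $S(b_k)>S(b_{k-1})$, so $S(b_k)\ge S(b_{k-1})$ holds pathwise along the fixed rollout. Chaining to $S(b_k)\ge S(b_0)$ and checking that $S$ stays fixed and is always compared against the retained best (not the previous candidate) are harmless additions beyond what the paper states.
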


Both propositions are invariants of the retained-best update, not guarantees
about raw candidates or the learned policy.  In particular, a higher
composite score need not improve every component metric, and PPO clipping
does not imply monotone true return.  Section~\ref{ssec:refine_results}
measures the policy's actual fidelity gain; the propositions certify only
that the same rollout never deploys a gate-violating or lower-scoring
retained image.  Appendices~\ref{app:adaptive} and~\ref{app:rl} give the
proofs and the precise DPPO objective.

%% file: exp_setup.tex
\section{Experiments}
\label{sec:exp_setup}
\label{sec:results}

In this section, we evaluate DRIFT along four questions: whether it removes watermarks across different embedding paradigms, whether stochastic reverse sampling contributes beyond re-noising, whether per-image adaptation reduces unnecessary distortion, and whether the full pipeline preserves visual fidelity. We use Stable Diffusion v2.1 as the public diffusion backbone, draw evaluation prompts from Stable-Diffusion-Prompts~\citep{Santana2024StableDiffusionPrompts}, and run all experiments on an H100 GPU.

\textbf{Evaluated watermarks.} We consider nine representative diffusion watermarks spanning three injection points. The \emph{noise-space} group contains Tree-Ring (TR)~\citep{Tree-Ring}, RingID (RI)~\citep{ci2024ringid}, PRC~\citep{gunn2024undetectable}, and WIND~\citep{arabi2024hidden}; the \emph{latent/frequency-domain} group contains Gaussian Shading (GS)~\citep{yang2024gaussian}, GaussMarker (GM)~\citep{li2025gaussmarker}, SFW~\citep{lee2025semantic}, and SEAL~\citep{arabi2025seal}; and the \emph{optimization-based} group contains ROBIN~\citep{huang2024robin}. We use these abbreviations throughout the tables and appendix.

\textbf{Attack baselines.} We compare against the black-box attack~\citep{muller2025black} and the latent-noise removing attack~\citep{jain2025forging}, two prior attacks that target semantic watermarks through per-image adversarial optimization. DRIFT performs no per-image gradient optimization.

\textbf{Metrics and protocol.} Attack success rate (ASR, $\uparrow$) is the true-detector removal rate. We measure distribution-level quality with CLIP score ($\uparrow$)~\citep{hessel2021clipscore} and FID ($\downarrow$)~\citep{heusel2017gans}, and paired fidelity against the watermarked original with SSIM/PSNR ($\uparrow$) and LPIPS ($\downarrow$). For the adaptive and refinement stages, we additionally report identification accuracy, the selected per-image strength $\lambda$, verifier queries/runtime, and pre$\to$post fidelity. All controlled comparisons use the same encode\,$\to$\,deflect\,$\to$\,decode route: Section~\ref{ssec:det_vs_stoch} holds the re-noising stage, images, backbone, and strength grid fixed while changing the reverse-sampler condition; Section~\ref{ssec:minimal_results} changes only how $\lambda$ is selected; and Section~\ref{ssec:refine_results} changes whether Stage~III is applied. Full per-family results, ablations, and qualitative panels are provided in Appendix~\ref{app:experiments}.

% Numerical entries are preserved verbatim; author-identified corrections are pending.
\begin{table*}[t]
\centering
\caption{\textbf{Overall attack effectiveness and image quality.} \textbf{(a)} Per-watermark ASR under the strong DRIFT setting, $\lambda\geq0.5$; underlined entries indicate where DRIFT matches or exceeds the best prior attack. \textbf{(b)} End-to-end ASR, image quality, and runtime averaged over the nine families; bold marks the best quality values among the three DRIFT configurations. For the two baselines and Full DRIFT, FID, SSIM, and LPIPS follow the $100$-image-per-family quality protocol detailed in Appendix~\ref{app:quality}.}
\label{tab:main_results}
\footnotesize
\setlength{\tabcolsep}{3.5pt}
\renewcommand{\arraystretch}{1.06}
\textbf{(a) Per-watermark attack success rate (ASR, \%).}\\[-2pt]
\begin{tabular*}{\textwidth}{@{\extracolsep{\fill}}lcccccccccc@{}}
\toprule
\multirow{2}{*}{\textbf{Attack}}
& \multicolumn{4}{c}{\textbf{Noise-space}}
& \multicolumn{4}{c}{\textbf{Latent/frequency}}
& \textbf{Optimization}
& \multirow{2}{*}{\textbf{AVG.}} \\
\cmidrule(lr){2-5}\cmidrule(lr){6-9}\cmidrule(lr){10-10}
& TR & RI & PRC & WIND & GS & GM & SFW & SEAL & ROBIN & \\
\midrule
\rowcolor{black!6}
\multicolumn{11}{c}{\itshape Prior attacks} \\
Black-box & 72 & 99 & 100 & 100 & 100 & 100 & 100 & 100 & 100 & 96.8 \\
Removing  & 99 & 95 & 100 & 100 & 4   & 55  & 100 & 100 & 79  & 81.3 \\
\midrule
\rowcolor{black!6}
\multicolumn{11}{c}{\itshape DRIFT (ours)} \\
\textbf{DRIFT} &
98 & \underline{\textbf{100}} & \underline{\textbf{100}} &
\underline{\textbf{100}} & \underline{\textbf{100}} &
\underline{\textbf{100}} & \underline{\textbf{100}} &
\underline{\textbf{100}} & \underline{\textbf{100}} &
\underline{\textbf{99.8}} \\
\bottomrule
\end{tabular*}

\vspace{4pt}
\textbf{(b) End-to-end effectiveness, image quality, and runtime.}\\[-2pt]
\begin{tabular*}{\textwidth}{@{\extracolsep{\fill}}lccccccc@{}}
\toprule
\textbf{Method} & \textbf{ASR (\%)$\uparrow$} & \textbf{CLIP$\uparrow$} &
\textbf{FID$\downarrow$} & \textbf{SSIM$\uparrow$} & \textbf{PSNR$\uparrow$} &
\textbf{LPIPS$\downarrow$} & \textbf{Time (s)} \\
\midrule
\rowcolor{black!6}
\multicolumn{8}{c}{\itshape Baseline attacks} \\
Black-box attack                         & 96.8 & 31.9 & 114.019 & 0.519 & 19.46 & 0.305 & -- \\
Removing attack~\citep{jain2025forging} & 81.3 & 31.4 & 104.693 & 0.712 & 26.35 & 0.338 & -- \\
\midrule
\rowcolor{black!6}
\multicolumn{8}{c}{\itshape DRIFT (ours), end-to-end configurations} \\
Fixed-$\lambda$ DRIFT ($\lambda=0.70$)   & 100.0 & 31.8 & 136.5 & 0.401 & 14.31 & 0.528 & 1.16 \\
Adaptive-$\lambda$ DRIFT (no refine)     & 99.7  & 31.5 & 58.7 & 0.697 & 23.72 & 0.168 & 4.20 \\
Full DRIFT ($+$ DPPO refine)             & 100.0 & \textbf{32.1} & \textbf{55.414} &
\textbf{0.713} & 24.01 & \textbf{0.154} & 8.12 \\
\bottomrule
\end{tabular*}
\end{table*}

%% file: results.tex
\subsection{Overall Attack Effectiveness}
\label{ssec:main_results}

Table~\ref{tab:main_results}(a) tests one trajectory-level attack across nine schemes under the strong setting, $\lambda\geq0.5$. DRIFT reaches $98$--$100\%$ ASR on every scheme and averages \textbf{99.8\%}, versus $96.8\%$ for the black-box attack and $81.3\%$ for the removing attack. The largest gaps occur on Tree-Ring ($98\%$ vs.\ $72\%$ black-box) and on GaussMarker/Gaussian Shading ($100\%$ vs.\ $55\%/4\%$ removing). Its success across all three watermark paradigms establishes breadth; the controlled study below isolates the contribution of stochastic reversal.

\subsection{Role of Stochastic Reverse Sampling}
\label{ssec:det_vs_stoch}

Theorem~\ref{prop:shift_informal} does not establish a stochastic advantage because its bound also holds at $\sigma_t{=}0$ (Section~\ref{ssec:why_stochastic}). We therefore fix the re-noising stage, images, backbone, and strength grid while comparing three deterministic samplers (DDIM, DPM-Solver++, Euler) with three stochastic samplers (DDPM ancestral, Euler-a, SDE-DPM-Solver++). Each uses three verifier-inversion settings on nine families with $100$ images per cell and $\lambda\in[0,0.70]$. Table~\ref{tab:ode_vs_sde} reports the smallest strength $\lambda^\star_p$ reaching $p\%$ survival ($\downarrow$ better).  ``never'' means the threshold is not reached by $\lambda=0.70$. Accordingly, the $\lambda^\star_5$ mean uses the eight families resolved by both groups, while $\lambda^\star_{50}$ uses all nine shown.

\begin{table}[t]
\centering
\caption{\textbf{Deterministic (ODE) versus stochastic (SDE) reverse sampling.} We report $\lambda^\star_p$, the smallest strength driving watermark survival to $p\%$, averaged over three samplers and three inversion settings ($\downarrow$ better).}
\label{tab:ode_vs_sde}
\setlength{\tabcolsep}{4pt}
\renewcommand{\arraystretch}{1.10}
\small
\begin{tabular}{lcc@{\hspace{8pt}}cc}
\toprule
& \multicolumn{2}{c}{$\lambda^\star_{50}\downarrow$}
& \multicolumn{2}{c}{$\lambda^\star_{5}\downarrow$} \\
\cmidrule(lr){2-3}\cmidrule(lr){4-5}
Family & ODE & SDE & ODE & SDE \\
\midrule
SEAL              & 0.038 & \textbf{0.032} & 0.097 & \textbf{0.090} \\
SFW              & 0.040 & \textbf{0.038} & 0.093 & \textbf{0.090} \\
WIND             & 0.100 & \textbf{0.076} & 0.197 & \textbf{0.129} \\
ROBIN            & 0.181 & \textbf{0.135} & 0.255 & \textbf{0.204} \\
PRC              & 0.182 & \textbf{0.117} & 0.297 & \textbf{0.193} \\
Tree-Ring        & 0.204 & \textbf{0.135} & 0.325 & \textbf{0.212} \\
RingID        & 0.440 & \textbf{0.291} & 0.586 & \textbf{0.453} \\
GaussMarker      & 0.575 & \textbf{0.372} & 0.690 & \textbf{0.479} \\
Gaussian Shading & 0.691 & \textbf{0.462} & never & \textbf{0.542} \\
\midrule
\textbf{Mean}    & 0.272& \textbf{0.184}& 0.318& \textbf{0.231}\\
\bottomrule
\end{tabular}
\end{table}

Stochastic samplers require $32\%$ less strength at $\lambda^\star_{50}$ and $27\%$ less at $\lambda^\star_5$ on average. The gap is negligible near the grid floor (SFW) but pronounced on robust schemes. Most decisively, deterministic Gaussian Shading never reaches $5\%$ survival by $\lambda=0.70$, whereas the stochastic condition does at about $\lambda=0.54$; this deterministic condition is the regeneration baseline in Remark~\ref{rem:stochasticity}. Fidelity is not uniformly higher: at each sampler's $\lambda^\star_5$, the mean SDE--ODE change is $-0.004$ SSIM and $-0.21$\,dB, although GaussMarker gains $+0.049$ SSIM and $+1.56$\,dB. The empirical advantage is therefore \emph{decoupling per unit strength}, not universally higher fidelity.

\paragraph{Latent-space evidence.}
\label{ssec:latent_evidence}
DDIM inversion of an unattacked image remains close to the watermark-carrying noise $\boldsymbol{\epsilon}^w$ ($L_1\approx0.34$--$0.68$), while the measured distance after DRIFT increases across the tested $\lambda$ grid toward the independent-Gaussian diagnostics $L_1=2/\sqrt{\pi}$ and $L_2=\sqrt2$. This empirical diagnostic complements, but is not implied by, the fixed-depth dependence bound in Theorem~\ref{prop:shift_informal}; Figure~\ref{fig:noise_distance} of Appendix~\ref{app:noise_distance} gives all nine curves. Corollary~\ref{cor:shift_2d} formalizes only the conditional squared-$L_2$ reference, not the empirical monotonicity or the Gaussian $L_1$ value.

\begin{figure*}[t]
\centering
\renewcommand{\arraystretch}{1.0}
\newlength{\methodcol}
\newlength{\imgcol}
\setlength{\methodcol}{0.055\textwidth}
\setlength{\imgcol}{0.18\textwidth}

\begin{minipage}[c]{\methodcol}\centering\end{minipage}%
\begin{minipage}[c]{\imgcol}\centering \scriptsize Original \end{minipage}%
\begin{minipage}[c]{\imgcol}\centering \scriptsize DRIFT \end{minipage}%
\begin{minipage}[c]{\imgcol}\centering \scriptsize Black-Box Attack \end{minipage}%
\begin{minipage}[c]{\imgcol}\centering \scriptsize Removing Attack \end{minipage}

\vspace{0.3mm}
\begin{minipage}[c]{\methodcol}\centering \small \textbf{GM} \end{minipage}%
\begin{minipage}[c]{\imgcol}\centering \zoomfig{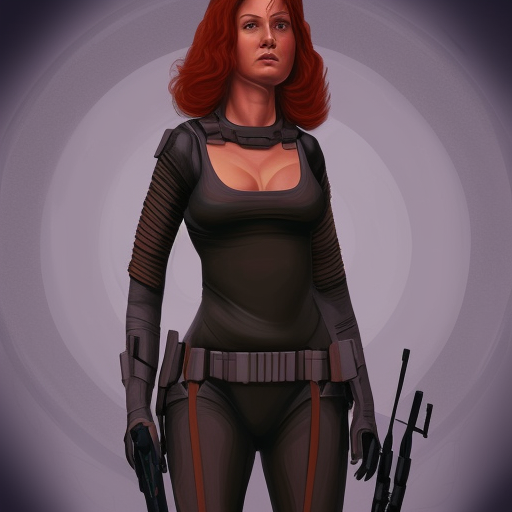}{5.0}{0.89}{0.63\linewidth}{0.85cm} \end{minipage}%
\begin{minipage}[c]{\imgcol}\centering \zoomfig{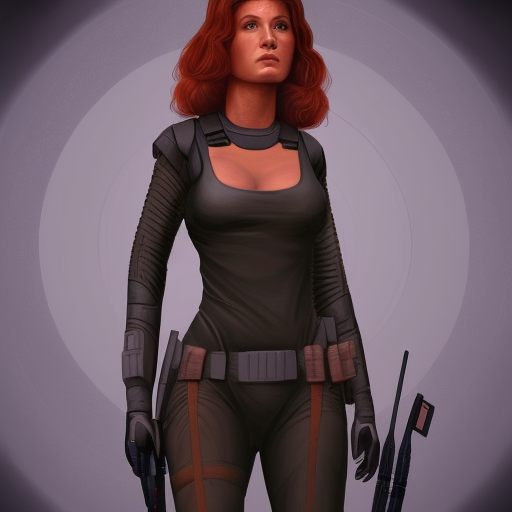}{5.0}{0.89}{0.63\linewidth}{0.85cm} \end{minipage}%
\begin{minipage}[c]{\imgcol}\centering \zoomfig{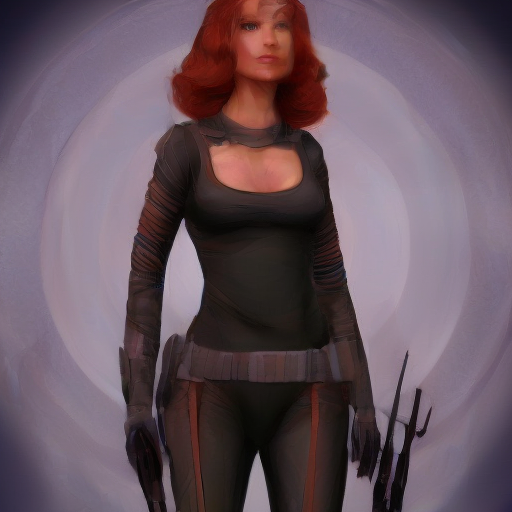}{5.0}{0.89}{0.63\linewidth}{0.85cm} \end{minipage}%
\begin{minipage}[c]{\imgcol}\centering \zoomfig{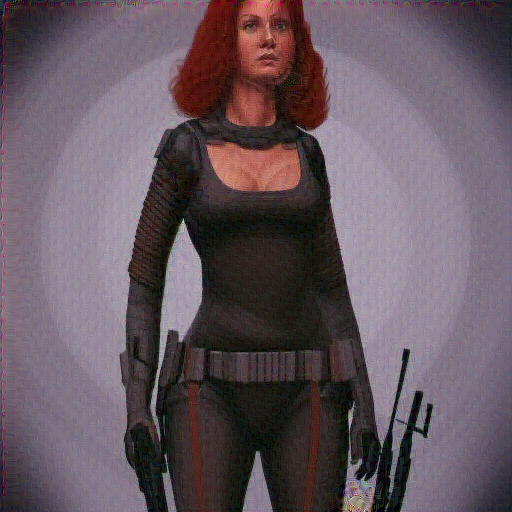}{5.0}{0.89}{0.63\linewidth}{0.85cm} \end{minipage}

\vspace{0.3mm}
\begin{minipage}[c]{\methodcol}\centering \small \textbf{GS} \end{minipage}%
\begin{minipage}[c]{\imgcol}\centering \zoomfig{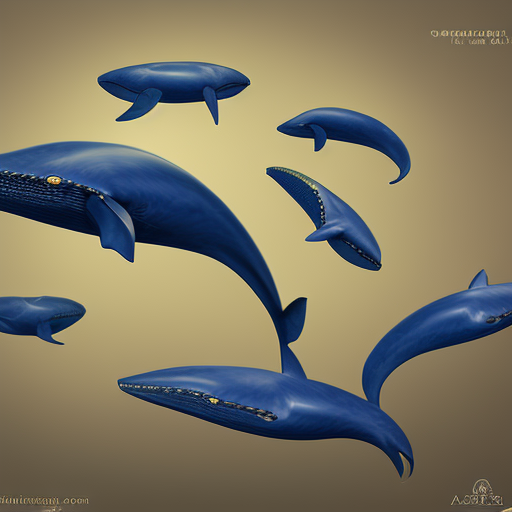}{0.72}{0.62}{0.63\linewidth}{0.85cm} \end{minipage}%
\begin{minipage}[c]{\imgcol}\centering \zoomfig{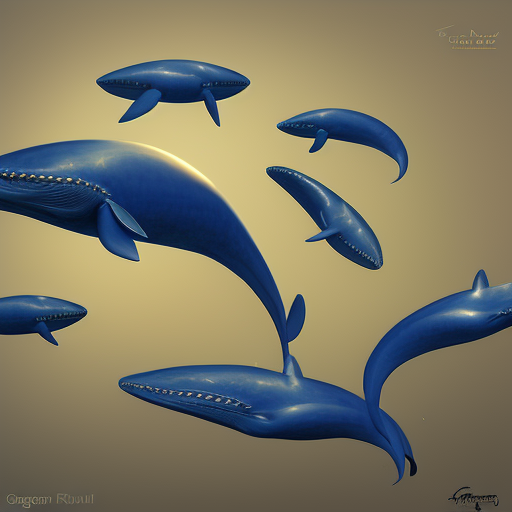}{0.72}{0.62}{0.63\linewidth}{0.85cm} \end{minipage}%
\begin{minipage}[c]{\imgcol}\centering \zoomfig{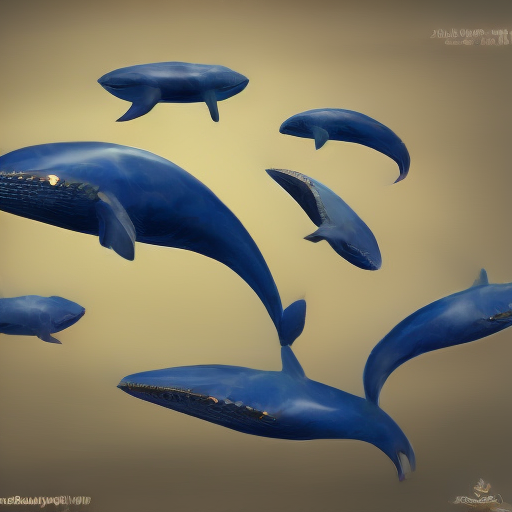}{0.72}{0.62}{0.63\linewidth}{0.85cm} \end{minipage}%
\begin{minipage}[c]{\imgcol}\centering \zoomfig{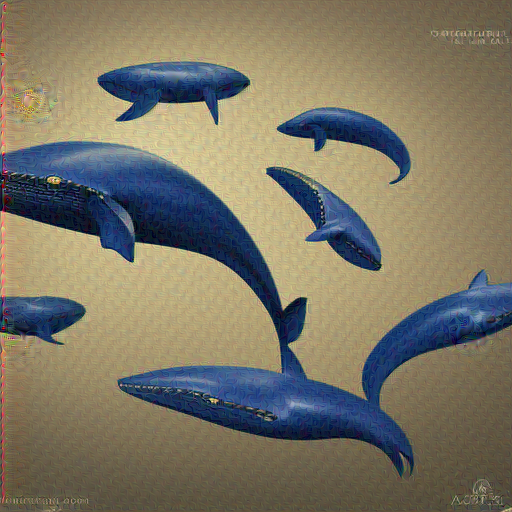}{0.72}{0.62}{0.63\linewidth}{0.85cm} \end{minipage}

\vspace{0.3mm}
\begin{minipage}[c]{\methodcol}\centering \small \textbf{PRC} \end{minipage}%
\begin{minipage}[c]{\imgcol}\centering \zoomfig{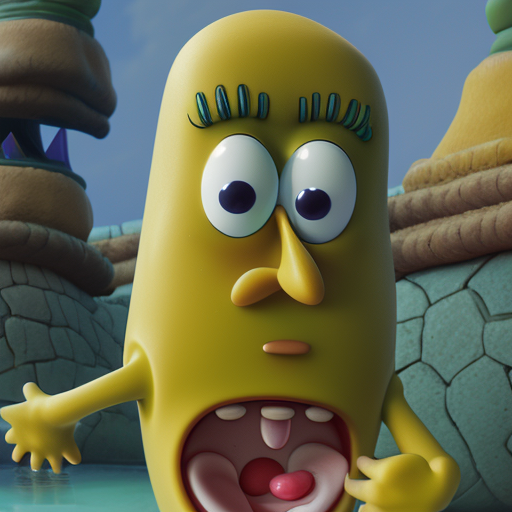}{2.0}{0.77}{0.63\linewidth}{0.85cm} \end{minipage}%
\begin{minipage}[c]{\imgcol}\centering \zoomfig{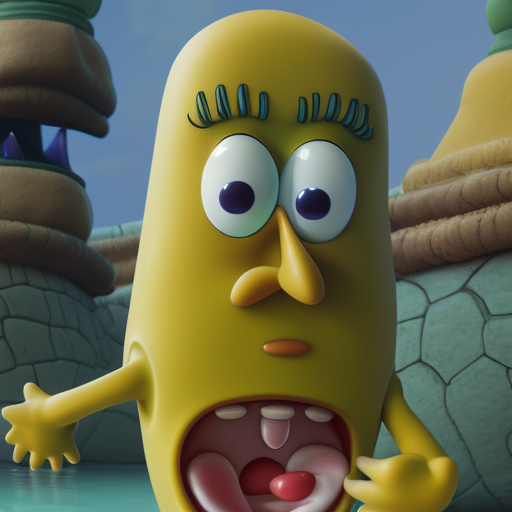}{2.0}{0.77}{0.63\linewidth}{0.85cm} \end{minipage}%
\begin{minipage}[c]{\imgcol}\centering \zoomfig{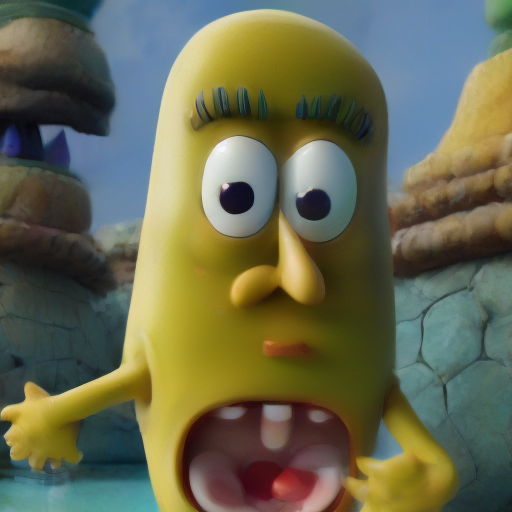}{2.0}{0.77}{0.63\linewidth}{0.85cm} \end{minipage}%
\begin{minipage}[c]{\imgcol}\centering \zoomfig{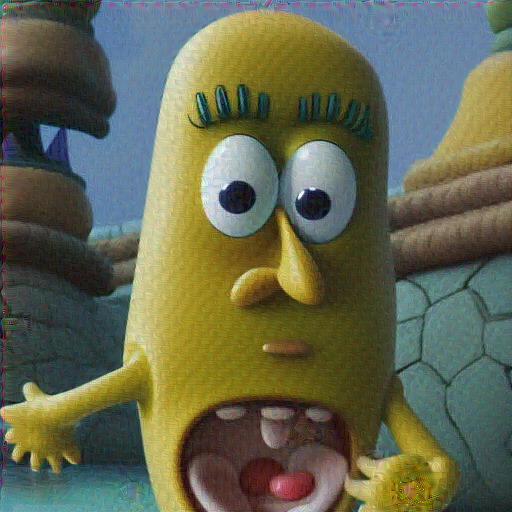}{2.0}{0.77}{0.63\linewidth}{0.85cm} \end{minipage}

\caption{\textbf{Qualitative comparison with baseline attacks} (Original / DRIFT / Black-Box / Removing), with zoomed insets, on GM. GS and PRC are in
Figure~\ref{fig:qualitative_zoom_appendix} of Appendix~\ref{app:qualitative_all}.}
\label{fig:qualitative_zoom_compare}
\end{figure*}

\subsection{Adaptive Search and Fidelity Refinement}

We next test blind family prediction, per-image strength selection, and verifier-gated refinement.

\paragraph{Blind family identification.}
\label{ssec:ident_results}
Using only structural features of the shared DDIM-inverted latent, the key-free classifier reaches \textbf{99.3\%} top-1 accuracy on $1{,}350$ held-out images. All non-trivial confusions stay within the similar Gaussian-Shading/GaussMarker/SEAL cluster. The prediction seeds verify-and-climb; per-family results and the confusion matrix appear in Table~\ref{tab:ident} and Figure~\ref{fig:confusion_blind} of Appendix~\ref{app:ident_details}.

\paragraph{Per-image minimal-strength search.}
\label{ssec:minimal_results}
On $300$ images per family, black-box verify-and-climb reduces mean strength from $0.244$ to $0.20$ relative to a fixed per-family choice while raising ASR from $96.8\%$ to $99.7\%$. Mean SSIM/PSNR/LPIPS improve from $0.67/22.6/0.19$ to $0.71/23.9/0.16$ at about $2.2$ verifier queries per image. This is consistent with Proposition~\ref{prop:minimal}'s ladder-relative result; Table~\ref{tab:adaptive_lambda} of Appendix~\ref{app:adaptive_details} gives full per-family results, while Figure~\ref{fig:robustness_compare} of Appendix~\ref{app:robustness_levels} visualizes the strength--fidelity trade-off.

\paragraph{Verifier-gated fidelity refinement.}
\label{ssec:refine_results}
On already rejected Stage-III inputs, DPPO improves every reported fidelity metric, including LPIPS from $0.136$ to $0.114$, while survival remains $0\%$ over $1.16$ rounds. Proposition~\ref{prop:feasible} explains the same-verifier rejection invariant, not the empirical fidelity gain. Table~\ref{tab:rl_ablation} shows that removing the terminal watermark bonus or back-off/re-verification raises raw survival to $9.3\%$ and $7.1\%$, while removing LPIPS guidance worsens LPIPS to $0.136$. Table~\ref{tab:refine} and Figures~\ref{fig:dppo_curves}--\ref{fig:refine_qual} of Appendix~\ref{app:rl_details} provide the full refinement evidence.

\begin{table}[t]
\centering
\small
\setlength{\tabcolsep}{3pt}
\renewcommand{\arraystretch}{1.08}
\begin{tabular}{@{}lcccc@{}}
\toprule
\textbf{Variant} & \textbf{SSIM$\uparrow$} & \textbf{LPIPS$\downarrow$} &
\textbf{WM surv.$\downarrow$} & \textbf{Rounds} \\
\midrule
Full DPPO (ours)                   & \textbf{0.738} & \textbf{0.114} & \textbf{0.0\%} & 1.16 \\
\midrule
$-$ terminal WM bonus$^{\ast}$    & 0.688 & 0.119 & 9.3\% & 1.00 \\
$-$ back-off / re-verify$^{\ast}$ & 0.694 & 0.117 & 7.1\% & 1.12 \\
$-$ LPIPS guidance                & 0.735 & 0.136 & 0.0\% & 1.22 \\
$k_{\max}=1$                      & 0.738 & 0.116 & 0.0\% & 1.00 \\
\bottomrule
\end{tabular}
\caption{\textbf{Ablation of the DPPO controller} ($N=225$). Each row disables
one component. $^{\ast}$For the two constraint ablations, survival is measured
with evaluation-time back-off disabled; otherwise deployed survival remains
fixed at $0\%$.}
\label{tab:rl_ablation}
\end{table}

\subsection{Image Quality and Visual Fidelity}
\label{ssec:e2e}

Table~\ref{tab:main_results}(b) reports semantic, distributional, and paired fidelity over all nine families. Fixed $\lambda=0.70$ reaches $100\%$ ASR but incurs substantial distortion (SSIM $0.401$, LPIPS $0.528$); per-image selection improves to SSIM $0.697$/LPIPS $0.168$ at $99.7\%$ ASR. Full DRIFT restores $100\%$ ASR with SSIM $0.713$, PSNR $24.01$, LPIPS $0.154$, FID $55.414$, and $8.12$\,s runtime. Its FID/LPIPS are lower than both baselines, while the removing attack reaches only $81.3\%$ ASR; Table~\ref{tab:attack_compare} of Appendix~\ref{app:quality} gives the aggregate comparison and protocol. Figure~\ref{fig:qualitative_zoom_compare} provides the retained zoomed comparison on GM, GS, and PRC; Figures~\ref{fig:qualitative_all_methods_a}--\ref{fig:qualitative_all_methods_b} of Appendix~\ref{app:qualitative_all} cover all nine families.

%% file: conclusion.tex
\section{Conclusion}\label{sec:conclusion}

We identify trajectory consistency as a common vulnerability among the diffusion
watermarks studied and introduce DRIFT, which combines partial forward
diffusion, stochastic reverse resampling, per-image verify-and-climb, and
verifier-gated refinement. Across nine watermarks in three paradigms, DRIFT
achieves $98$--$100\%$ ASR and the best perceptual-quality trade-off among the
compared attacks without secret keys, verifier internals, or per-image
gradients. Matched samplers isolate the empirical benefit of stochastic
deflection; Theorem~\ref{prop:shift_informal} and
Proposition~\ref{prop:feasible} separately characterize fixed-depth source
dependence and same-verifier refinement feasibility. Aggressive fixed
re-noising can substantially reduce fidelity, which motivates the adaptive
minimal-strength search and refinement rather than a universally strong
setting. The guarantees remain conditional on unverified network-level
Lipschitz and terminal premises, and evaluation uses one public latent-diffusion
backbone. Extending trajectory-aware evaluation to pixel-space, video,
autoregressive, and flow-based generators is an important next step.

%% file: appendix.tex
\appendix

\begin{center}
{\LARGE\bfseries Technical Appendix\par}
\end{center}
\vspace{0.5em}

\noindent\textbf{Organization.}
Appendix~\ref{app:algorithms} gives executable specifications of the base and
adaptive attacks; Appendix~\ref{app:theory} contains the full theoretical
statements and proofs; Appendices~\ref{app:adaptive} and~\ref{app:rl} prove the
adaptive-search and refinement guarantees; and
Appendix~\ref{app:experiments} provides the extended protocols, tables,
ablations, and qualitative results.

\section{Algorithmic Specifications}
\label{app:algorithms}

Algorithms~\ref{alg:drift} and~\ref{alg:adaptive} provide executable summaries of the base and adaptive attacks described in Section~\ref{sec:method}.

\begin{algorithm}[H]
\caption{DRIFT (base attack)}
\label{alg:drift}
\begin{algorithmic}[1]
\Require Watermarked image $\mathbf{x}^w$, pretrained LDM $(\mathcal{E},\mathcal{D},\boldsymbol{\epsilon}_\theta)$, global source-independent conditioning $c_{\mathrm{att}}$, strength $\lambda$, schedule $\{\beta_i\}_{i=1}^T$, valid generalized reverse scales $\{\sigma_t\}$
\Ensure Attacked image $\mathbf{x}^a$
\State $\mathbf{z}_0^w\gets\mathcal{E}(\mathbf{x}^w)$; \quad $t_\lambda\gets\lceil\lambda T\rceil$
\State $\bar{\alpha}_0\gets1$; \quad $\bar{\alpha}_t\gets\prod_{i=1}^{t}(1-\beta_i)$
\State $\boldsymbol{\epsilon}\sim\mathcal{N}(\mathbf{0},\mathbf{I}_d)$
\State $\mathbf{z}_{t_\lambda}^a\gets\sqrt{\bar{\alpha}_{t_\lambda}}\mathbf{z}_0^w+\sqrt{1-\bar{\alpha}_{t_\lambda}}\boldsymbol{\epsilon}$
\For{$t=t_\lambda,t_\lambda-1,\ldots,1$}
    \State $\hat{\mathbf{z}}_0^a\gets\bigl(\mathbf{z}_t^a-\sqrt{1-\bar{\alpha}_t}\boldsymbol{\epsilon}_\theta(\mathbf{z}_t^a,t;c_{\mathrm{att}})\bigr)/\sqrt{\bar{\alpha}_t}$
    \State $\boldsymbol{\xi}_t\sim\mathcal{N}(\mathbf{0},\mathbf{I}_d)$
    \State $\mathbf{z}_{t-1}^a\gets\sqrt{\bar{\alpha}_{t-1}}\hat{\mathbf{z}}_0^a+\sqrt{1-\bar{\alpha}_{t-1}-\sigma_t^2}\boldsymbol{\epsilon}_\theta(\mathbf{z}_t^a,t;c_{\mathrm{att}})+\sigma_t\boldsymbol{\xi}_t$
\EndFor
\State \Return $\mathbf{x}^a\gets\mathcal{D}(\mathbf{z}_0^a)$
\end{algorithmic}
\end{algorithm}

\begin{algorithm}[H]
\caption{Adaptive DRIFT (predict $\to$ seed $\to$ climb)}
\label{alg:adaptive}
\begin{algorithmic}[1]
\Require $\mathbf{x}^w$; encoder/inverter/identifier/verifier $(\mathcal{E},\mathcal{I}_{\mathrm{DDIM}},\mathcal{H},\mathcal{V})$; global $c_{\mathrm{att}}$; ladder bank $\{\Lambda_g\}_{g\in\mathcal{F}}$
\Ensure Attacked image $\mathbf{x}^a$, selected strength $\lambda$, and status
\State $\mathbf{z}_T^{\mathrm{inv}}\gets\mathcal{I}_{\mathrm{DDIM}}(\mathcal{E}(\mathbf{x}^w);c_{\mathrm{att}})$; \quad $\hat f\gets\mathcal{H}(\mathbf{z}_T^{\mathrm{inv}})$
\State Select the ascending seeded ladder $\Lambda_{\hat f}$
\For{$\lambda\in\Lambda_{\hat f}$}
    \State $\mathbf{x}^a\gets\textsc{DRIFT}(\mathbf{x}^w,\lambda)$ \Comment{Algorithm~\ref{alg:drift}}
    \If{$\mathcal{V}(\mathbf{x}^a)$ reports \textsc{not watermarked}}
        \State \Return $(\mathbf{x}^a,\lambda,\textsc{Success})$
        \Comment{first verifier-rejected rung}
    \EndIf
\EndFor
\State \Return $(\mathbf{x}^a,\lambda,\textsc{Failure})$
\Comment{strongest evaluated candidate remains detected}
\end{algorithmic}
\end{algorithm}

\section{Theory of Source Decoupling}
\label{app:theory}

We prove the three parts of Theorem~\ref{prop:shift_informal}.  Throughout,
$\mathbf X=\mathbf z_0^w\in\mathbb R^d$ is the source latent,
$\mathbf W=\boldsymbol{\epsilon}^w
=\mathcal I_{\mathrm{DDIM}}(\mathbf X)\in\mathbb R^d$ is its recovered
reference, and $n=t_\lambda=\lceil\lambda T\rceil$ is the attack depth.
For the distributional claims, $(\mathbf X,\mathbf W)$ follows the population
law induced by the watermarked-image generation protocol, and is independent
of the attack randomness.  When $\mathbf X=\mathbf x$ is fixed, expectations
subscripted by $\mathcal N_n$ are over attack randomness only.

\begin{assumption}[Regularity]
\label{asm:shift_regular}
We assume:
(A1) $0<\beta_t<1$ for every diffusion step,
$\mathbf X,\mathbf W\in L^2$, the recovered reference
$\mathbf W=\mathcal I_{\mathrm{DDIM}}(\mathbf X)$ is a measurable function
of $\mathbf X$, and $0<\bar\alpha_n<1$;
(A2) the entries of the complete attack-noise vector
$\mathcal N_n=(\boldsymbol{\epsilon},\boldsymbol{\xi}_{1:n})$ are mutually
independent standard Gaussian draws, and $\mathcal N_n$ is jointly independent
of $(\mathbf X,\mathbf W)$;
(A3) all denoising and inversion operations use the same globally fixed
deterministic attacker-side conditioning $c_{\mathrm{att}}$, independent of
$(\mathbf X,\mathbf W,\mathcal N_n)$, and all forward, reverse, decoder,
encoder, and inversion maps used below are Borel measurable;
(A4) for the sensitivity claims, each fixed-noise reverse kernel is
$K_t$-Lipschitz in its state uniformly over the reverse-noise argument, the
recovered-noise map
$Q:=\mathcal I_{\mathrm{DDIM}}\circ\mathcal E\circ\mathcal D$ is
$L_Q$-Lipschitz, and the source-free recovered output defined below is in
$L^2$.
\end{assumption}

Set $\bar\alpha_0=1$ and
$0\le\sigma_t\le\sqrt{1-\bar\alpha_{t-1}}$.  Substituting
Eq.~\eqref{eq:predict_clean} into Eq.~\eqref{eq:reverse} gives
\begin{equation}
\mathbf z_{t-1}
=g_t(\mathbf z_t;\boldsymbol{\xi}_t)
=A_t\mathbf z_t
+B_t\boldsymbol{\epsilon}_\theta(\mathbf z_t,t)
+\sigma_t\boldsymbol{\xi}_t,
\label{eq:appendix_affine}
\end{equation}
where
\[
A_t=\sqrt{\frac{\bar\alpha_{t-1}}{\bar\alpha_t}},
\qquad
B_t=
\sqrt{1-\bar\alpha_{t-1}-\sigma_t^2}
-\sqrt{\frac{\bar\alpha_{t-1}(1-\bar\alpha_t)}
{\bar\alpha_t}}.
\]
This is exactly the generalized DDIM update; $\sigma_t=0$ gives
deterministic DDIM, and the DDPM posterior variance gives the standard
ancestral DDPM special case.  For fixed $\boldsymbol{\xi}_{1:n}$, let
$F_n(\mathbf u;\boldsymbol{\xi}_{1:n})$ be the reverse composition from
time $n$ to $0$, and set
$R_n=Q\circ F_n$.  With
$\boldsymbol{\Xi}_{1:n}:=(\boldsymbol{\xi}_1,\ldots,\boldsymbol{\xi}_n)$,
the main-text map is
$H_n(\mathbf u,\boldsymbol{\Xi}_{1:n})
:=R_n(\mathbf u;\boldsymbol{\xi}_{1:n})$.  Define
\[
C_n=\prod_{s=1}^nK_s,\qquad
\Gamma_n=\sqrt{\bar\alpha_n}C_n,\qquad
\Delta_n=L_Q^2\Gamma_n^2\,\mathbb E\|\mathbf X\|_2^2.
\]
If $\boldsymbol{\epsilon}_\theta(\cdot,t)$ is $L_t$-Lipschitz, then the
explicit but generally loose choice
$K_t=A_t+|B_t|L_t$ is valid.  Since this triangle bound ignores cancellation,
it need not certify that $\Gamma_n$ decreases; the information result below
does not require any Lipschitz assumption.

\begin{lemma}[Forward information bottleneck]
\label{lem:shift_information}
Under (A1)--(A3), with
$\mathbf Z_n=\sqrt{\bar\alpha_n}\mathbf X+
\sqrt{1-\bar\alpha_n}\boldsymbol{\epsilon}$ and
$\mathbf Y_n=H_n(\mathbf Z_n,\boldsymbol{\Xi}_{1:n})$, where
$H_n$ is the complete measurable reverse and recovered-noise pipeline,
\[
I(\mathbf W;\mathbf Y_n)
\le
\frac12\log\det\!\left(
\mathbf I_d+
\frac{\bar\alpha_n}{1-\bar\alpha_n}\Sigma_z\right)
\le\mathcal B_n,
\]
where $\Sigma_z=\operatorname{Cov}(\mathbf X)$ and $\mathcal B_n$ is defined
in Eq.~\eqref{eq:information_bound}.  Moreover,
\[
\operatorname{TV}\!\left(
\mathcal L(\mathbf W,\mathbf Y_n),
\mathcal L(\mathbf W)\otimes\mathcal L(\mathbf Y_n)\right)
\le\sqrt{\mathcal B_n/2}.
\]
The envelope $\mathcal B_n$ is non-increasing with attack depth.
\end{lemma}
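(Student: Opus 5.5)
The argument runs through a Markov chain, data processing, and the Gaussian-channel capacity bound. By (A1) the reference $\mathbf W=\mathcal I_{\mathrm{DDIM}}(\mathbf X)$ is a measurable function of $\mathbf X$. By (A2)--(A3) the attack noises $(\boldsymbol{\epsilon},\boldsymbol{\Xi}_{1:n})$ are independent of $(\mathbf X,\mathbf W)$, and the conditioning $c_{\mathrm{att}}$ is fixed and source-independent. Together these give the chain $\mathbf W\to\mathbf X\to\mathbf Z_n\to\mathbf Y_n$. Two applications of the data-processing inequality then yield
\[
I(\mathbf W;\mathbf Y_n)\le I(\mathbf W;\mathbf Z_n)\le I(\mathbf X;\mathbf Z_n).
\]
For the second step we use $I(\mathbf W;\mathbf Z_n)\le I(\mathbf X;\mathbf Z_n)$, which holds because $\mathbf Z_n$ depends on $\mathbf W$ only through $\mathbf X$. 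Rescaling by the deterministic factor $1/\sqrt{\bar\alpha_n}$ shows that $\mathbf Z_n$ carries the same information as $\mathbf X+\mathbf N$ with $\mathbf N\sim\mathcal N(\mathbf 0,\frac{1-\bar\alpha_n}{\bar\alpha_n}\mathbf I_d)$ independent of $\mathbf X$. This is an additive Gaussian channel with input covariance $\Sigma_z$, finite since $\mathbf X\in L^2$.

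The main obstacle is the capacity step, since $\mathbf X$ is not Gaussian. The plan is to write $I(\mathbf X;\mathbf X+\mathbf N)=h(\mathbf X+\mathbf N)-h(\mathbf N)$; both differential entropies are finite because $\mathbf N$ is nondegenerate Gaussian and the sum has finite second moment. We then apply the maximum-entropy bound $h(\mathbf X+\mathbf N)\le\frac12\log\det\bigl(2\pi e(\Sigma_z+\frac{1-\bar\alpha_n}{\bar\alpha_n}\mathbf I_d)\bigr)$. Subtracting $h(\mathbf N)$ gives $\frac12\log\det(\mathbf I_d+\frac{\bar\alpha_n}{1-\bar\alpha_n}\Sigma_z)$. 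The condition $0<\bar\alpha_n<1$ from (A1) ensures the channel is neither noiseless nor trivial. For the second inequality, write the log-determinant as $\sum_i\log(1+c\mu_i)$, where $\mu_i\ge0$ are the eigenvalues of $\Sigma_z$ and $c=\bar\alpha_n/(1-\bar\alpha_n)$. Jensen's inequality (concavity of $\log$) bounds this by $d\log(1+c\operatorname{tr}(\Sigma_z)/d)$, which equals $2\mathcal B_n$.

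For the total-variation claim, note that $I(\mathbf W;\mathbf Y_n)$ is exactly the KL divergence between $\mathcal L(\mathbf W,\mathbf Y_n)$ and $\mathcal L(\mathbf W)\otimes\mathcal L(\mathbf Y_n)$. Pinsker's inequality then gives $\operatorname{TV}\le\sqrt{I/2}\le\sqrt{\mathcal B_n/2}$. Monotonicity of the envelope in depth also follows directly. By (A1) we have $\bar\alpha_n=\prod_{i\le n}(1-\beta_i)$ with each factor in $(0,1)$, so $\bar\alpha_n$ strictly decreases in $n$. Hence $c=\bar\alpha_n/(1-\bar\alpha_n)$ decreases, and $\mathcal B_n$, being increasing in $c$, is non-increasing. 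No Lipschitz hypothesis (A4) enters at any point, consistent with the statement.
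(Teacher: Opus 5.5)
Your proposal is correct and follows the paper's proof essentially step for step. Both use the Markov chain $\mathbf W\to\mathbf X\to\mathbf Z_n\to\mathbf Y_n$ with data processing, then the Gaussian maximum-entropy bound, Jensen on the eigenvalues of $\Sigma_z$, and Pinsker, and both obtain monotonicity of $\mathcal B_n$ from $\bar\alpha_n$ decreasing in $n$. The paper also gives the coupling $\mathbf Z_m=\sqrt{\bar\alpha_m/\bar\alpha_n}\,\mathbf Z_n+\sqrt{1-\bar\alpha_m/\bar\alpha_n}\,\mathbf G$, which shows that the forward-channel information itself, not just the envelope, is non-increasing; the stated claim does not need it.
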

\begin{proof}
Freshness gives the Markov chains
$\mathbf W\to\mathbf X\to\mathbf Z_n\to\mathbf Y_n$, hence
$I(\mathbf W;\mathbf Y_n)\le I(\mathbf X;\mathbf Z_n)$.
For $a=\bar\alpha_n$ and $\Sigma_z=\operatorname{Cov}(\mathbf X)$, the
maximum-entropy property of the Gaussian gives
\[
I(\mathbf X;\mathbf Z_n)
\le\frac12\log\det\!\left(
\mathbf I_d+\frac{a}{1-a}\Sigma_z\right).
\]
Applying Jensen's inequality to the eigenvalues of $\Sigma_z$ gives
Eq.~\eqref{eq:information_bound}, and Pinsker's inequality gives the TV
bound.  Finally, if $m\ge n$, then
$\bar\alpha_m\le\bar\alpha_n$; equivalently, the coupling
\[
\mathbf Z_m=
\sqrt{\frac{\bar\alpha_m}{\bar\alpha_n}}\,\mathbf Z_n
+\sqrt{1-\frac{\bar\alpha_m}{\bar\alpha_n}}\,\mathbf G,
    \qquad \mathbf G\sim\mathcal N(\mathbf0,\mathbf I_d),
    \quad \mathbf G\perp(\mathbf W,\mathbf Z_n),
\]
makes $\mathbf W\to\mathbf Z_n\to\mathbf Z_m$ a Markov chain.  Both views
show that the forward information envelope is non-increasing with depth.
\end{proof}

\begin{lemma}[One-step stability]
\label{lem:shift_one_step}
For every $t$, $\mathbf{u},\mathbf{v}$, and $\boldsymbol{\xi}$,
\[
\|g_t(\mathbf u;\boldsymbol{\xi})
-g_t(\mathbf v;\boldsymbol{\xi})\|_2
\le K_t\|\mathbf u-\mathbf v\|_2 .
\]
\end{lemma}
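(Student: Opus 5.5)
The plan is to read this directly off assumption (A4), which is stated per kernel. Fix $t$ and a reverse-noise value $\boldsymbol{\xi}$. By Eq.~\eqref{eq:appendix_affine}, the fixed-noise kernel is $g_t(\cdot;\boldsymbol{\xi})$. Assumption (A4) says this kernel is $K_t$-Lipschitz in its state, uniformly over $\boldsymbol{\xi}$, so the claimed inequality is that assumption written out for the pair $(\mathbf u,\mathbf v)$. The step that carries content is that the noise term cancels. Subtracting the two updates that use the same $\boldsymbol{\xi}$ gives
\[
g_t(\mathbf u;\boldsymbol{\xi})-g_t(\mathbf v;\boldsymbol{\xi})
=A_t(\mathbf u-\mathbf v)
+B_t\bigl(\boldsymbol{\epsilon}_\theta(\mathbf u,t)-\boldsymbol{\epsilon}_\theta(\mathbf v,t)\bigr),
\]
since $\sigma_t\boldsymbol{\xi}$ appears identically in both and drops out. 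Hence the difference does not depend on $\boldsymbol{\xi}$, and uniformity over the reverse noise is automatic. This is also why the later sensitivity bound holds equally when $\sigma_t=0$.

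I will then check the explicit constant mentioned before Lemma~\ref{lem:shift_information}, namely $K_t=A_t+|B_t|L_t$ whenever $\boldsymbol{\epsilon}_\theta(\cdot,t)$ is $L_t$-Lipschitz. Applying the triangle inequality to the display gives
\[
\|g_t(\mathbf u;\boldsymbol{\xi})-g_t(\mathbf v;\boldsymbol{\xi})\|_2\le A_t\|\mathbf u-\mathbf v\|_2+|B_t|L_t\|\mathbf u-\mathbf v\|_2 .
\]
Here $A_t>0$ is well defined because (A1) ensures $\bar\alpha_t>0$. The radicand $1-\bar\alpha_{t-1}-\sigma_t^2$ inside $B_t$ is nonnegative by the admissibility constraint $\sigma_t\le\sqrt{1-\bar\alpha_{t-1}}$, so $B_t$ is a real scalar.

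No step is a real obstacle. The only care needed is to keep the same $\boldsymbol{\xi}$ in both evaluations, which is the synchronous coupling, and to note that $B_t$ may be negative, which is why its absolute value appears. Lemma~\ref{lem:shift_one_step} will then be iterated over $t=n,\dots,1$ to give $C_n=\prod_s K_s$ for $F_n$, and composed with $L_Q$ for $R_n$, in part~(ii).
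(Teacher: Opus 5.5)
Your proposal is correct and matches the paper's proof: the lemma is (A4) restated, and the explicit constant $K_t=A_t+|B_t|L_t$ follows because the shared noise term $\sigma_t\boldsymbol{\xi}$ cancels and the triangle inequality applies. Your extra checks, that $A_t$ is well defined by (A1) and that $B_t$ is real by the admissibility bound on $\sigma_t$, are sound additions the paper leaves implicit.
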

\begin{proof}
This is (A4).  For the explicit sufficient constant, the common noise
cancels and the triangle inequality gives
\[
\|g_t(\mathbf u;\boldsymbol{\xi})
-g_t(\mathbf v;\boldsymbol{\xi})\|_2
\le(A_t+|B_t|L_t)\|\mathbf u-\mathbf v\|_2 .
\]
\end{proof}

\begin{lemma}[Multi-step stability]
\label{lem:shift_multistep}
For every $n$, $\mathbf{u},\mathbf{v}$, and $\boldsymbol{\xi}_{1:n}$, $\|F_n(\mathbf{u};\boldsymbol{\xi}_{1:n}) - F_n(\mathbf{v};\boldsymbol{\xi}_{1:n})\|_2 \le C_n \|\mathbf{u}-\mathbf{v}\|_2$.
\end{lemma}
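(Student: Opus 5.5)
The plan is induction on the number of reverse steps $n$, using Lemma~\ref{lem:shift_one_step} as the inductive step. Fix $\boldsymbol{\xi}_{1:n}$ throughout. The same reverse-noise realization drives both arguments, and this synchronous coupling is exactly the setting in which the one-step Lipschitz bound holds uniformly in the noise.

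\emph{Unrolling the composition.} Write the composition explicitly as
\[
F_n(\mathbf u;\boldsymbol{\xi}_{1:n})=g_1(\cdot;\boldsymbol{\xi}_1)\circ g_2(\cdot;\boldsymbol{\xi}_2)\circ\cdots\circ g_n(\mathbf u;\boldsymbol{\xi}_n).
\]
Define the intermediate states $\mathbf u_n=\mathbf u$ and $\mathbf u_{t-1}=g_t(\mathbf u_t;\boldsymbol{\xi}_t)$, and similarly $\mathbf v_t$ from $\mathbf v_n=\mathbf v$. By construction $F_n(\mathbf u;\boldsymbol{\xi}_{1:n})=\mathbf u_0$ and $F_n(\mathbf v;\boldsymbol{\xi}_{1:n})=\mathbf v_0$.

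\emph{Chaining the one-step bounds.} Lemma~\ref{lem:shift_one_step} applies with the shared $\boldsymbol{\xi}_t$, so
\[
\|\mathbf u_{t-1}-\mathbf v_{t-1}\|_2\le K_t\|\mathbf u_t-\mathbf v_t\|_2 .
\]
Iterating from $t=n$ down to $t=1$ gives
\[
\|\mathbf u_0-\mathbf v_0\|_2\le\Bigl(\prod_{s=1}^n K_s\Bigr)\|\mathbf u-\mathbf v\|_2=C_n\|\mathbf u-\mathbf v\|_2 .
\]
Formally, I would induct on $n$. The base case $n=0$ is the identity map with $C_0=1$ as the empty product, or equivalently $n=1$ is Lemma~\ref{lem:shift_one_step} itself. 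For the step, write $F_n(\mathbf u;\boldsymbol{\xi}_{1:n})=F_{n-1}(g_n(\mathbf u;\boldsymbol{\xi}_n);\boldsymbol{\xi}_{1:n-1})$, apply the inductive hypothesis, and then apply the one-step bound.

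\emph{The only point needing care.} Conceptually, $F_{n-1}$ inside $F_n$ uses the kernels $g_1,\dots,g_{n-1}$ with their fixed coefficients $A_t,B_t,\sigma_t$, and these do not depend on $n$. The recursion is therefore legitimate. The constants $K_s$ are uniform over the noise argument by (A4), so no measurability or averaging issue arises. I expect no real obstacle here, since the bound is pathwise and deterministic for every noise realization.
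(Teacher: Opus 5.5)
Your proposal is correct and follows essentially the same route as the paper. Both drive the two trajectories with the same reverse noises and chain Lemma~\ref{lem:shift_one_step} from $t=n$ down to $t=1$ to obtain the factor $C_n=\prod_{s=1}^n K_s$; your induction via $F_n(\cdot;\boldsymbol{\xi}_{1:n})=F_{n-1}\bigl(g_n(\cdot;\boldsymbol{\xi}_n);\boldsymbol{\xi}_{1:n-1}\bigr)$ just formalizes that iteration.
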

\begin{proof}
Drive both trajectories by the same noises and iterate
Lemma~\ref{lem:shift_one_step} from $t=n$ to $1$:
\[
\|\mathbf z_0^{(\mathbf u)}-\mathbf z_0^{(\mathbf v)}\|_2
\le\left(\prod_{s=1}^nK_s\right)\|\mathbf u-\mathbf v\|_2
=C_n\|\mathbf u-\mathbf v\|_2.
\]
\end{proof}

\begin{lemma}[Stability of the recovered-noise map]
\label{lem:shift_recovered_stability}
$\|R_n(\mathbf{u};\boldsymbol{\xi}_{1:n}) - R_n(\mathbf{v};\boldsymbol{\xi}_{1:n})\|_2 \le L_Q C_n \|\mathbf{u}-\mathbf{v}\|_2$.
\end{lemma}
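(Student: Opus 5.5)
The bound follows by composing the two Lipschitz estimates already available. By definition, $R_n(\cdot;\boldsymbol{\xi}_{1:n})=Q\circ F_n(\cdot;\boldsymbol{\xi}_{1:n})$, where $Q=\mathcal I_{\mathrm{DDIM}}\circ\mathcal E\circ\mathcal D$ does not depend on the reverse noise. This is because (A3) fixes the conditioning $c_{\mathrm{att}}$ globally and the inversion is deterministic. So the plan is to fix an arbitrary realization $\boldsymbol{\xi}_{1:n}$ and arbitrary inputs $\mathbf u,\mathbf v$, and feed both inputs through the same reverse composition $F_n(\cdot;\boldsymbol{\xi}_{1:n})$. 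This is the synchronous coupling used in Lemma~\ref{lem:shift_multistep}.

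First, apply the $L_Q$-Lipschitz property of $Q$ from (A4) to the two terminal latents $F_n(\mathbf u;\boldsymbol{\xi}_{1:n})$ and $F_n(\mathbf v;\boldsymbol{\xi}_{1:n})$:
\[
\|R_n(\mathbf u;\boldsymbol{\xi}_{1:n})-R_n(\mathbf v;\boldsymbol{\xi}_{1:n})\|_2
\le L_Q\,\|F_n(\mathbf u;\boldsymbol{\xi}_{1:n})-F_n(\mathbf v;\boldsymbol{\xi}_{1:n})\|_2 .
\]
Second, bound the right-hand side by $C_n\|\mathbf u-\mathbf v\|_2$ using Lemma~\ref{lem:shift_multistep}. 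That lemma holds for every noise realization, so the combined inequality holds pathwise and uniformly in $\boldsymbol{\xi}_{1:n}$, which gives the claim.

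No step is a real obstacle. The one point to check is that the same $\boldsymbol{\xi}_{1:n}$ enters both evaluations of $F_n$, so the noise terms cancel exactly as in Lemma~\ref{lem:shift_one_step}. One must also confirm that $Q$ has no hidden randomness or source dependence. In particular, the inversion must be deterministic DDIM with the same $c_{\mathrm{att}}$, so that a single constant $L_Q$ applies to both arguments.
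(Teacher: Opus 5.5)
Your proposal is correct and matches the paper's proof: the paper also applies the $L_Q$-Lipschitz property of $Q$ from (A4) to the two terminal latents, then invokes Lemma~\ref{lem:shift_multistep} to bound $\|F_n(\mathbf u;\boldsymbol{\xi}_{1:n})-F_n(\mathbf v;\boldsymbol{\xi}_{1:n})\|_2$ by $C_n\|\mathbf u-\mathbf v\|_2$. Your checks on the shared noise realization and on $Q$ being deterministic are sound, though both are already built into the hypotheses.
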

\begin{proof}
By (A4),
$\|R_n(\mathbf u)-R_n(\mathbf v)\|_2
\le L_Q\|F_n(\mathbf u)-F_n(\mathbf v)\|_2$; apply
Lemma~\ref{lem:shift_multistep}.
\end{proof}

\begin{lemma}[Decoupling from the source latent]
\label{lem:shift_terminal_decoupling}
Let
\[
\widehat{\boldsymbol{\epsilon}}_n^a
=R_n\!\left(
\sqrt{\bar\alpha_n}\mathbf X+
\sqrt{1-\bar\alpha_n}\boldsymbol{\epsilon};
\boldsymbol{\xi}_{1:n}\right),
\qquad
\widetilde{\boldsymbol{\epsilon}}_n
=R_n\!\left(
\sqrt{1-\bar\alpha_n}\boldsymbol{\epsilon};
\boldsymbol{\xi}_{1:n}\right).
\]
Then $\widetilde{\boldsymbol{\epsilon}}_n$ is independent of
$(\mathbf X,\mathbf W)$.  Moreover, for every deterministic
$\mathbf x\in\mathbb R^d$ and every attack-noise realization,
\[
\left\|
R_n\!\left(
\sqrt{\bar\alpha_n}\mathbf x+
\sqrt{1-\bar\alpha_n}\boldsymbol{\epsilon};
\boldsymbol{\xi}_{1:n}\right)
-\widetilde{\boldsymbol{\epsilon}}_n
\right\|_2
\le L_QC_n\sqrt{\bar\alpha_n}\|\mathbf x\|_2 .
\]
Consequently,
\[
\mathbb E_{\mathcal N_n}\!\left[
\left\|
R_n\!\left(
\sqrt{\bar\alpha_n}\mathbf x+
\sqrt{1-\bar\alpha_n}\boldsymbol{\epsilon};
\boldsymbol{\xi}_{1:n}\right)
-\widetilde{\boldsymbol{\epsilon}}_n
\right\|_2^2
\right]
\le L_Q^2C_n^2\bar\alpha_n\|\mathbf x\|_2^2,
\]
and averaging over the source population gives
\[
\mathbb E\|
\widehat{\boldsymbol{\epsilon}}_n^a-
\widetilde{\boldsymbol{\epsilon}}_n\|_2^2
\le \Delta_n .
\]
\end{lemma}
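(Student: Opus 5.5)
The proof has two parts, an independence claim and a pathwise bound; the mean-square estimates then follow by squaring and integrating.

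\emph{Independence.} By definition, $\widetilde{\boldsymbol{\epsilon}}_n=R_n(\sqrt{1-\bar\alpha_n}\boldsymbol{\epsilon};\boldsymbol{\xi}_{1:n})$ is a Borel function of the attack-noise vector $\mathcal N_n=(\boldsymbol{\epsilon},\boldsymbol{\xi}_{1:n})$ alone. Measurability holds by (A3), because $R_n=Q\circ F_n$ composes measurable maps. The conditioning $c_{\mathrm{att}}$ is globally fixed and so does not depend on the source. By (A2), $\mathcal N_n$ is jointly independent of $(\mathbf X,\mathbf W)$, and measurable functions of independent objects remain independent. This gives $\widetilde{\boldsymbol{\epsilon}}_n\perp(\mathbf X,\mathbf W)$.

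\emph{Pathwise bound.} Fix a deterministic $\mathbf x$ and one noise realization. Apply Lemma~\ref{lem:shift_recovered_stability} with $\mathbf u=\sqrt{\bar\alpha_n}\mathbf x+\sqrt{1-\bar\alpha_n}\boldsymbol{\epsilon}$ and $\mathbf v=\sqrt{1-\bar\alpha_n}\boldsymbol{\epsilon}$. Both trajectories are driven by the same $\boldsymbol{\xi}_{1:n}$, so the lemma applies directly. Since $\mathbf u-\mathbf v=\sqrt{\bar\alpha_n}\mathbf x$, we obtain
\[
\|R_n(\mathbf u;\boldsymbol{\xi}_{1:n})-R_n(\mathbf v;\boldsymbol{\xi}_{1:n})\|_2\le L_QC_n\sqrt{\bar\alpha_n}\|\mathbf x\|_2 .
\]
This bound is deterministic and holds for every $\boldsymbol{\epsilon}$ and every $\boldsymbol{\xi}_{1:n}$.

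\emph{Mean-square bounds.} Squaring the pathwise bound and taking $\mathbb E_{\mathcal N_n}$ gives the fixed-source estimate at once, since the right-hand side does not depend on the noise. For the population version, I integrate over the law of $\mathbf X$. By (A2), conditioning on $\mathbf X=\mathbf x$ leaves $\mathcal N_n$ with its unconditional law. Fubini's theorem (the integrand is nonnegative, so Tonelli suffices) therefore gives
\[
\mathbb E\|\widehat{\boldsymbol{\epsilon}}_n^a-\widetilde{\boldsymbol{\epsilon}}_n\|_2^2=\int\mathbb E_{\mathcal N_n}[\cdots]\,d\mathcal L(\mathbf X)(\mathbf x)\le L_Q^2C_n^2\bar\alpha_n\,\mathbb E\|\mathbf X\|_2^2 .
\]
Since $\Gamma_n^2=\bar\alpha_nC_n^2$, the right-hand side equals $\Delta_n$. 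It is finite because $\mathbf X\in L^2$ by (A1).

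The main point requiring care is the last step. Passing from the fixed-$\mathbf x$ bound to the averaged bound requires the noise law to be unchanged under conditioning on the source. That is exactly the joint independence in (A2), and it is where synchronous coupling is used legitimately. Everything else is a direct substitution into Lemma~\ref{lem:shift_recovered_stability}.
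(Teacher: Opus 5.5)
Your proposal is correct and follows essentially the same route as the paper. Independence holds because $\widetilde{\boldsymbol{\epsilon}}_n$ is a measurable function of $\mathcal N_n$ and the fixed $c_{\mathrm{att}}$; the pathwise bound comes from Lemma~\ref{lem:shift_recovered_stability} with $\mathbf u-\mathbf v=\sqrt{\bar\alpha_n}\mathbf x$; and the mean-square bounds follow by squaring and integrating.

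One correction to your closing remark: the population step does not need (A2). The pathwise bound holds for every $\mathbf x$ and every noise realization, so substituting $\mathbf x=\mathbf X$, as the paper does, gives $\|\widehat{\boldsymbol{\epsilon}}_n^a-\widetilde{\boldsymbol{\epsilon}}_n\|_2\le L_QC_n\sqrt{\bar\alpha_n}\|\mathbf X\|_2$ pointwise, and taking expectations finishes the argument without any conditioning or Fubini step.
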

\begin{proof}
The baseline is a measurable function of $\mathcal N_n$ and the globally fixed
$c_{\mathrm{att}}$ alone, so independence follows from (A2)--(A3).  Couple
both runs with the same $\mathcal N_n$.
Lemma~\ref{lem:shift_recovered_stability} gives the displayed deterministic
$\mathbf x$ bound pathwise.  Squaring and integrating over $\mathcal N_n$
gives the fixed-image bound.  Substituting $\mathbf x=\mathbf X$ gives,
pathwise,
\[
\|
\widehat{\boldsymbol{\epsilon}}_n^a-
\widetilde{\boldsymbol{\epsilon}}_n\|_2
\le L_QC_n\sqrt{\bar\alpha_n}\|\mathbf X\|_2 .
\]
Squaring and averaging over the independent source population proves the last
claim.
\end{proof}

\begin{lemma}[Distance to the independent product law]
\label{lem:shift_product_law}
All laws below belong to $\mathcal P_2(\mathbb R^{2d})$, and
\[
W_2\!\left(
\mathcal L(\widehat{\boldsymbol{\epsilon}}_n^a,\mathbf W),
\mathcal L(\widehat{\boldsymbol{\epsilon}}_n^a)
\otimes\mathcal L(\mathbf W)\right)
\le 2\sqrt{\Delta_n}.
\]
\end{lemma}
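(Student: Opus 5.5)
The plan is to bound the $W_2$ distance by a triangle inequality through the intermediate law $\mathcal L(\widetilde{\boldsymbol{\epsilon}}_n,\mathbf W)$. By Lemma~\ref{lem:shift_terminal_decoupling}, $\widetilde{\boldsymbol{\epsilon}}_n$ is independent of $(\mathbf X,\mathbf W)$, so $\mathcal L(\widetilde{\boldsymbol{\epsilon}}_n,\mathbf W)=\mathcal L(\widetilde{\boldsymbol{\epsilon}}_n)\otimes\mathcal L(\mathbf W)$. Hence
\[
W_2\bigl(\mathcal L(\widehat{\boldsymbol{\epsilon}}_n^a,\mathbf W),\mathcal L(\widehat{\boldsymbol{\epsilon}}_n^a)\otimes\mathcal L(\mathbf W)\bigr)
\le
W_2\bigl(\mathcal L(\widehat{\boldsymbol{\epsilon}}_n^a,\mathbf W),\mathcal L(\widetilde{\boldsymbol{\epsilon}}_n)\otimes\mathcal L(\mathbf W)\bigr)
+
W_2\bigl(\mathcal L(\widetilde{\boldsymbol{\epsilon}}_n)\otimes\mathcal L(\mathbf W),\mathcal L(\widehat{\boldsymbol{\epsilon}}_n^a)\otimes\mathcal L(\mathbf W)\bigr),
\]
and I will show that each term is at most $\sqrt{\Delta_n}$.

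First, membership in $\mathcal P_2$. We have $\mathbf W\in L^2$ by (A1) and $\widetilde{\boldsymbol{\epsilon}}_n\in L^2$ by (A4). By the pathwise bound of Lemma~\ref{lem:shift_terminal_decoupling},
\[
\|\widehat{\boldsymbol{\epsilon}}_n^a\|_2\le\|\widetilde{\boldsymbol{\epsilon}}_n\|_2+L_QC_n\sqrt{\bar\alpha_n}\|\mathbf X\|_2,
\]
and since $\mathbf X\in L^2$, this gives $\widehat{\boldsymbol{\epsilon}}_n^a\in L^2$. Therefore all joint and product laws on $\mathbb R^{2d}$ have finite second moments.

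\textbf{First term.} Use the synchronous coupling: the pair $\bigl((\widehat{\boldsymbol{\epsilon}}_n^a,\mathbf W),(\widetilde{\boldsymbol{\epsilon}}_n,\mathbf W)\bigr)$, built on one probability space with the same $\mathcal N_n$ and the same $\mathbf W$. Its second marginal has law $\mathcal L(\widetilde{\boldsymbol{\epsilon}}_n)\otimes\mathcal L(\mathbf W)$ by the independence just noted. The squared cost is $\mathbb E\|\widehat{\boldsymbol{\epsilon}}_n^a-\widetilde{\boldsymbol{\epsilon}}_n\|_2^2\le\Delta_n$ by Lemma~\ref{lem:shift_terminal_decoupling}.

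\textbf{Second term (the tensorized coupling).} Take an independent copy $\mathbf W'\sim\mathcal L(\mathbf W)$, independent of $(\widehat{\boldsymbol{\epsilon}}_n^a,\widetilde{\boldsymbol{\epsilon}}_n)$, and couple $(\widetilde{\boldsymbol{\epsilon}}_n,\mathbf W')$ with $(\widehat{\boldsymbol{\epsilon}}_n^a,\mathbf W')$. The marginals are exactly the two product laws, because $\mathbf W'$ is independent of each first coordinate. The cost is again $\mathbb E\|\widehat{\boldsymbol{\epsilon}}_n^a-\widetilde{\boldsymbol{\epsilon}}_n\|_2^2\le\Delta_n$. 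Summing the two terms gives $2\sqrt{\Delta_n}$.

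\textbf{Main obstacle.} The only subtle point is the second term: one must check that replacing $\mathbf W$ by an independent copy preserves the product marginals while keeping the first coordinates synchronously coupled. This needs the existence of such a coupling on an enlarged probability space, which I will obtain by a product extension. Measurability of all maps (A3) ensures the constructed variables are well defined.
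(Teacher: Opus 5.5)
Your proof is correct and follows the paper's argument step for step. Both use the triangle inequality through $\mathcal L(\widetilde{\boldsymbol{\epsilon}}_n,\mathbf W)=\mathcal L(\widetilde{\boldsymbol{\epsilon}}_n)\otimes\mathcal L(\mathbf W)$, a synchronous coupling sharing $\mathbf W$ for the first leg, and an independent copy of $\mathbf W$ for the second leg, each at squared cost $\mathbb E\|\widehat{\boldsymbol{\epsilon}}_n^a-\widetilde{\boldsymbol{\epsilon}}_n\|_2^2\le\Delta_n$; your derivation of $\widehat{\boldsymbol{\epsilon}}_n^a\in L^2$ from the pathwise bound just spells out the paper's appeal to (A1), (A4), and Lemma~\ref{lem:shift_terminal_decoupling}.
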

\begin{proof}
Write $A=\widehat{\boldsymbol{\epsilon}}_n^a$,
$B=\widetilde{\boldsymbol{\epsilon}}_n$, and $E=\mathbf W$.
Assumptions (A1), (A4), and Lemma~\ref{lem:shift_terminal_decoupling} give the
required second moments.  Since $B\perp E$,
\[
\rho:=\mathcal L(B,E)=\mathcal L(B)\otimes\mathcal L(E).
\]
The synchronous coupling $((A,E),(B,E))$ shows
$W_2(\mathcal L(A,E),\rho)\le\sqrt{\Delta_n}$.
For the other leg, draw $(A',B')\sim\mathcal L(A,B)$ and independently draw
$E'\sim\mathcal L(E)$.  Then $((B',E'),(A',E'))$ couples
$\rho$ to $\mathcal L(A)\otimes\mathcal L(E)$ at expected squared cost
$\mathbb E\|A-B\|_2^2\le\Delta_n$.  The triangle inequality yields the stated
factor of two.
\end{proof}

\begin{theorem}[Source dependence at fixed depth]
\label{prop:noise_distance_formal}
Fix a deterministic $\lambda\in(0,1]$ before sampling
$(\mathbf X,\mathbf W,\mathcal N_n)$ and set $n=t_\lambda$.
Under Assumption~\ref{asm:shift_regular}: (i) the information and
total-variation bounds of
Lemma~\ref{lem:shift_information} hold; and (ii) the pathwise and fixed-image
bounds of Lemma~\ref{lem:shift_terminal_decoupling} hold, while population
averaging gives
\[
\mathbb E\|
\widehat{\boldsymbol{\epsilon}}_n^a-
\widetilde{\boldsymbol{\epsilon}}_n\|_2^2
\le\Delta_n,
\qquad
W_2\!\left(
\mathcal L(\widehat{\boldsymbol{\epsilon}}_n^a,\mathbf W),
\mathcal L(\widehat{\boldsymbol{\epsilon}}_n^a)\otimes\mathcal L(\mathbf W)
\right)
\le2\sqrt{\Delta_n}.
\]
Only the information envelope $\mathcal B_n$ is guaranteed to be
non-increasing across prespecified depths; no monotonicity is asserted for
$\Delta_n$.  These distributional bounds do not automatically extend to the
output selected by a source- or verifier-dependent adaptive depth.
\end{theorem}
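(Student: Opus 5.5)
The theorem is essentially an assembly of the appendix lemmas, so the plan is to verify that each lemma's hypotheses hold under the fixed-depth setup and then invoke them in order. The first step is to check that fixing the deterministic $\lambda$ before sampling makes $n=t_\lambda$ a constant. Under (A1), $0<\bar\alpha_n<1$ then holds, so the forward state $\mathbf Z_n$ and the complete pipeline $H_n$ are well defined, with $H_n$ Borel measurable by (A3). With $n$ nonrandom, the attack noise $\mathcal N_n$ is, by (A2), independent of $(\mathbf X,\mathbf W)$. This is exactly the setting of Lemmas~\ref{lem:shift_information}--\ref{lem:shift_product_law}.

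For part (i), invoke Lemma~\ref{lem:shift_information} directly. The one point to make explicit is the Markov chain $\mathbf W\to\mathbf X\to\mathbf Z_n\to\mathbf Y_n$. The link $\mathbf W\to\mathbf X$ holds because $\mathbf W$ is a measurable function of $\mathbf X$ (A1). The remaining links hold because $\boldsymbol\epsilon$ and $\boldsymbol\Xi_{1:n}$ are fresh and independent of $(\mathbf X,\mathbf W)$, and because the conditioning $c_{\mathrm{att}}$ is global and source independent, so no side channel to the source bypasses $\mathbf Z_n$. Data processing together with the Gaussian maximum-entropy bound for $I(\mathbf X;\mathbf Z_n)$, then Jensen on the eigenvalues and Pinsker, gives the displayed bounds. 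The monotonicity of $\mathcal B_n$ follows from $\bar\alpha_m\le\bar\alpha_n$ for $m\ge n$, since $a\mapsto a/(1-a)$ is increasing. One caveat is that the Gaussian capacity bound needs $\Sigma_z$ to exist, which follows from $\mathbf X\in L^2$. If $\mathbf X$ has no density the bound still holds, because the Gaussian noise makes $\mathbf Z_n$ absolutely continuous.

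For part (ii), invoke Lemma~\ref{lem:shift_terminal_decoupling} for the pathwise and fixed-image bounds. Substituting $\mathbf x=\mathbf X$ and using independence of $\mathbf X$ from $\mathcal N_n$ (Fubini) gives the population bound $\Delta_n$. Lemma~\ref{lem:shift_product_law} then gives the $2\sqrt{\Delta_n}$ bound through two couplings and the triangle inequality for $W_2$.

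I expect the main obstacle to be the second-moment bookkeeping needed for the laws to lie in $\mathcal P_2$. The source-free output $\widetilde{\boldsymbol\epsilon}_n$ is in $L^2$ by (A4). The attacked output then satisfies $\|\widehat{\boldsymbol\epsilon}_n^a\|\le\|\widetilde{\boldsymbol\epsilon}_n\|+L_Q\Gamma_n\|\mathbf X\|$, which is in $L^2$ as well. I would also check measurability of the product-law coupling, which can be built on an enlarged probability space using an independent copy of $\mathbf W$.

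The closing remarks of the theorem are not claims that require proof; they record what is \emph{not} asserted. I would briefly justify them as follows. $\Delta_n$ depends on the $n$-dependent constant $C_n$, so no monotonicity in $n$ follows. For an adaptive depth $N$ that depends on the verifier history, $N$ is correlated with $(\mathbf X,\mathcal N)$, so the chain in part (i) no longer passes only through a fixed $\mathbf Z_n$, and the lemmas do not apply as stated.
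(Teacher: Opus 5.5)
Your proposal is correct and matches the paper's proof, which simply combines Lemmas~\ref{lem:shift_information}, \ref{lem:shift_terminal_decoupling}, and~\ref{lem:shift_product_law}. Your extra checks are consistent with what those lemma proofs use: the Markov-chain links from (A1)--(A3), the $L^2$ bound $\|\widehat{\boldsymbol{\epsilon}}_n^a\|_2\le\|\widetilde{\boldsymbol{\epsilon}}_n\|_2+L_Q\Gamma_n\|\mathbf X\|_2$, and monotonicity of $\mathcal B_n$ via $a\mapsto a/(1-a)$, which suffices without the paper's auxiliary coupling.
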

\begin{proof}
Combine Lemmas~\ref{lem:shift_information},
\ref{lem:shift_terminal_decoupling}, and~\ref{lem:shift_product_law}.
\end{proof}

\begin{assumption}[Terminal reference moments]
\label{asm:shift_prior}
The source-free terminal baseline
$\widetilde{\boldsymbol{\epsilon}}_T$ and reference $\mathbf W$ are centered
and have identity covariance.
\end{assumption}

Together with (A2), this moment condition implies
$\mathbb E\|
\widetilde{\boldsymbol{\epsilon}}_T-\mathbf W\|_2^2=2d$; Gaussianity is not
needed.  The condition is an explicit diagnostic premise, not a consequence
of Gaussian primitive attack noise or of a standard marginal for a
watermark-carrying latent.  It must therefore be justified or measured for
the recovered-noise pipeline in question.

\begin{corollary}[Random-reference distance at the terminal step]
\label{cor:shift_2d}
When $n=T$, under Assumption~\ref{asm:shift_regular} instantiated at $T$
and Assumption~\ref{asm:shift_prior},
\[
\left|
\mathbb E\|
\widehat{\boldsymbol{\epsilon}}_T^a-\mathbf W\|_2^2-2d
\right|
\le \Delta_T+2\sqrt{2d\,\Delta_T}.
\]
\end{corollary}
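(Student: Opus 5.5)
We prove Corollary~\ref{cor:shift_2d} by decomposing the attacked recovered noise around the source-free terminal baseline. Write $A=\widehat{\boldsymbol{\epsilon}}_T^a$, $B=\widetilde{\boldsymbol{\epsilon}}_T$, $E=\mathbf W$ and $D=A-B$. Then
\[
\|A-E\|_2^2=\|B-E\|_2^2+2\langle D,B-E\rangle+\|D\|_2^2 .
\]
All three terms are integrable. Lemma~\ref{lem:shift_terminal_decoupling} instantiated at $n=T$ gives $\mathbb E\|D\|_2^2\le\Delta_T<\infty$. Assumption~\ref{asm:shift_prior} gives $B,E\in L^2$. Taking expectations,
\[
\mathbb E\|A-E\|_2^2-2d=\bigl(\mathbb E\|B-E\|_2^2-2d\bigr)+2\,\mathbb E\langle D,B-E\rangle+\mathbb E\|D\|_2^2 .
\]

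The first step is to show that the first bracket vanishes. By Lemma~\ref{lem:shift_terminal_decoupling}, $B$ is independent of $(\mathbf X,\mathbf W)$, using (A2)--(A3). By Assumption~\ref{asm:shift_prior}, both $B$ and $E$ are centered with identity covariance. Hence
\[
\mathbb E\|B-E\|_2^2=\mathbb E\|B\|_2^2+\mathbb E\|E\|_2^2-2\,\mathbb E\langle B,E\rangle=d+d-2\langle\mathbb E B,\mathbb E E\rangle=2d .
\]
This is the identity recorded after Assumption~\ref{asm:shift_prior}. It uses only independence and the second moments, not Gaussianity.

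Next we bound the remaining terms. The quadratic term satisfies $0\le\mathbb E\|D\|_2^2\le\Delta_T$. For the cross term, Cauchy--Schwarz in $L^2$ gives
\[
|\mathbb E\langle D,B-E\rangle|\le\sqrt{\mathbb E\|D\|_2^2}\,\sqrt{\mathbb E\|B-E\|_2^2}\le\sqrt{\Delta_T}\sqrt{2d}.
\]
The triangle inequality then yields
\[
\bigl|\mathbb E\|A-E\|_2^2-2d\bigr|\le 2\sqrt{2d\Delta_T}+\Delta_T,
\]
which is exactly the claimed bound.

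The only delicate step is justifying that $\mathbb E\|B-E\|_2^2=2d$. This needs the independence of the source-free baseline from $\mathbf W$, which Lemma~\ref{lem:shift_terminal_decoupling} supplies. It also needs Assumption~\ref{asm:shift_prior}, which is imposed as a premise rather than derived. The integrability needed to expand the square and apply Cauchy--Schwarz follows from (A1), (A4) and the second-moment bound on $D$.
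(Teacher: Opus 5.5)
Your proposal is correct and follows essentially the same route as the paper. Both proofs decompose around the source-free baseline $\widetilde{\boldsymbol{\epsilon}}_T$, bound the quadratic term by $\Delta_T$ and the cross term by Cauchy--Schwarz, and use $\mathbb E\|\widetilde{\boldsymbol{\epsilon}}_T-\mathbf W\|_2^2=2d$, which follows from independence and Assumption~\ref{asm:shift_prior}; your explicit integrability check is a small detail the paper leaves implicit.
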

\begin{proof}
Set $U=\widehat{\boldsymbol{\epsilon}}_T^a-
\widetilde{\boldsymbol{\epsilon}}_T$ and
$V=\widetilde{\boldsymbol{\epsilon}}_T-\mathbf W$.  Expanding
$\|U+V\|_2^2$ and applying Cauchy--Schwarz gives
\[
\left|
\mathbb E\|U+V\|_2^2-\mathbb E\|V\|_2^2
\right|
\le \mathbb E\|U\|_2^2+
2\sqrt{\mathbb E\|U\|_2^2\,\mathbb E\|V\|_2^2}.
\]
Theorem~\ref{prop:noise_distance_formal} bounds the first moment by
$\Delta_T$.  By Lemma~\ref{lem:shift_terminal_decoupling}, (A2), and
Assumption~\ref{asm:shift_prior}, the centered vectors
$\widetilde{\boldsymbol{\epsilon}}_T$ and $\mathbf W$ are independent and
$\mathbb E\|V\|_2^2=2d$.  Substitution proves the result.
\end{proof}

\begin{lemma}[When stochastic paths remain diverse]
\label{lem:shift_path_diversity}
Fix an initial state $\mathbf z_n$ and let
$G(\mathbf z_n,\boldsymbol{\Xi}_{1:n})$ denote the full reverse map into any
finite-dimensional latent space.  If
$\boldsymbol{\Xi}_{1:n}'$ is an independent copy, then the terminal law is
non-Dirac if and only if
\[
\Pr\!\left[
G(\mathbf z_n,\boldsymbol{\Xi}_{1:n})
\ne
G(\mathbf z_n,\boldsymbol{\Xi}_{1:n}')\right]>0.
\]
The same criterion holds after a Borel decoder by replacing $G$ with
$\mathcal D\circ G$.  A step with $\sigma_t>0$ has a non-Dirac conditional
next-state law, but this alone does not imply either terminal or decoded
diversity.
\end{lemma}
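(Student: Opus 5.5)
The plan is to prove the two directions of the equivalence directly from independence, and then handle the decoder and the local-versus-terminal remark separately. Write $Y=G(\mathbf z_n,\boldsymbol{\Xi}_{1:n})$ and $Y'=G(\mathbf z_n,\boldsymbol{\Xi}_{1:n}')$. Because $\mathbf z_n$ is fixed and $\boldsymbol{\Xi}_{1:n}'$ is an independent copy, $Y$ and $Y'$ are i.i.d. with common law $\mu=\mathcal L(Y)$ on $\mathbb R^m$. Measurability of $G$ is given by (A3), so $\mu$ is a Borel probability measure.

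\emph{Easy direction.} If $\mu=\delta_{\mathbf y}$ is Dirac, then $Y=Y'=\mathbf y$ almost surely. Hence $\Pr[Y\ne Y']=0$.

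\emph{Converse.} Suppose $\Pr[Y\ne Y']=0$. Since the diagonal $\{(\mathbf a,\mathbf b):\mathbf a=\mathbf b\}$ is closed and hence Borel, Fubini gives
\[
1=\Pr[Y=Y']=\int\mu(\{\mathbf y\})\,\mu(d\mathbf y)=\sum_{\mathbf y}\mu(\{\mathbf y\})^2 .
\]
The integrand vanishes off the at most countable set of atoms, which is why the integral reduces to the sum. Because $\sum_{\mathbf y}\mu(\{\mathbf y\})\le1$, we have $\sum_{\mathbf y}\mu(\{\mathbf y\})^2\le\max_{\mathbf y}\mu(\{\mathbf y\})\le1$. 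Equality throughout forces a single atom of mass one, so $\mu$ is Dirac. Contrapositively, a non-Dirac law gives positive disagreement probability.

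\emph{Decoder and remark.} The decoded case is the same argument applied to the Borel map $\mathcal D\circ G$, which is again a measurable function of $\boldsymbol{\Xi}_{1:n}$ for fixed $\mathbf z_n$. For the local claim, observe that Eq.~\eqref{eq:appendix_affine} with fixed $\mathbf z_t$ equals a constant plus $\sigma_t\boldsymbol{\xi}_t$. With $\sigma_t>0$ this is $\mathcal N(\cdot,\sigma_t^2\mathbf I_d)$, which is non-Dirac.

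\emph{Counterexample for the final sentence.} Local non-Dirac-ness does not propagate to the terminal or decoded law. It suffices to exhibit a collapsing downstream map. Take any Borel $\mathcal D$ that is constant, or any later step in a generalized setting whose output ignores its input. For instance, a decoder $\mathcal D\equiv\mathbf x_0$ makes the decoded law Dirac regardless of $\sigma_t$. Within the affine form alone, collapse can still occur through a non-injective $\boldsymbol{\epsilon}_\theta$, such as a network whose composition with the remaining steps is constant. A simple constant decoder already shows the implication fails, and that is the example I would use.

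\emph{Main obstacle.} The only delicate step is the converse, because the disagreement event must be handled measurably. This is resolved by the Borel diagonal and the atom-counting identity above.
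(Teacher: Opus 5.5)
Your proof is correct and follows the paper's route. You reduce the equivalence to the fact that two i.i.d.\ copies agree almost surely if and only if their common law is Dirac, apply it to $G$ and to $\mathcal D\circ G$, and read the local claim off the affine update; the paper only cites that fact, whereas your identity $\Pr[Y=Y']=\sum_{\mathbf y}\mu(\{\mathbf y\})^2$ actually proves it.

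One small tightening: a constant decoder refutes only \emph{decoded} diversity. For the \emph{terminal} case, use the final step, which is always deterministic ($\sigma_1=0$), with $\boldsymbol{\epsilon}_\theta(\mathbf z,1)=(\mathbf z-\sqrt{\bar\alpha_1}\,\mathbf c)/\sqrt{1-\bar\alpha_1}$. This step sends every $\mathbf z_1$ to $\mathbf c$.
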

\begin{proof}
Two independent draws from a common probability law agree almost surely if
and only if that law is Dirac; applying this fact to the Borel random variable
$G(\mathbf z_n,\boldsymbol{\Xi}_{1:n})$ proves the equivalence.  At a positive
scale, the next state is a fixed conditional mean plus the non-degenerate
Gaussian $\sigma_t\boldsymbol{\xi}_t$, so its conditional law is non-Dirac.
A later measurable map, however, may map all of that variation to one point.
\end{proof}

\section{Analysis of Adaptive Search}
\label{app:adaptive}

\begin{proof}[Proof of Proposition~\ref{prop:minimal}]
Algorithm~\ref{alg:adaptive} scans the fixed ordered ladder from $j=0$ and
stops at its first rejected rung.  By definition this is $j^\star$, so the
scan uses exactly $j^\star+1$ verifier queries.  Every other rejected rung has
index $j\ge j^\star$; the distortion ordering in
Assumption~\ref{asm:monotone} gives
$\phi_j\ge\phi_{j^\star}$.  The comparison is only among candidates in this
same realized ladder.
\end{proof}

\section{Analysis of RL Fidelity Refinement}
\label{app:rl}

Fix the deterministic verifier and the score $S$ from
Eq.~\eqref{eq:fidelity_score} used in the rollout.
Let $v_k^{\mathrm{ref}}\in\{0,1\}$ be its decision at round $k$, where zero
means rejected.  Starting from the rejected $b_0$, update
\[
b_k=
\begin{cases}
\mathbf x_k,
&v_k^{\mathrm{ref}}=0
\ \text{and}\ S(\mathbf x_k)>S(b_{k-1}),\\
b_{k-1},&\text{otherwise}.
\end{cases}
\]

\begin{proof}[Proof of Proposition~\ref{prop:feasible}]
Induct on $k$.  The fixed verifier rejects $b_0$.  If it rejects $b_{k-1}$,
then either the update retains $b_{k-1}$ or replaces it by a candidate that
the same verifier rejects.  Hence every retained best, and thus the output of
every nested prefix of this rollout, remains rejected independently of the
policy.
\end{proof}

\begin{proof}[Proof of Proposition~\ref{prop:monotone_fid}]
At each round, either $b_k=b_{k-1}$ or the update occurs under the strict
condition $S(b_k)>S(b_{k-1})$.  Thus
$S(b_k)\ge S(b_{k-1})$ pathwise for every nested prefix of the fixed rollout.
\end{proof}

\noindent\textbf{On DPPO optimization.}
The diffusion policy's action log-probability is the sum of Gaussian
log-probabilities over the $K_{\mathrm{diff}}$ denoising sub-steps, so the
importance ratio
$r_\theta=\exp(\log\pi_\theta-\log\pi_{\theta_{\mathrm{old}}})$ is well
defined.  DPPO optimizes the PPO clipped surrogate
$\mathbb E[\min(r_\theta\widehat A,
\operatorname{clip}(r_\theta,1{-}\varepsilon_{\mathrm{clip}},
1{+}\varepsilon_{\mathrm{clip}})\widehat A)]$ with GAE advantage estimates
$\widehat A$~\citep{schulman2017ppo,schulman2016gae}.  This objective
regularizes large policy updates but is not a monotone-improvement guarantee
for true return.  Propositions~\ref{prop:feasible}--\ref{prop:monotone_fid}
instead follow from verifier-gated retention and hold independently of DPPO
convergence.

\section{Additional Experimental Results}
\label{app:experiments}

This appendix provides supplementary tables, ablations, and qualitative results that complement the main experiments.

% -------------------------------------------------------
\subsection{Image Quality Across Attack Methods}
\label{app:quality}
Table~\ref{tab:attack_compare} reports CLIP, FID, SSIM, and LPIPS averaged over the nine watermark families for the two baseline attacks and the full Adaptive+RL DRIFT pipeline. DRIFT attains the best score on every metric: the highest CLIP ($32.091$) and lowest FID ($55.41$), the best SSIM ($0.713$, matching the removing attack), and, most tellingly, the lowest LPIPS ($0.154$)---roughly half that of either baseline. The removing attack attains similar SSIM but substantially worse LPIPS and FID, showing that pixel similarity alone does not capture the full fidelity trade-off; DRIFT achieves higher attack success with less measured perceptual distortion.

\begin{table}[t]
\centering
\caption{\textbf{Image quality comparison across attack methods.} CLIP score, FID, SSIM, and LPIPS are averaged over the nine watermark families ($100$ images each). FID is the per-family-mean Fr\'echet distance to the watermarked originals; SSIM and LPIPS are paired against the same originals. Best per column in \textbf{bold}.}
\label{tab:attack_compare}
\setlength{\tabcolsep}{5pt}
\renewcommand{\arraystretch}{1.15}
\small
\begin{tabular}{lcccc}
\toprule
\textbf{Method Group} & \textbf{CLIP $\uparrow$} & \textbf{FID $\downarrow$} & \textbf{SSIM $\uparrow$} & \textbf{LPIPS $\downarrow$} \\
\midrule
Black-box attack & 31.877 & 114.019 & 0.519 & 0.305 \\
Removing attack  & 31.357 & 104.693 & 0.712 & 0.338 \\
DRIFT (Adaptive+RL) & \textbf{32.091} & \textbf{55.414} & \textbf{0.713} & \textbf{0.154} \\
\bottomrule
\end{tabular}
\end{table}

% -------------------------------------------------------
\subsection{Blind Watermark-Family Identification}
\label{app:ident_details}
Table~\ref{tab:ident} gives the per-family top-1 accuracy and low-confidence rate of the key-free classifier over $1{,}350$ held-out images ($150$ per family), and Figure~\ref{fig:confusion_blind} shows the row-normalised confusion matrix. Four families (Tree-Ring, WIND, SFW, ROBIN) are identified perfectly, and most confusions involve the structurally similar GS/GM/SEAL families, with a few isolated errors involving RI and PRC. 

\begin{table}[t]
\centering
\small
\caption{Blind watermark-family identification on held-out images (150 per
family). The key-free classifier predicts the family from the DDIM-inverted
latent $\mathbf{z}_T^{\mathrm{inv}}$. Top-1 is per-family recall; \emph{low-conf.}\ flags
confidence below $\tau_{\mathrm{conf}}{=}0.835$ (5th validation percentile).}
\label{tab:ident}
\begin{tabular}{lcc}
\toprule
Watermark family & Top-1 Acc.\ (\%) & Low-conf.\ (\%) \\
\midrule
Tree-Ring        & 100.0 & 2.7 \\
RingID           & 99.3  & 4.0 \\
PRC              & 99.3  & 2.7 \\
WIND             & 100.0 & 0.0 \\
Gaussian Shading & 97.3  & 12.7 \\
GaussMarker      & 99.3  & 7.3 \\
SFW              & 100.0 & 3.3 \\
SEAL             & 98.0  & 17.3 \\
ROBIN            & 100.0 & 0.0 \\
\midrule
\textbf{Overall (blind)}            & \textbf{99.3} & \textbf{5.6} \\
\bottomrule
\end{tabular}
\end{table}

\begin{figure}[!htbp]
\centering
\includegraphics[width=0.68\linewidth]{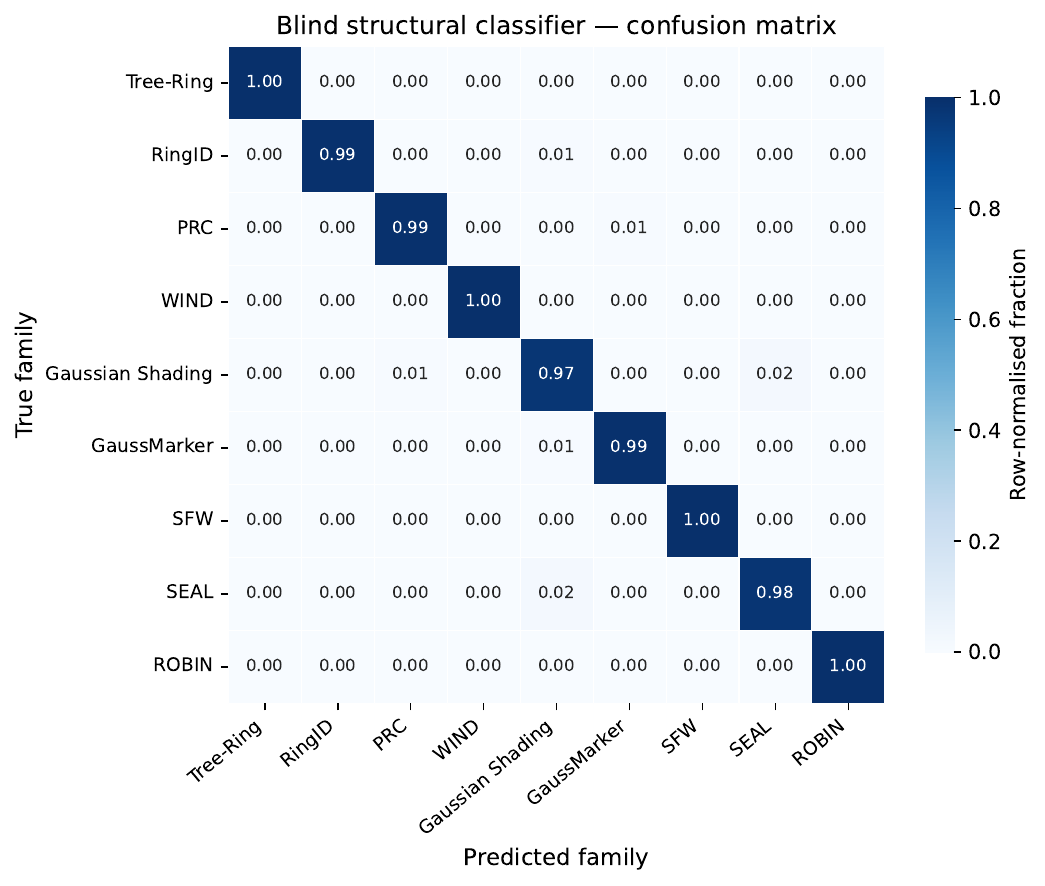}
\caption{Row-normalised confusion matrix of the blind classifier over the nine
families ($\mathbf{z}_T^{\mathrm{inv}}$ features, no keys); off-diagonal mass stays within the
GS/GM/SEAL cluster.}
\label{fig:confusion_blind}
\end{figure}

% -------------------------------------------------------
\subsection{Per-Image Verify-and-Climb Search}
\label{app:adaptive_details}
Table~\ref{tab:adaptive_lambda} compares non-adaptive DRIFT---which fixes a single per-family strength at the $90$th percentile of per-image $\lambda^\star$---against Adaptive DRIFT, which stops at each image's first verifier-rejected rung $\lambda^\star$. The gain concentrates where per-image difficulty is spread (RI, TR, GM, GS) and vanishes where every image needs the same strength (ROBIN).

\begin{table*}[t]
\centering
\small
\setlength{\tabcolsep}{4pt}
\begin{tabular}{l cc cc cc cc cc}
\toprule
& \multicolumn{2}{c}{$\bar{\lambda}\downarrow$} & \multicolumn{2}{c}{ASR (\%)$\uparrow$} & \multicolumn{2}{c}{SSIM$\uparrow$} & \multicolumn{2}{c}{PSNR$\uparrow$} & \multicolumn{2}{c}{LPIPS$\downarrow$} \\
\cmidrule(lr){2-3}\cmidrule(lr){4-5}\cmidrule(lr){6-7}\cmidrule(lr){8-9}\cmidrule(lr){10-11}
Family & Base & Adapt.\ & Base & Adapt.\ & Base & Adapt.\ & Base & Adapt.\ & Base & Adapt.\ \\
\midrule
SEAL  & 0.05 & 0.087 & 98.7  & 100.0 & 0.872 & 0.875 & 29.05 & 28.96 & 0.049 & 0.046 \\
SFW   & 0.10 & 0.074 & 99.7  & 100.0 & 0.736 & 0.787 & 23.76 & 25.74 & 0.099 & 0.075 \\
RI    & 0.20 & 0.161 & 91.0  & 99.7  & 0.658 & 0.755 & 22.45 & 25.32 & 0.179 & 0.112 \\
WIND  & 0.15 & 0.110 & 97.0  & 100.0 & 0.654 & 0.686 & 23.02 & 24.01 & 0.120 & 0.103 \\
ROBIN & 0.15 & 0.150 & 100.0 & 100.0 & 0.723 & 0.719 & 23.98 & 23.85 & 0.124 & 0.123 \\
PRC   & 0.20 & 0.155 & 95.7  & 100.0 & 0.703 & 0.730 & 23.46 & 24.54 & 0.170 & 0.141 \\
TR    & 0.30 & 0.199 & 94.7  & 99.7  & 0.643 & 0.698 & 21.58 & 23.81 & 0.219 & 0.156 \\
GM    & 0.50 & 0.394 & 99.0  & 100.0 & 0.557 & 0.596 & 18.95 & 20.70 & 0.342 & 0.275 \\
GS    & 0.55 & 0.482 & 95.7  & 98.0  & 0.477 & 0.505 & 17.02 & 18.11 & 0.406 & 0.360 \\
\midrule
\textbf{Mean} & 0.244 & \textbf{0.201 } & 96.8 & \textbf{99.7} & 0.669 & \textbf{0.706} & 22.59 & \textbf{23.89} & 0.190 & \textbf{0.155} \\
\bottomrule
\end{tabular}
\caption{\textbf{Non-adaptive vs.\ Adaptive DRIFT} across nine watermark families
($300$ images each; H100). \emph{Base} fixes the $90$th percentile of each
family's per-image $\lambda^\star$; \emph{Adapt.}\
verify-and-climbs to the first verifier-rejected rung using only
black-box accept/reject queries. Best mean per metric in \textbf{bold}.}
\label{tab:adaptive_lambda}
\end{table*}

% -------------------------------------------------------
\subsection{RL Fidelity Refinement: Details and Ablation}
\label{app:rl_details}
Table~\ref{tab:refine} summarises the pre$\to$post fidelity of Stage-III
refinement, Figure~\ref{fig:dppo_curves} shows the DPPO training dynamics, and
Figure~\ref{fig:refine_qual} illustrates fidelity recovery across three
families of decreasing robustness.  The component ablation is retained as
Table~\ref{tab:rl_ablation} in the main paper.

\begin{table}[tb]
\centering
\small
\setlength{\tabcolsep}{5pt}
\renewcommand{\arraystretch}{1.1}
\begin{tabular}{ccc ccc}
\toprule
\multicolumn{3}{c}{Pre-refine (round 0)} & \multicolumn{3}{c}{Post-refine (DPPO)} \\
\cmidrule(lr){1-3}\cmidrule(lr){4-6}
SSIM$\uparrow$ & PSNR$\uparrow$ & LPIPS$\downarrow$
 & SSIM$\uparrow$ & PSNR$\uparrow$ & LPIPS$\downarrow$ \\
\midrule
0.735 & 24.95 & 0.136 & \textbf{0.738} & \textbf{25.19} & \textbf{0.114} \\
\bottomrule
\end{tabular}
\caption{\textbf{Stage-III fidelity refinement} on verifier-rejected attack outputs
($N{=}225$, $25\times9$ families). Empirically, the DPPO controller improves
all three reported mean fidelity metrics over the round-0 attacked image in
$1.16$ rounds on average. Proposition~\ref{prop:feasible} guarantees only that
the retained output remains rejected by the same verifier; the fidelity gains
are empirical.}
\label{tab:refine}
\end{table}

\begin{figure}[!htbp]
  \centering
  \includegraphics[width=0.88\linewidth]{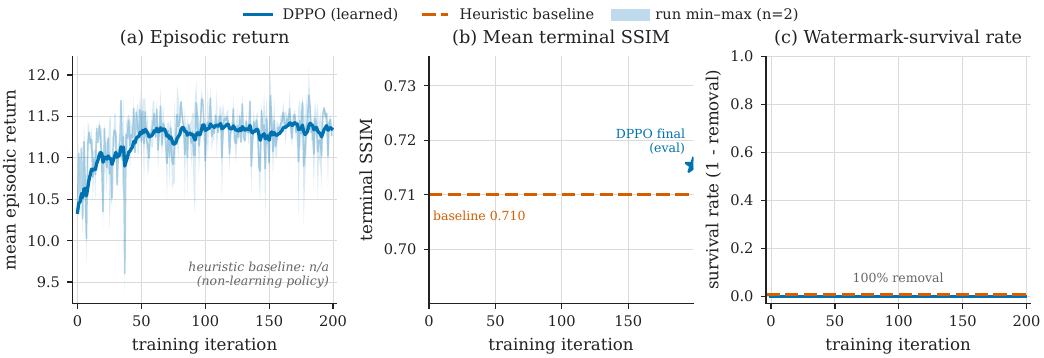}
  \caption{\textbf{DPPO training dynamics.} \textbf{(a)}~Mean episodic return
  rises and plateaus; \textbf{(b)}~mean terminal SSIM improves; \textbf{(c)}~
  returned retained outputs remain verifier-rejected throughout. The first two
  trends are empirical; Proposition~\ref{prop:feasible} explains only the
  same-verifier rejection invariant.}
  \label{fig:dppo_curves}
\end{figure}

\begin{figure}[!htbp]
\centering
\setlength{\tabcolsep}{2pt}
\renewcommand{\arraystretch}{0.6}
\newcommand{\qcell}[1]{\includegraphics[width=0.29\textwidth]{#1}}
\newcommand{\qmet}[3]{\footnotesize PSNR~#1\,/\,SSIM~#2\,/\,LPIPS~#3}
\begin{tabular}{@{}ccc@{}}
\footnotesize\textbf{(a) Watermarked $\mathbf{x}^w$} &
\footnotesize\textbf{(b) Min-$\lambda$ DRIFT} &
\footnotesize\textbf{(c) $+$ DPPO refinement} \\[1pt]
\qcell{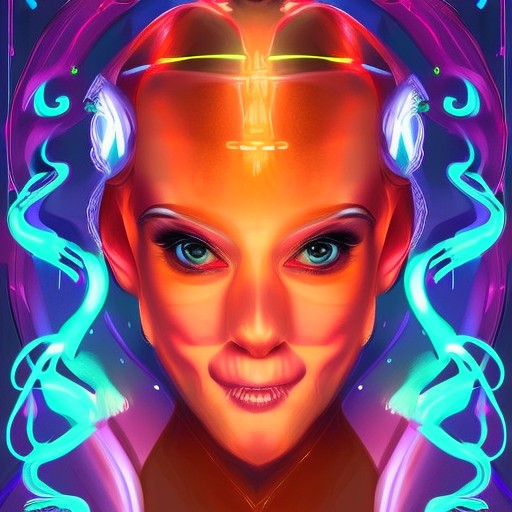} &
\qcell{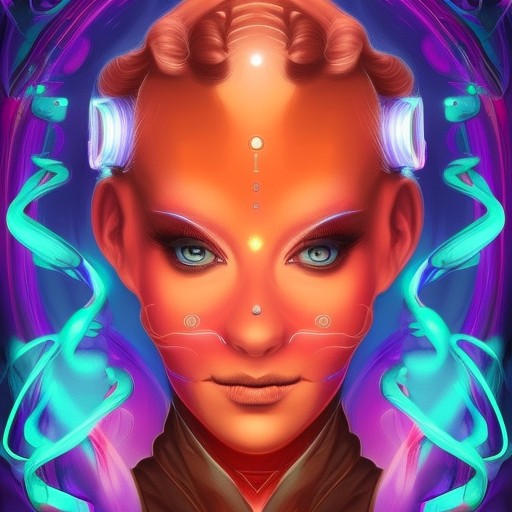} &
\qcell{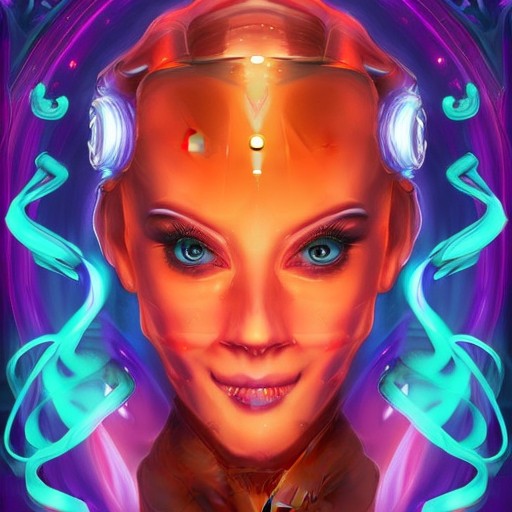} \\[-1pt]
\footnotesize GS ($\lambda\!=\!0.50$), watermark \checkmark\ present &
\qmet{16.3}{.462}{.363}, rejected &
\qmet{\textbf{18.3}}{\textbf{.543}}{\textbf{.194}}, rejected \\[3pt]
\qcell{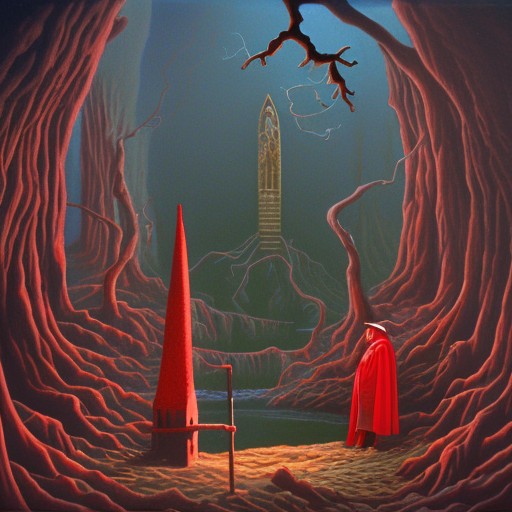} &
\qcell{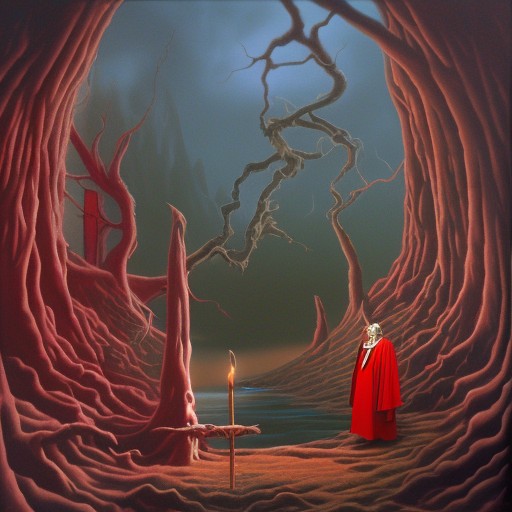} &
\qcell{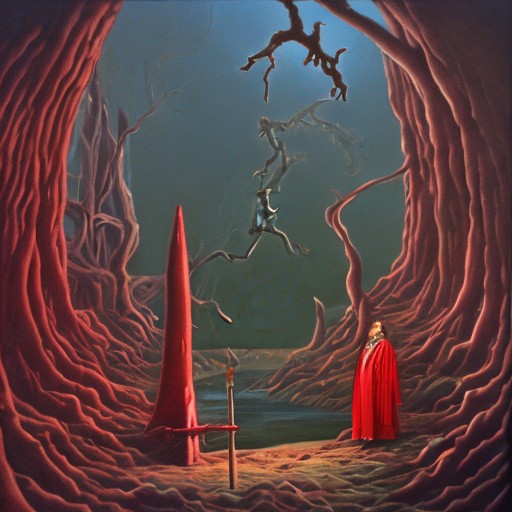} \\[-1pt]
\footnotesize GM ($\lambda\!=\!0.45$), watermark \checkmark\ present &
\qmet{19.5}{.495}{.330}, rejected &
\qmet{\textbf{20.6}}{\textbf{.522}}{\textbf{.179}}, rejected \\[3pt]
\qcell{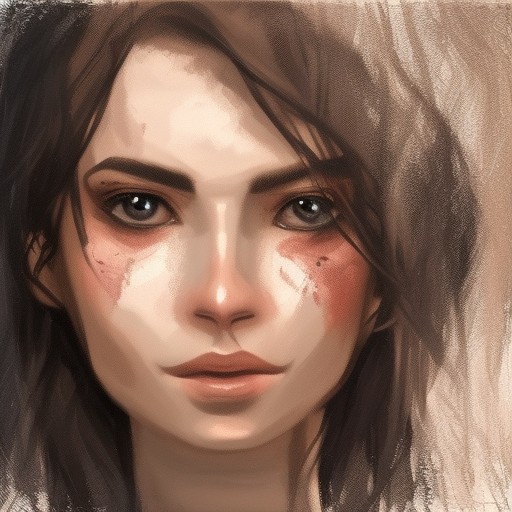} &
\qcell{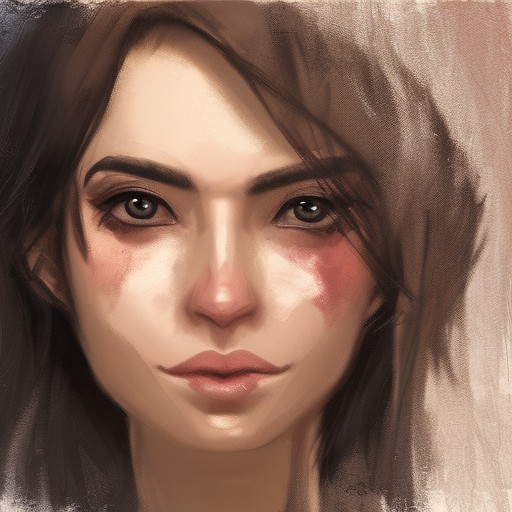} &
\qcell{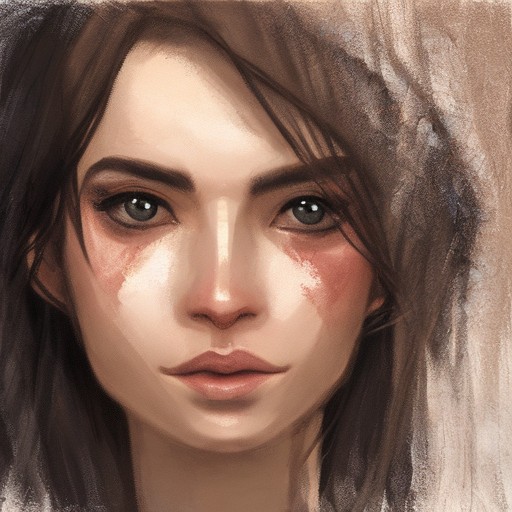} \\[-1pt]
\footnotesize ROBIN ($\lambda\!=\!0.20$), watermark \checkmark\ present &
\qmet{23.3}{.569}{.211}, rejected &
\qmet{\textbf{23.6}}{.558}{\textbf{.142}}, rejected \\
\end{tabular}
\caption{\textbf{RL refinement recovers fidelity at fixed verifier rejection} across three families of decreasing robustness (GS, GM, ROBIN; per-row $\lambda^\star$ in parentheses). \textbf{(a)}~Watermarked image; \textbf{(b)}~Adaptive DRIFT at the first verifier-rejected rung $\lambda^\star$ (the larger strengths needed by GS/GM drift content); \textbf{(c)}~DPPO refinement pulls the result back toward $\mathbf{x}^w$ under a hard verifier-rejection constraint. Per-panel PSNR/SSIM/LPIPS are measured against $\mathbf{x}^w$; better refined value in \textbf{bold}. All panels $512{\times}512$.}
\label{fig:refine_qual}
\end{figure}

% -------------------------------------------------------
\subsection{Qualitative Comparison Across All Watermarking Methods}

\begin{figure*}[t]
\centering
\renewcommand{\arraystretch}{1.0}
% \methodcol / \imgcol are declared by Figure~\ref{fig:qualitative_zoom_compare}
% in the main text; re-declaring them here would abort with
% "Command \methodcol already defined".
\setlength{\methodcol}{0.055\textwidth}
\setlength{\imgcol}{0.18\textwidth}

\begin{minipage}[c]{\methodcol}\centering\end{minipage}%
\begin{minipage}[c]{\imgcol}\centering \scriptsize Original \end{minipage}%
\begin{minipage}[c]{\imgcol}\centering \scriptsize DRIFT \end{minipage}%
\begin{minipage}[c]{\imgcol}\centering \scriptsize Black-Box Attack \end{minipage}%
\begin{minipage}[c]{\imgcol}\centering \scriptsize Removing Attack \end{minipage}

\vspace{0.3mm}
\begin{minipage}[c]{\methodcol}\centering \small \textbf{GS} \end{minipage}%
\begin{minipage}[c]{\imgcol}\centering \zoomfig{img/shift_advantage_selection/GS/GS_075__1_original.png}{0.72}{0.62}{0.63\linewidth}{0.85cm} \end{minipage}%
\begin{minipage}[c]{\imgcol}\centering \zoomfig{img/shift_advantage_selection/GS/GS_075__2_shift.png}{0.72}{0.62}{0.63\linewidth}{0.85cm} \end{minipage}%
\begin{minipage}[c]{\imgcol}\centering \zoomfig{img/shift_advantage_selection/GS/GS_075__3_blackbox.png}{0.72}{0.62}{0.63\linewidth}{0.85cm} \end{minipage}%
\begin{minipage}[c]{\imgcol}\centering \zoomfig{img/shift_advantage_selection/GS/GS_075__4_remove.png}{0.72}{0.62}{0.63\linewidth}{0.85cm} \end{minipage}

\vspace{0.3mm}
\begin{minipage}[c]{\methodcol}\centering \small \textbf{PRC} \end{minipage}%
\begin{minipage}[c]{\imgcol}\centering \zoomfig{img/shift_advantage_selection/PRC/PRC_071__1_original.png}{2.0}{0.77}{0.63\linewidth}{0.85cm} \end{minipage}%
\begin{minipage}[c]{\imgcol}\centering \zoomfig{img/shift_advantage_selection/PRC/PRC_071__2_shift.png}{2.0}{0.77}{0.63\linewidth}{0.85cm} \end{minipage}%
\begin{minipage}[c]{\imgcol}\centering \zoomfig{img/shift_advantage_selection/PRC/PRC_071__3_blackbox.png}{2.0}{0.77}{0.63\linewidth}{0.85cm} \end{minipage}%
\begin{minipage}[c]{\imgcol}\centering \zoomfig{img/shift_advantage_selection/PRC/PRC_071__4_remove.png}{2.0}{0.77}{0.63\linewidth}{0.85cm} \end{minipage}

\caption{\textbf{Qualitative comparison with baseline attacks on GS and PRC}
(Original / DRIFT / Black-Box / Removing), with zoomed insets. Figure~\ref{fig:qualitative_zoom_compare}
of the main text shows the same comparison on GaussMarker.}
\label{fig:qualitative_zoom_appendix}
\end{figure*}

\label{app:qualitative_all}

Figure~\ref{fig:qualitative_all_methods_a} and Figure~\ref{fig:qualitative_all_methods_b} compare the full Adaptive DRIFT pipeline---the first verifier-rejected rung $\lambda^\star$ followed by DPPO fidelity refinement---across all nine evaluated watermarking methods. For each method, the left column shows the original watermarked image and the right column shows the corresponding attacked-and-refined image. Across the three watermark paradigms, the outputs preserve subject identity, color palette, and composition while remaining rejected by the evaluated verifier. These examples illustrate detector evasion with limited visible artifacts; quantitative fidelity is reported separately.

% --- part (a) ---
\begin{figure*}[t]
    \centering

    \methodfig{GM}{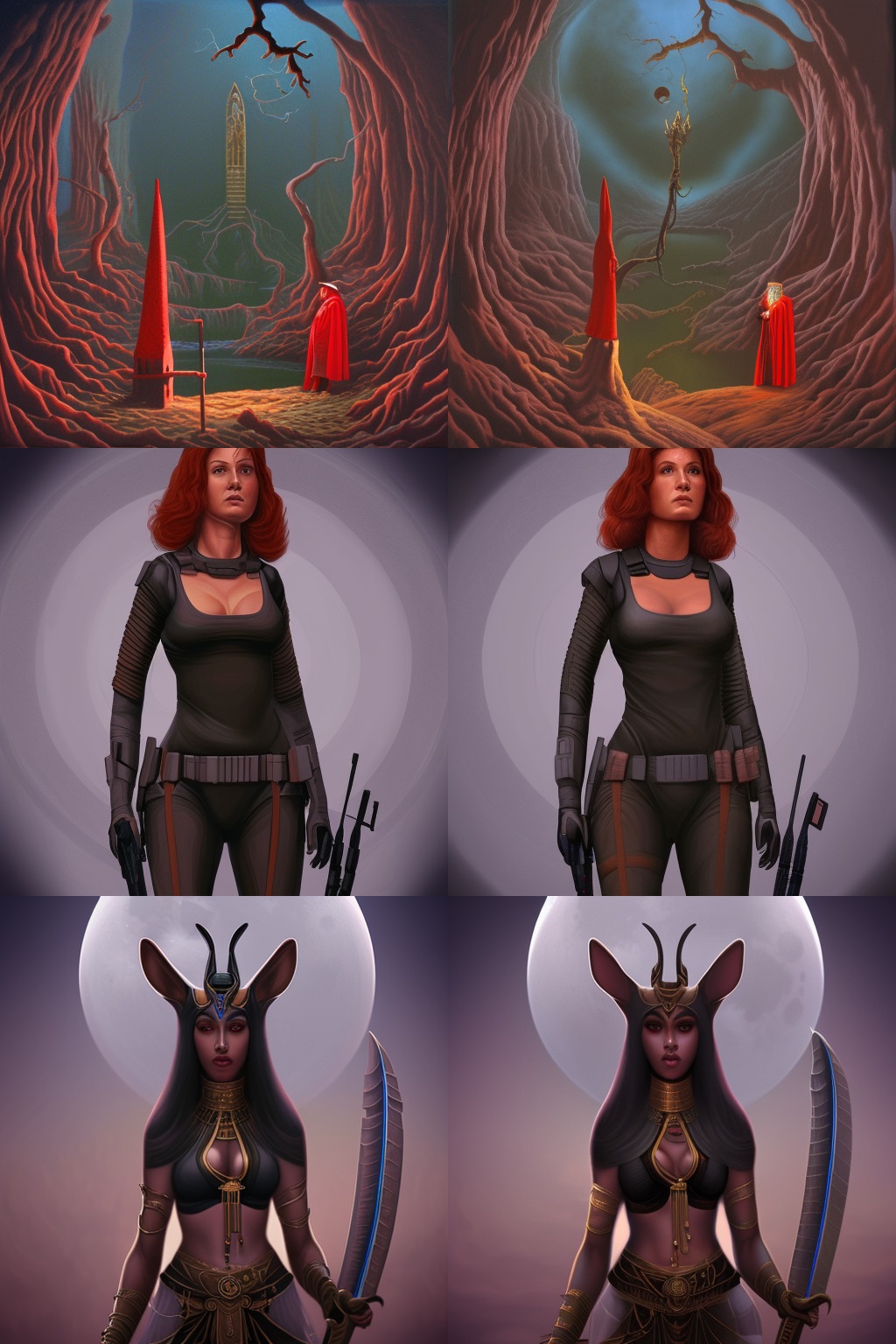}
    \hspace{\colgap}
    \methodfig{SEAL}{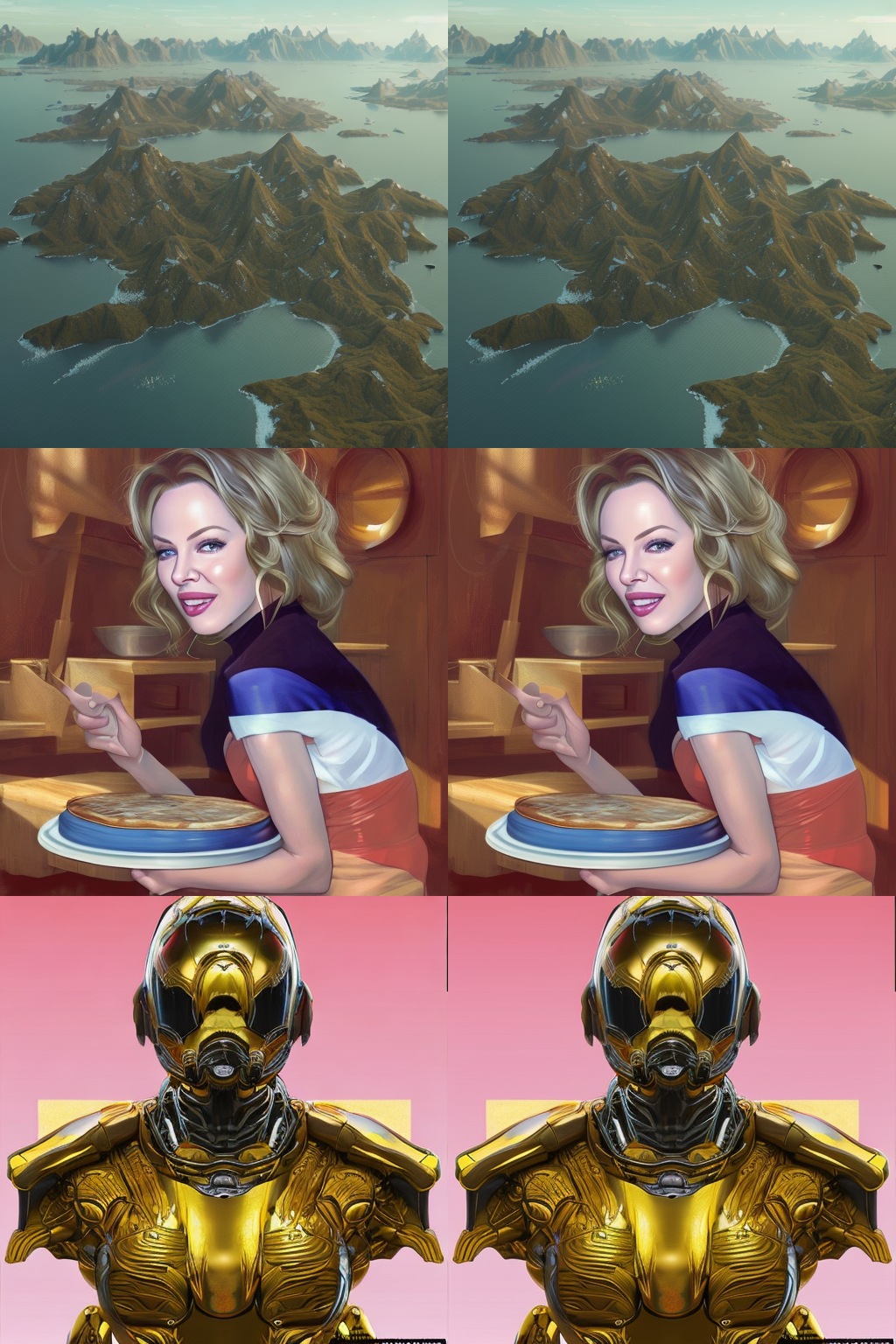}
    \hspace{\colgap}
    \methodfig{PRC}{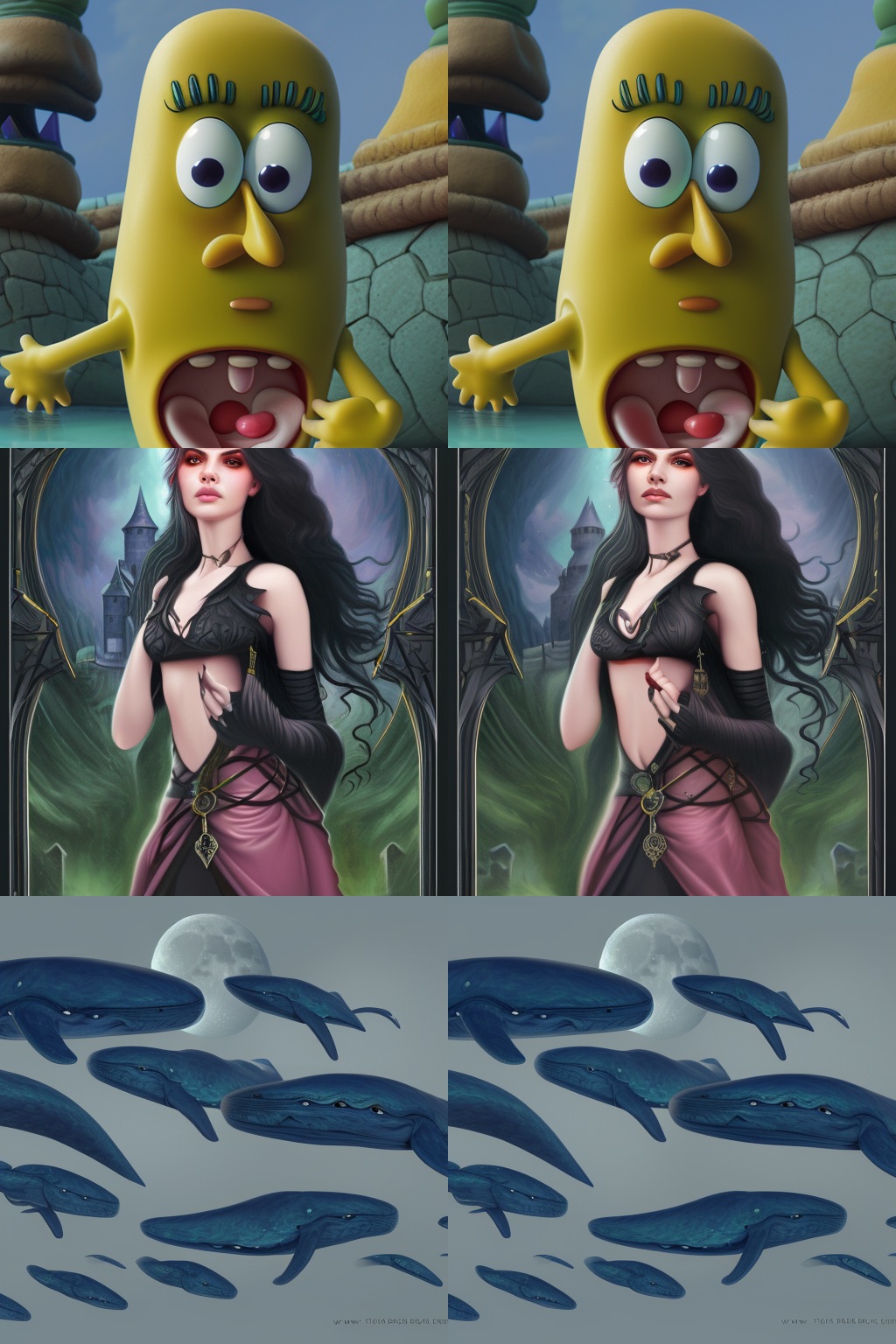}

    \vspace{\rowgap}

    \methodfig{ROBIN}{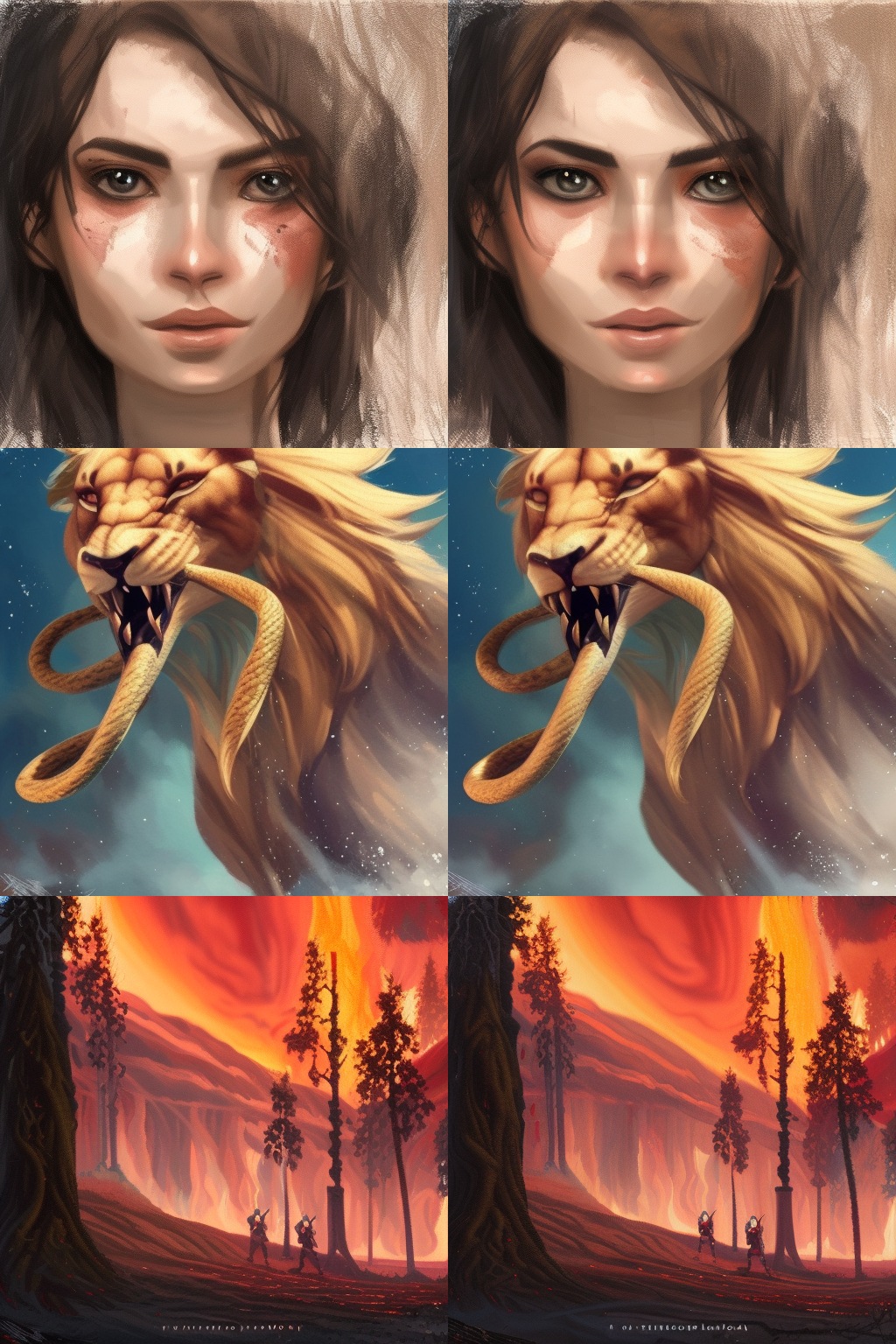}
    \hspace{\colgap}
    \methodfig{RI}{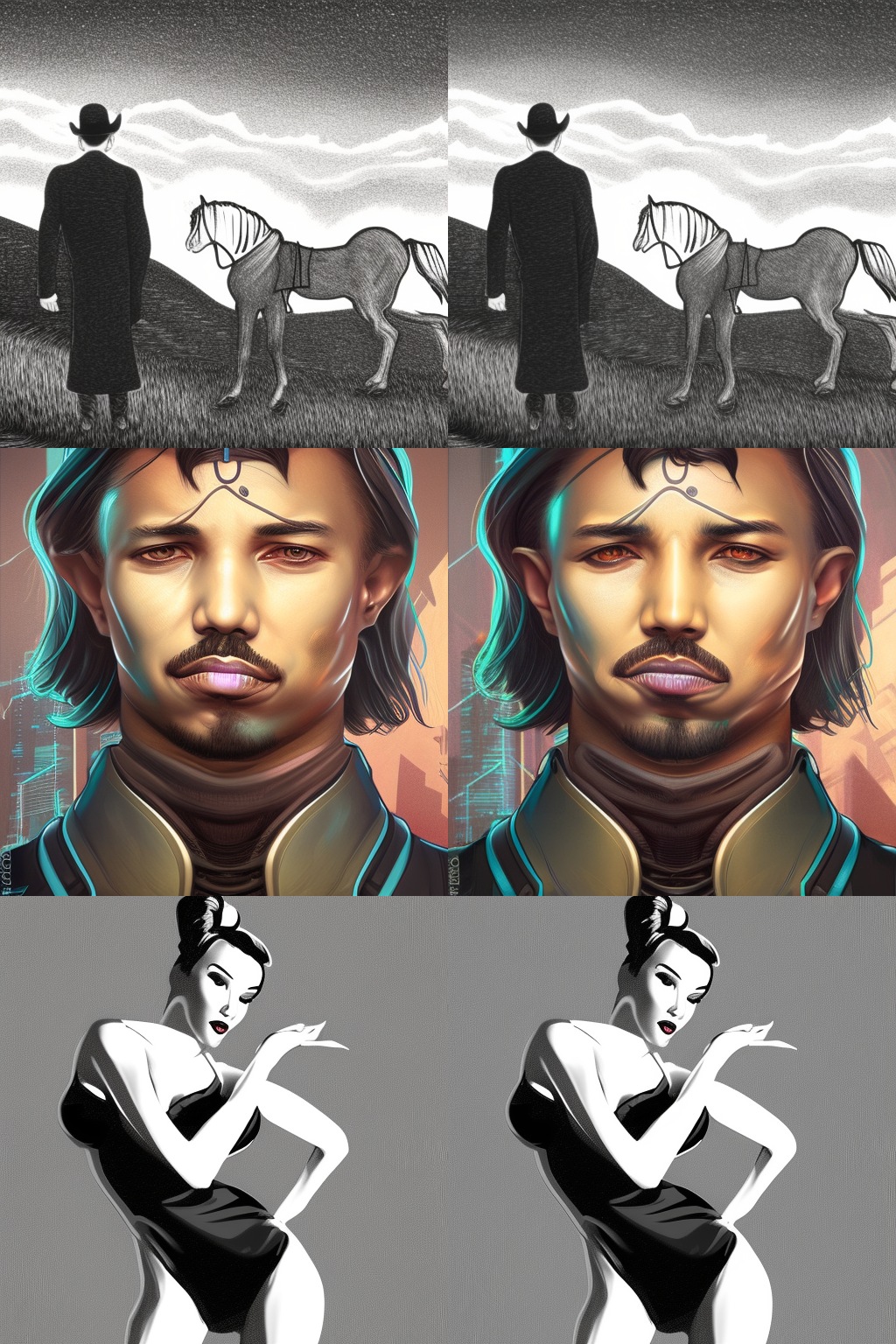}
    \hspace{\colgap}
    \methodfig{TR}{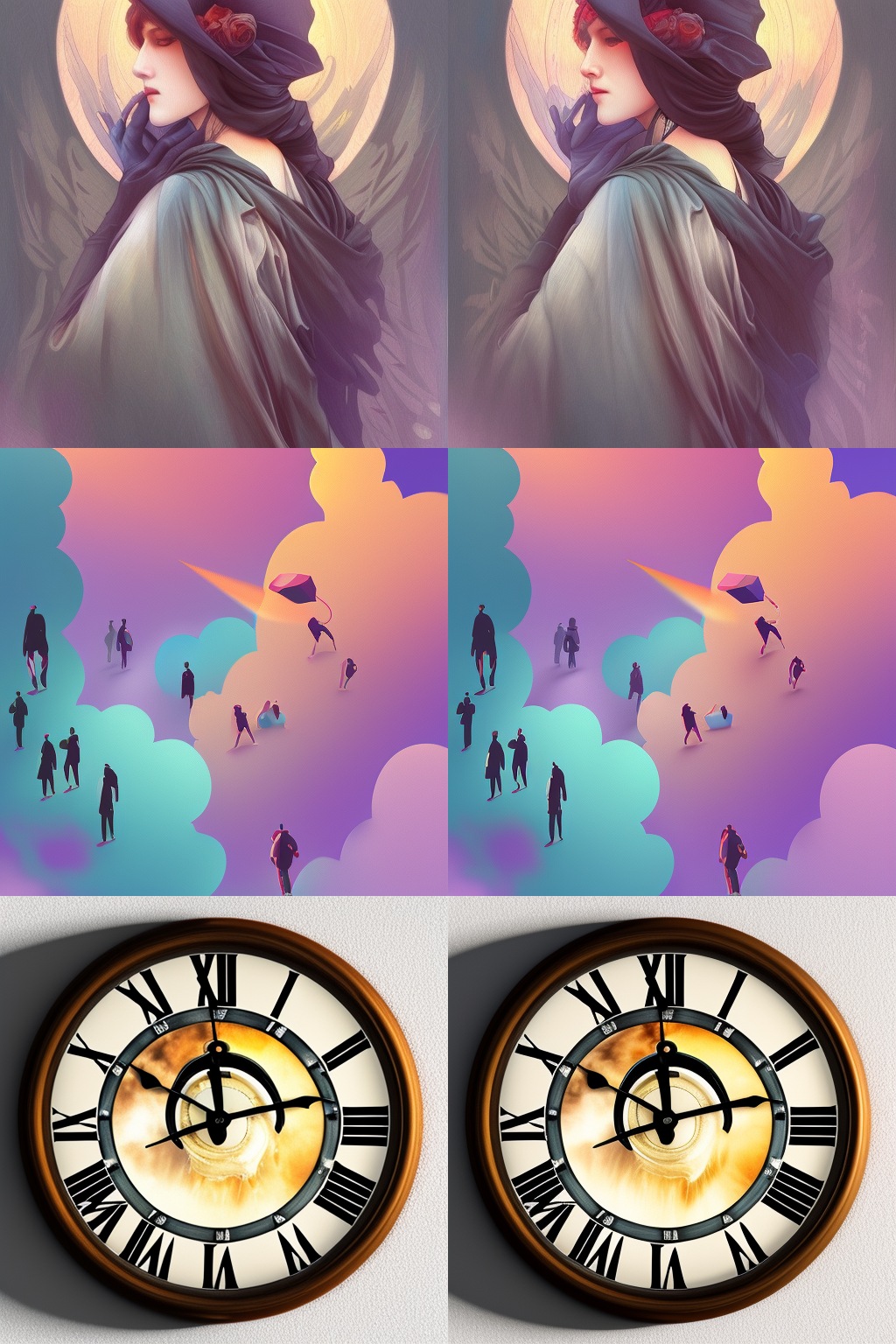}

    \caption{\textbf{(a) Adaptive DRIFT on six watermarking methods.}
    Left: watermarked originals. Right: Adaptive DRIFT outputs (per-image
    first-rejected $\lambda^\star$ + DPPO refinement).}
    \label{fig:qualitative_all_methods_a}
\end{figure*}

% --- part (b) ---
\begin{figure*}[t]
    \centering

    \methodfig{SFW}{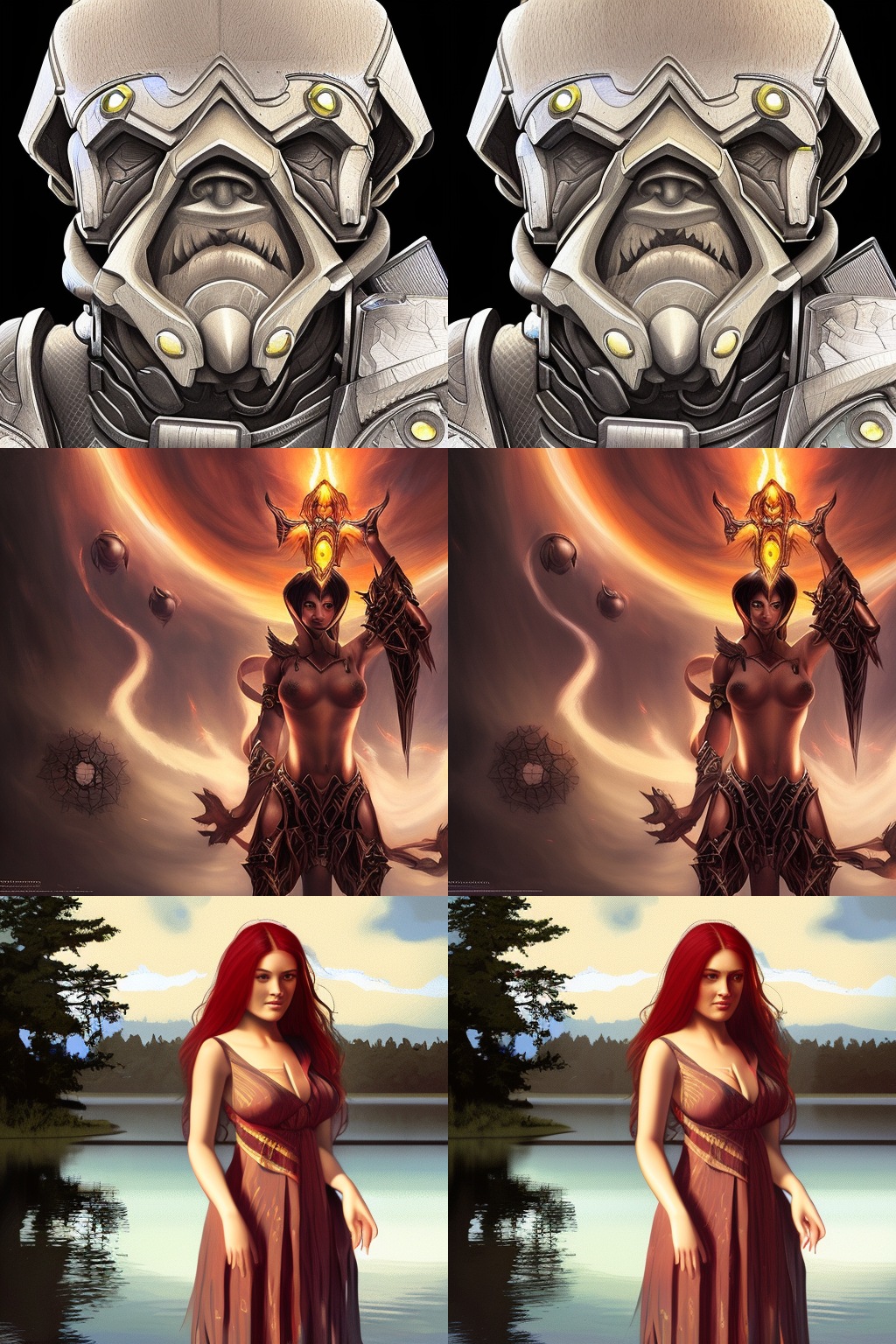}
    \hspace{\colgap}
    \methodfig{GS}{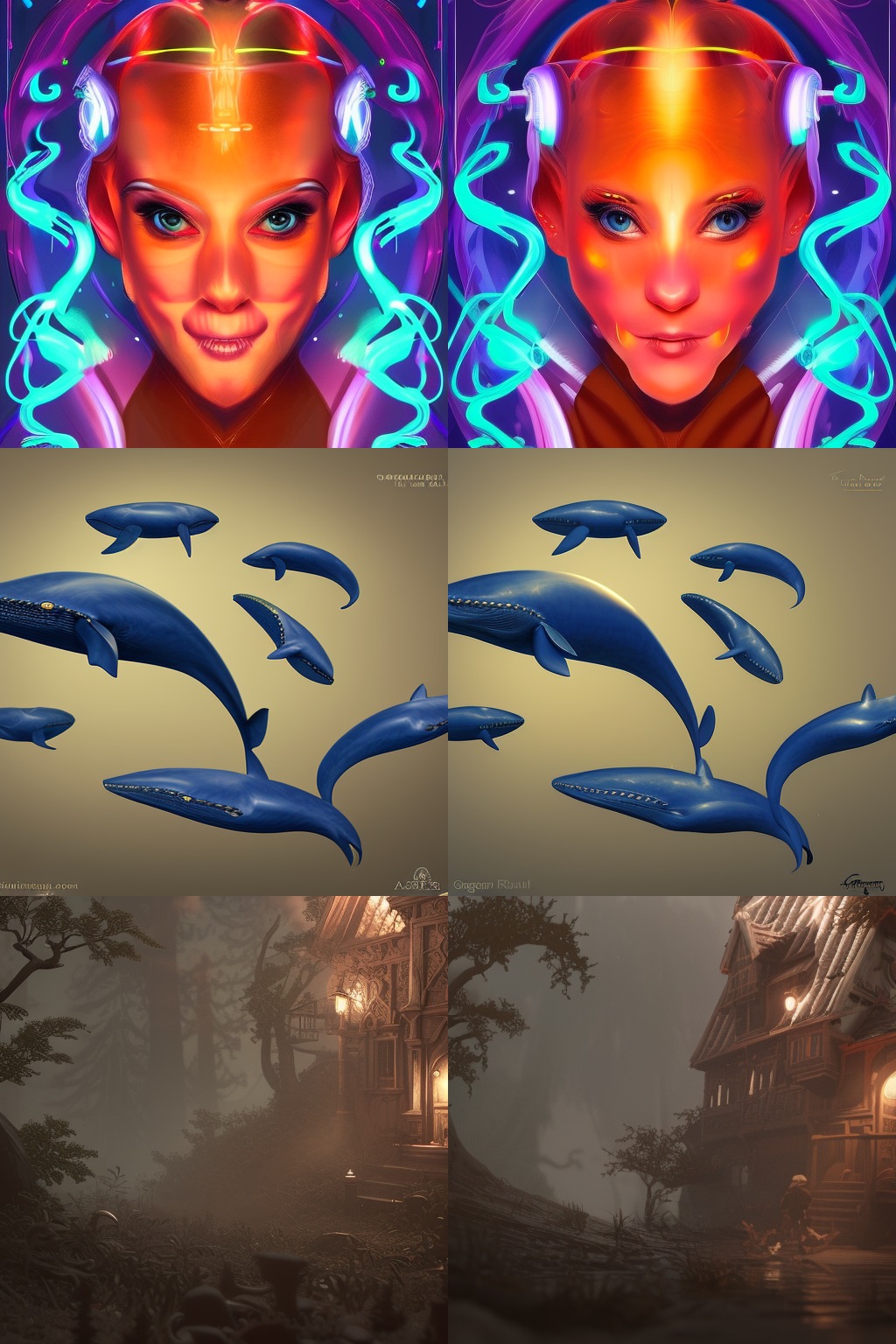}
    \hspace{\colgap}
    \methodfig{WIND}{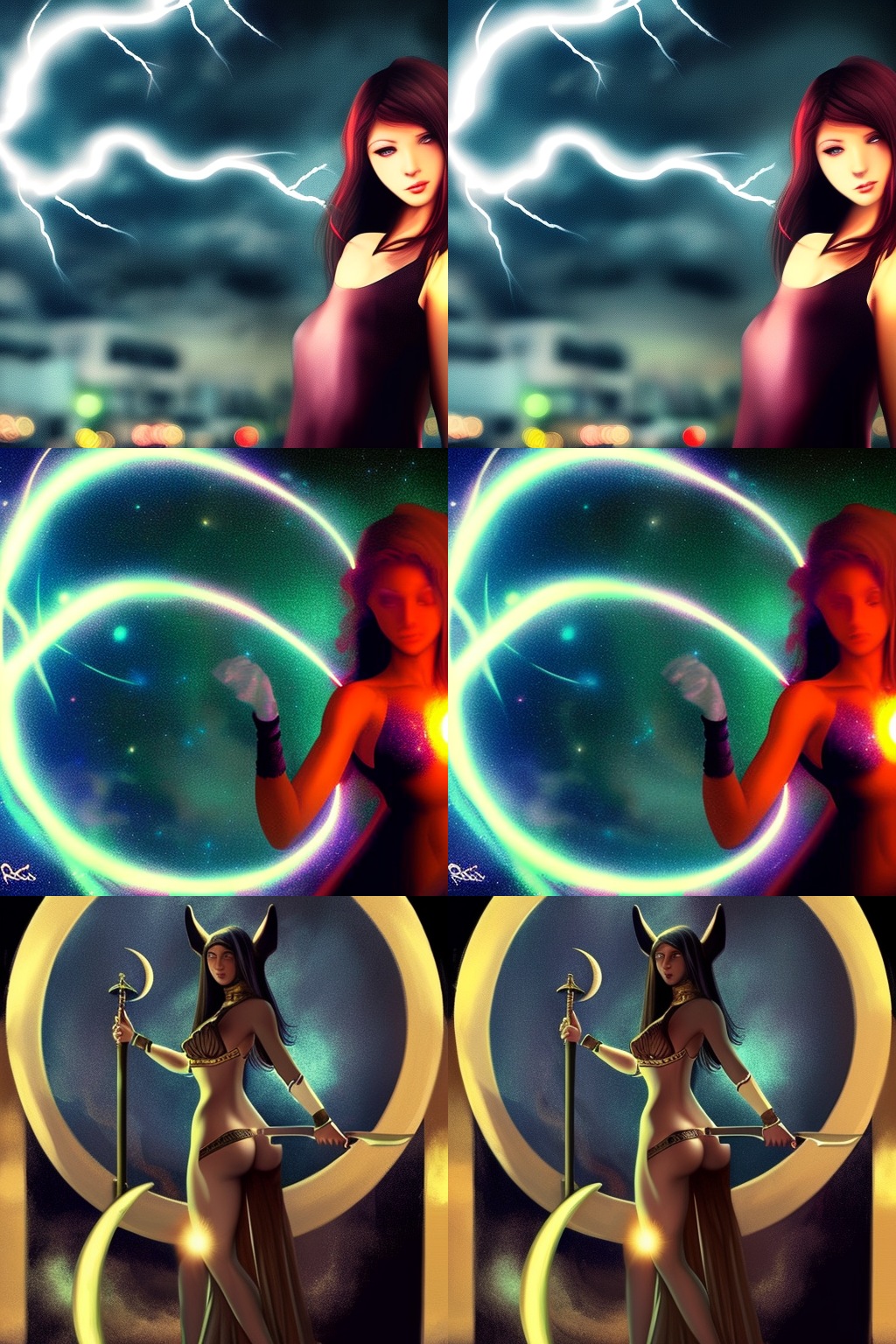}

    \caption{\textbf{(b) Adaptive DRIFT on the remaining three methods.}
    Left: watermarked originals. Right: Adaptive DRIFT outputs (per-image
    first-rejected $\lambda^\star$ + DPPO refinement).}
    \label{fig:qualitative_all_methods_b}
\end{figure*}

% -------------------------------------------------------
\subsection{Effect of Attack Strength on Image Fidelity}
\label{app:robustness_levels}

Figure~\ref{fig:robustness_compare} illustrates the fidelity--evasion trade-off of the \emph{base} (fixed-$\lambda$) DRIFT attack across three representative watermarking methods of increasing robustness as $\lambda$ grows from $0.30$ to $0.60$. For weakly robust methods such as PRC, successful evasion is achieved at $\lambda=0.30$ with virtually no perceptual change relative to the original. For moderately robust methods such as ROBIN, minor variations in fine-grained texture appear at $\lambda=0.45$ but remain within an acceptable perceptual range. For the most robust method Tree-Ring, stronger perturbation at $\lambda=0.60$ is required, introducing slight deviation in high-frequency details while the overall scene structure and semantic content are well preserved. The observed first successful rung varies with the scheme, image, verifier, sampler, search grid, and stochastic realization, motivating Adaptive DRIFT's per-image minimal-strength search.

\begin{figure*}[t]
    \centering
    \setlength{\tabcolsep}{2pt}
    \renewcommand{\arraystretch}{1.0}

    \begin{tabular}{c c c c c}
        \toprule
        \textbf{Method} & \textbf{Original} &
        \textbf{$\lambda=0.30$} &
        \textbf{$\lambda=0.45$} &
        \textbf{$\lambda=0.60$} \\
        \midrule

        \textbf{PRC (Weak)} &
        \includegraphics[width=0.18\textwidth]
            {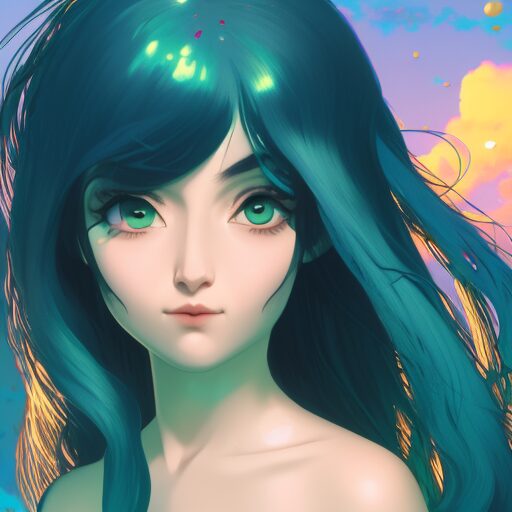} &
        \includegraphics[width=0.18\textwidth]
            {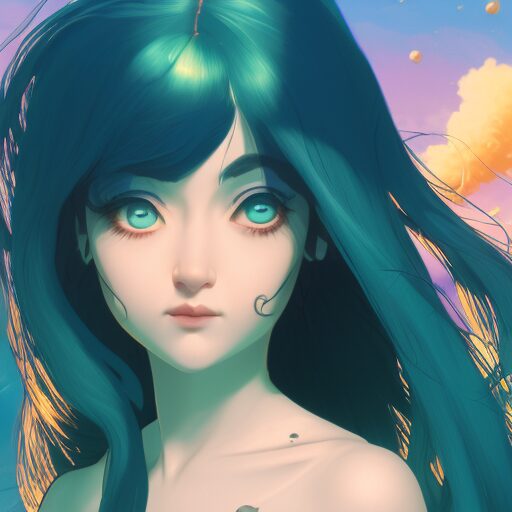} &
        \includegraphics[width=0.18\textwidth]
            {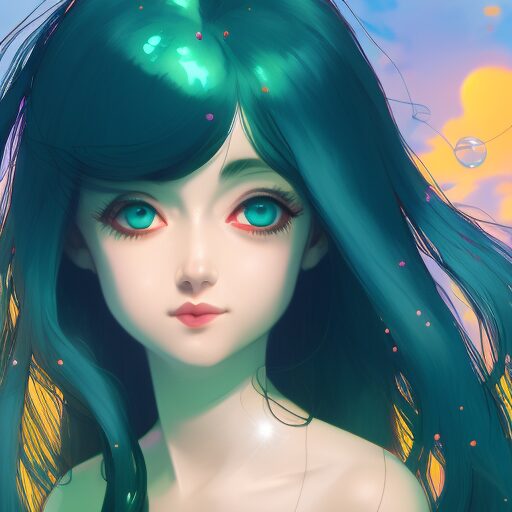} &
        \includegraphics[width=0.18\textwidth]
            {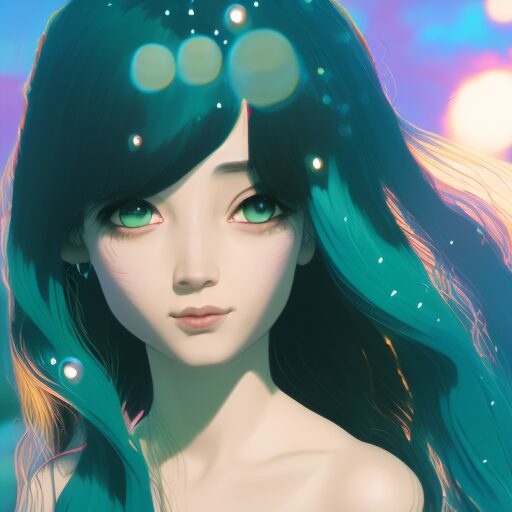} \\[4pt]

        \textbf{ROBIN (Moderate)} &
        \includegraphics[width=0.18\textwidth]
            {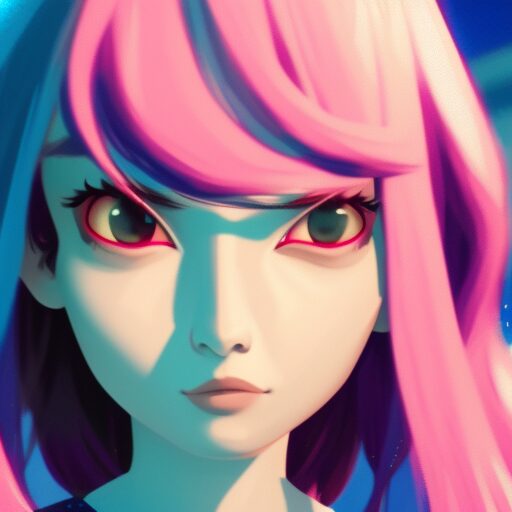} &
        \includegraphics[width=0.18\textwidth]
            {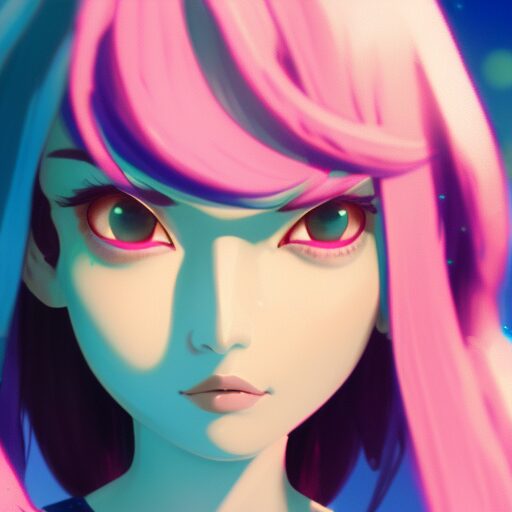} &
        \includegraphics[width=0.18\textwidth]
            {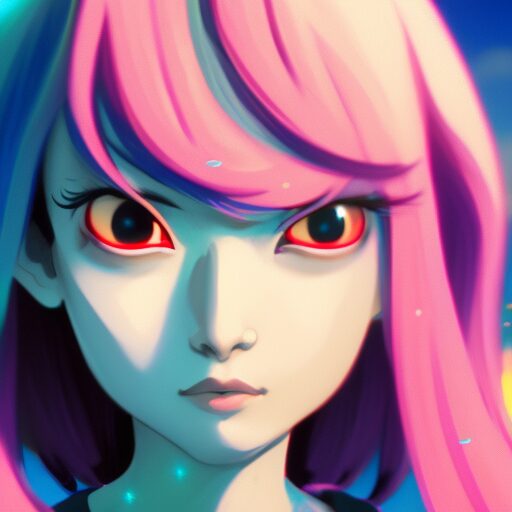} &
        \includegraphics[width=0.18\textwidth]
            {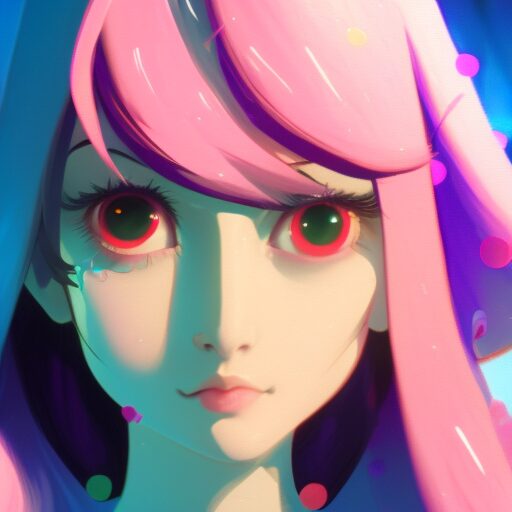} \\[4pt]

        \textbf{Tree-Ring (Strong)} &
        \includegraphics[width=0.18\textwidth]
            {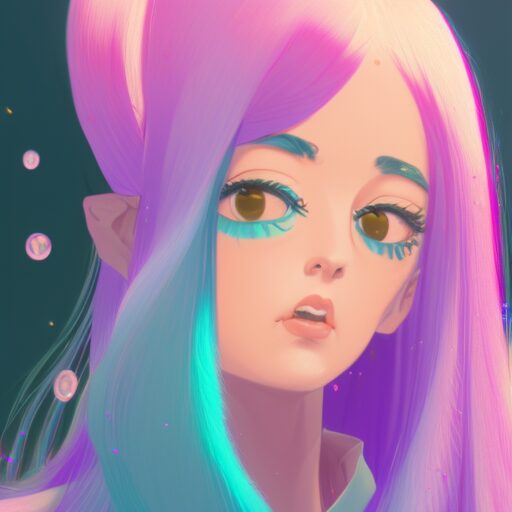} &
        \includegraphics[width=0.18\textwidth]
            {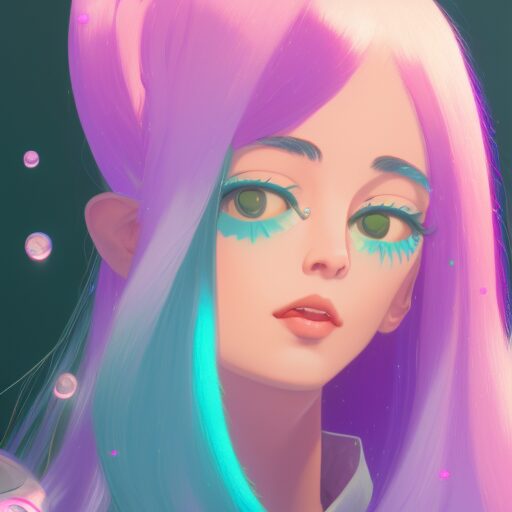} &
        \includegraphics[width=0.18\textwidth]
            {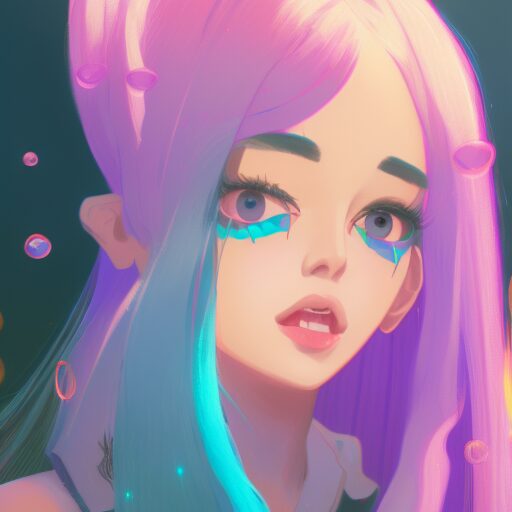} &
        \includegraphics[width=0.18\textwidth]
            {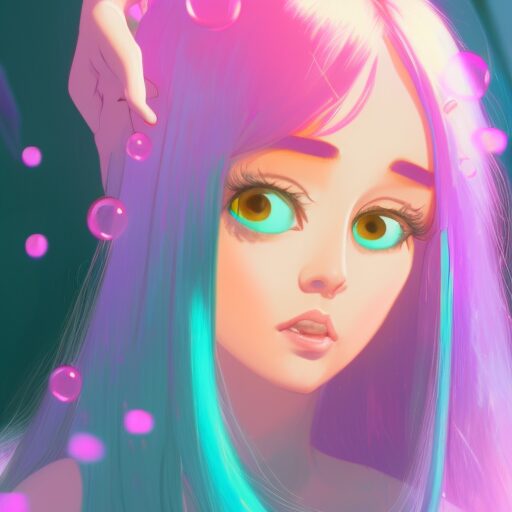} \\

        \bottomrule
    \end{tabular}

    \caption{\textbf{Fidelity--evasion trade-off of base DRIFT under
    different attack strengths.} PRC, ROBIN, and Tree-Ring represent weakly,
    moderately, and highly robust methods. As $\lambda$ increases, the attacked
    outputs reveal clear differences in the trade-off across robustness levels,
    motivating the per-image minimal-strength search of Adaptive DRIFT.}
    \label{fig:robustness_compare}
\end{figure*}

% -------------------------------------------------------
\subsection{Trajectory Decoupling: Noise Distance vs.\ Attack Strength}
\label{app:noise_distance}

Figure~\ref{fig:noise_distance} plots the mean $L_1$ and $L_2$ distances between the DDIM-inverted noise of the attacked image and the reference noise $\boldsymbol{\epsilon}^w$ as a function of $\lambda$. Both metrics increase monotonically across the measured grid, providing empirical evidence that stronger attacks progressively decouple the recovered latent from its original trajectory. At $\lambda=0.70$, the curves lie in the narrow bands $L_1\approx1.05$--$1.07$ and $L_2\approx1.31$--$1.34$. These values approach the random-Gaussian diagnostics $L_1=2/\sqrt{\pi}$ and per-coordinate RMS $L_2=\sqrt{2}$. Corollary~\ref{cor:shift_2d}, under Assumption~\ref{asm:shift_prior}, formalizes only the corresponding squared-$L_2$ reference $\mathbb{E}[\|\cdot\|_2^2]=2d$ (up to its stated perturbation bound); the $L_1$ value additionally requires independent Gaussian coordinates. For non-noise-space schemes, both values are used only as diagnostic comparisons.

\begin{figure*}[t]
    \centering
    \includegraphics[width=\textwidth]{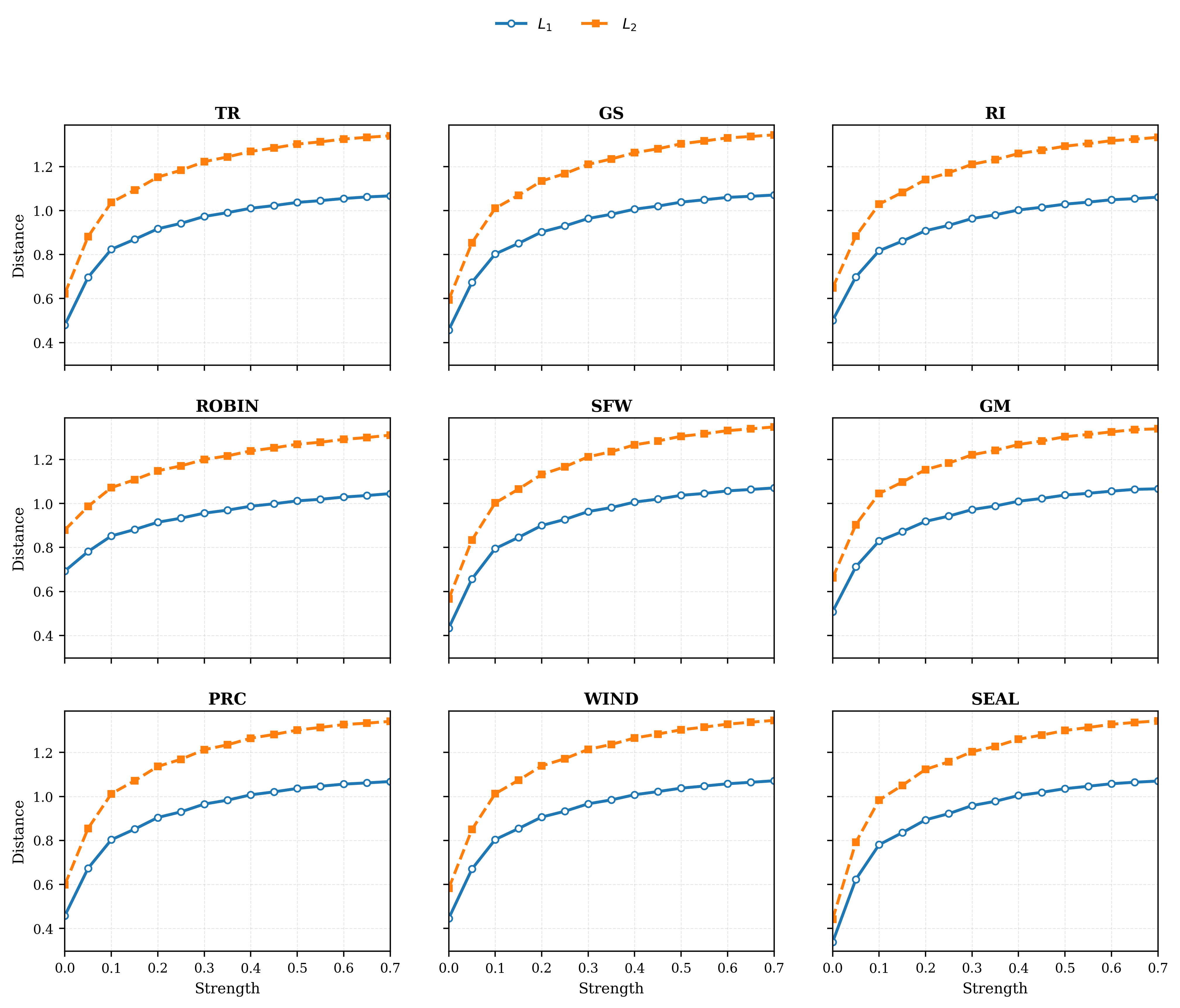}
    \caption{\textbf{Mean $L_1$ and $L_2$ noise distances vs.\ attack strength
    $\lambda$ across nine watermarking methods.} Across the measured grid, both
    metrics increase and approach the random-Gaussian diagnostic values. This
    trend is empirical: Theorem~\ref{prop:noise_distance_formal} bounds
    fixed-depth source dependence and does not predict monotonicity of these
    distances. Corollary~\ref{cor:shift_2d} conditionally formalizes only the
    squared-$L_2$ $2d$ reference; the $L_1$ value additionally assumes
    independent Gaussian coordinates.}
    \label{fig:noise_distance}
\end{figure*}

\subsection{Strength-Sensitivity Result}
\label{app:lambda_sensitivity}

\begin{figure*}[t]
\centering
\includegraphics[width=0.82\textwidth]{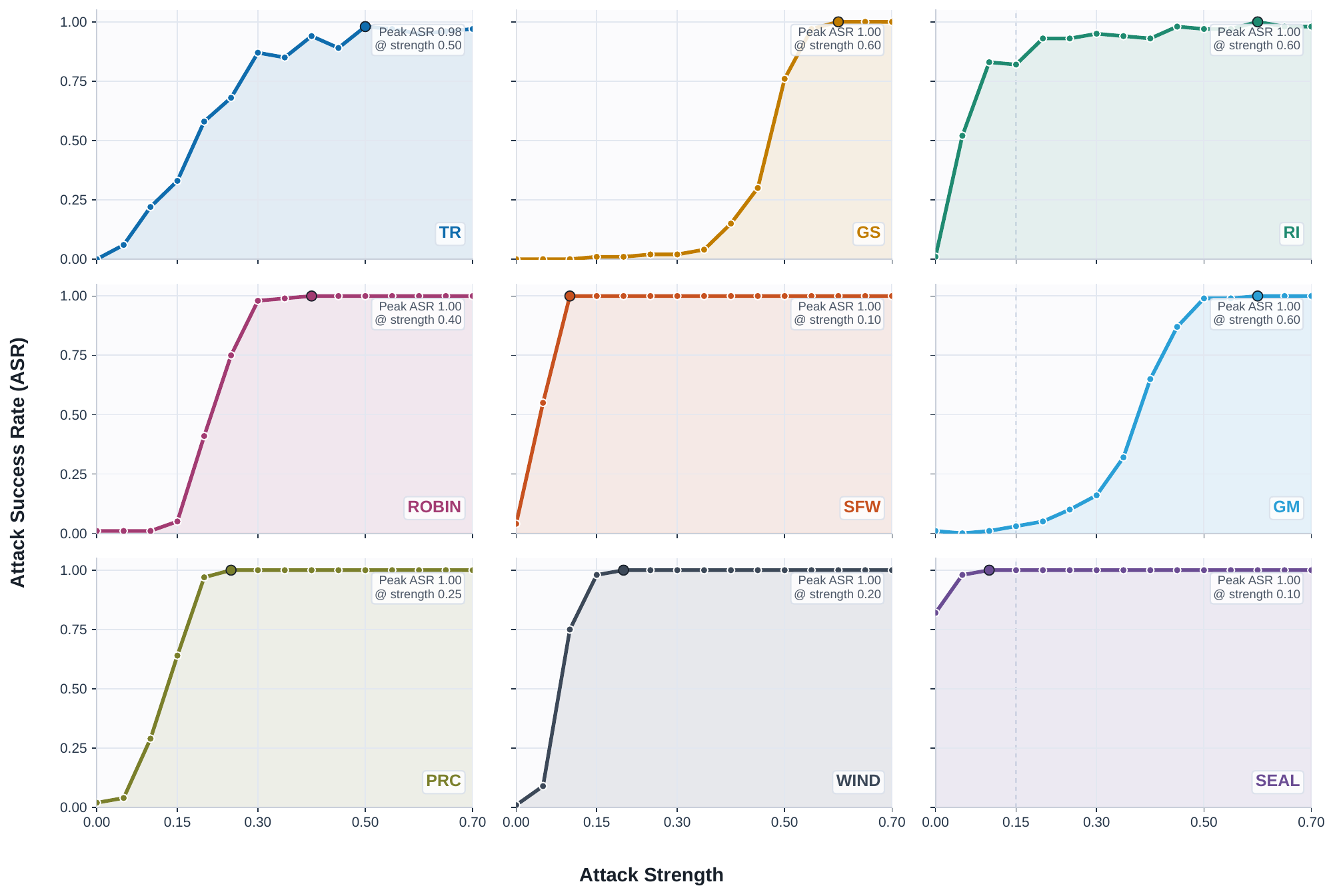}
\caption{\textbf{Sensitivity to the re-noising strength $\lambda$.} ASR as a
function of $\lambda$ for all nine watermark families. The widely separated
saturation thresholds motivate selecting the smallest successful strength for
each image rather than applying one global value.}
\label{fig:lambda_sensitivity}
\end{figure*}

Figure~\ref{fig:lambda_sensitivity} exposes why a single global strength is inefficient. SFW  and SEAL saturate at $\lambda=0.10$ , WIND and PRC at $0.20$--$0.25$, and ROBIN at $0.40$, whereas Tree-Ring, Gaussian Shading, RingID, and GaussMarker require $0.50$--$0.60$. Nevertheless, the peak ASR remains $98$--$100\%$ across all nine families. The dominant variation is therefore the minimum effective strength, precisely the quantity targeted by verify-and-climb.